\documentclass[11pt]{article}
\usepackage{amssymb, amsthm}
\usepackage{algorithm}
\usepackage{algpseudocode}
\usepackage{amsmath, amssymb, amsthm}
\usepackage{float,multirow,subcaption, setspace}
\usepackage{graphicx}
\usepackage{comment}
\usepackage{enumitem}
\usepackage{tikz}
\usepackage{bm, bbm}
\usetikzlibrary{shapes.geometric, positioning}
\usetikzlibrary{quotes, angles}
\usepackage{rotating}
\usepackage{hyperref}
\usepackage{natbib}
\usepackage{multicol}
\usepackage{mathtools}
\usepackage{nicefrac}
\usepackage{mathrsfs}

% Define some colors
\definecolor{SkyBlue}{RGB}{14, 118, 188}
\definecolor{BrightRed}{RGB}{223, 82, 78}
\definecolor{Green638}{RGB}{165,255,118} % from colours.cafe on instagram; pallete638
\definecolor{BurntOrange}{HTML}{BF5900}

% Set up colorful hyperlinks without any silly green boxes
%\hypersetup{pdfborder = {0 0 0.5 [3 3]}, colorlinks = true, linkcolor = BrightRed, citecolor = SkyBlue}

%\bibliographystyle{apalike}

% comment

% Math macros

 % useful if we want to number one equation inside an align*

\newcommand{\R}{\mathbb{R}} % boldfaced R for the reals
\newcommand{\E}{\mathbb{E}} % boldfaced E for expectations
 % boldfaced P for probability. overriding \P for paragraph symbol

 % caligraphic P for a generic distribution
 % caligraphic Q for another generic distribution
 % caligraphic F, typically for sigma-algebras

\newcommand{\calE}{\mathcal{E}}
\newcommand{\calL}{\mathcal{L}}
\newcommand{\calM}{\mathcal{M}}
\newcommand{\calT}{\mathcal{T}}

\newcommand{\calI}{\mathcal{I}}

\newcommand{\bcalE}{\boldsymbol{\mathcal{E}}}

\newcommand{\ind}[1]{\mathbbm{1}\left( #1 \right)} % indicator function, with an argument
 % variance
 % covariance
 % sign
 % for double bar to behave like a binary operation
 % KL divergence with two arguments

 % harmonic mean

% distributions
\newcommand{\normaldist}[2]{\mathcal{N}\left(#1,#2\right)} % normal distribution
\newcommand{\mvnormaldist}[3]{\mathcal{N}_{#1}\left(#2,#3\right)} % multivariate normal distribution
 % gamma distribution
 % inverse gamma
 % Binomial
 % Bernoulli
 % Poisson
 %half-t
 % uniform
\newcommand{\betadist}[2]{\textrm{Beta}\left(#1,#2\right)} % Beta distribution

% bolded alphabet time
\newcommand{\by}{\bm{y}}
\newcommand{\bx}{\bm{x}}
\newcommand{\bz}{\bm{z}}

\newcommand{\br}{\bm{r}}

% specific for this work on trees

% bolded capitalized alphabet

\newcommand{\bX}{\bm{X}}
\newcommand{\bZ}{\bm{Z}}

% bolded greek alphabet time!
\newcommand{\btheta}{\boldsymbol{\theta}}
\newcommand{\bbeta}{\boldsymbol{\beta}}

\newcommand{\blambda}{\boldsymbol{\lambda}}
\newcommand{\bTheta}{\boldsymbol{\Theta}}
\newcommand{\bEta}{\boldsymbol{\eta}}
%overline time

% Theorem-like declarations
\theoremstyle{plain}
\newtheorem{theorem}{Theorem}
\newtheorem{corollary}{Corollary}
\newtheorem{lemma}{Lemma}

\theoremstyle{definition}
\newtheorem{definition}{Definition}

\newtheorem{ex}{Example}

\theoremstyle{plain}
\newtheorem{remark}[theorem]{Remark}

\usepackage{fullpage, parskip}
\usepackage{cleveref}
\bibliographystyle{apalike}
\usepackage{booktabs}
\usepackage{siunitx}

\creflabelformat{equation}{#2\textup{#1}#3}

\hypersetup{pdfborder = {0 0 0.5 [3 3]}, colorlinks = true, linkcolor = BrightRed, citecolor = SkyBlue}

\newcommand{\includesupp}{1}

\newcommand{\switchref}[2]{%
  \if\includesupp1%
    #1%
  \else%
    #2%
  \fi%
}

\title{Fitting sparse high-dimensional varying-coefficient models with Bayesian regression tree ensembles}
\author{Soham Ghosh\thanks{Dept.\ of Statistics, University of Wisconsin--Madison. \texttt{sghosh39@wisc.edu}} \and Saloni Bhogale\thanks{Dept. of Political Science, University of Wisconsin--Madison. \texttt{bhogale@wisc.edu}} \and Sameer K. Deshpande\thanks{Dept. of Statistics, University of Wisconsin--Madison. \texttt{sameer.deshpande@wisc.edu}}}
\begin{document}
\maketitle

\begin{abstract}
    By allowing the effects of $p$ covariates in a linear regression model to vary as functions of $R$ additional \emph{effect modifiers}, varying-coefficient models (VCMs) strike a compelling balance between interpretable-but-rigid parametric models popular in classical statistics and flexible-but-opaque methods popular in machine learning.
But in high-dimensional settings where $p$ and/or $R$ exceed the number of observations, existing approaches to fitting VCMs fail to identify which covariates have a non-zero effect and which effect modifiers drive these effects.
We propose \texttt{sparseVCBART}, a fully Bayesian model that approximates each coefficient function in a VCM with a regression tree ensemble and encourages sparsity with a global--local shrinkage prior on the regression tree leaf outputs and a hierarchical prior on the splitting probabilities of each tree.
We show that the \texttt{sparseVCBART} posterior contracts at a near-minimax optimal rate, automatically adapting to the unknown sparsity structure and smoothness of the true coefficient functions.
Compared to existing state-of-the-art methods, \texttt{sparseVCBART} achieved competitive predictive accuracy and substantially narrower and better-calibrated uncertainty intervals, especially for null covariate effects.
We use \texttt{sparseVCBART} to investigate how the effects of interpersonal conversations on prejudice could vary according to the political and demographic characteristics of the respondents.

\end{abstract}

\section{Introduction}
\label{sec:intro}
Varying coefficient models \citep[VCMs;][]{Hastie1993} assert a linear relationship between an outcome $Y$ and $p$ covariates $X_{1}, \ldots, X_{p}$ but allow the relationship to change with respect to $R$ additional variables (known as \emph{effect modifiers}) $Z_{1}, \ldots, Z_{R}$: $\E[Y \vert \bX, \bZ] = \beta_{0}(\bZ) + \sum_{j = 1}^{p}{\beta_{j}(\bZ)X_{j}}.$
VCMs occupy a middle ground between inflexible-but-interpretable parametric models and flexible-but-opaque nonparametric models \citep{ZhouHooker2022,Lengerich2023}.

Since their introduction, VCMs have been used extensively in statistics and econometrics.
Important examples include \citet{Hoover1998}, which estimated how the effects of maternal health on infant weight varied over time, \citet{Li2011}, which estimated how returns to education vary with respect to several demographic factors, and \citet{Mu2018}, which estimated spatial variation in the effects of wind speed and air temperature on air pollution concentration; see \citet{FanZhang2008} and \citet{Franco-Villoria2019} for comprehensive reviews of VCMs. 
VCMs are also useful in causal inference: \citet{Hahn2020}'s Bayesian Causal Forest (BCF) model, which has proven highly effective at uncovering treatment effect heterogeneity \citep{Dorie2019_automated}, is just a VCM with a single covariate (the binary treatment indicator).

Early work on VCMs largely focused on settings with a single effect modifier where it is popular either (i) to approximate each function $\beta_{j}(\bZ)$ with a linear combination of basis functions \citep[see, e.g.,][]{Huang2002, Wang2008, Wei2011, Bai2023} or (ii) to estimate the $\beta_{j}(\bZ)$'s in a fully nonparametric fashion with kernel smoothing \citep{Hoover1998, Wu2000, Li2002, Chiang2011}.
In more modern settings, when the $\beta_{j}(\bZ)$'s might depend on complex interactions between $R > 1$ modifiers, several authors have used regression trees to estimate the covariate effects in a VCM \citep[see, e.g.,][]{WangHastie2012,Burgin2015,ZhouHooker2022}. 
Generally speaking, tree-based approaches are better equipped to capture \textit{a priori} unknown interactions and scale much more gracefully with $R$ and the number of observations $N$ than kernel methods like the one proposed in \citet{LiRacine2010}, which involves intensive hyperparameter tuning.

Recently, \citet{Deshpande2024} introduced VCBART, a fully Bayesian method for fitting VCMs that exhibits superior covariate effect recovery and uncertainty quantification compared to the current state-of-the-art without requiring any hand-tuning.
VCBART approximates each $\beta_{j}(\bZ)$ with its own regression tree ensemble, avoiding overfitting with a regularizing prior that encourages each tree to be a ``weak learner''.
Across several synthetic and real datasets, VCBART displayed state-of-the-art empirical performance.
\citet{Deshpande2024} also derived posterior contraction rates for VCBART in the regime where $p$ was fixed and all the $\beta_{j}(\bZ)$'s were non-null.

In many applications, however, only a small subset $S_0\subset\{1,\dots,p\}$ of covariates have a non-null effect,.
Further, for each covariate $j\in S_0$, the function $\beta_j(\bZ)$ often only depends on just a few modifiers $S_{0,j}\subset\{1,\dots,R\}$. 
To meet the demands of such applications, we introduce \texttt{sparseVCBART}, which \emph{simultaneously} learns covariate sparsity and per-coefficient modifier sparsity. 
At the covariate level, we place global-local shrinkage priors on the output in each regression tree leaf so that the $\beta_{j}(\bZ)$ functions are adaptively shrunk towards zero \citep{Carvalho2010,Polson2012,bhadra2016gl}. 
At the modifier level, we endow each coefficient’s tree ensemble with a heavy-tailed Dirichlet prior over split variables, which concentrates mass on a small subset of modifiers relevant to $\beta_j(\bZ)$.

Our main theoretical results (\Cref{thm:contraction-main,thm:minimax-main}) show that the \texttt{sparseVCBART} posterior contracts at nearly the minimax-optimal rate $r_{N}$ where
\[
r_N^2 =N^{-1}{(d_0\log p)}+\sum_{j\in S_0} (\log N)N^{-{2\alpha_j}/{(2\alpha_j+s_{0,j})}},
\]
$d_{0} = \lvert S_{0}\rvert$ is the number of covariates with non-null effects, and $s_{0,j} = \lvert S_{0,j}\rvert$ is the number of effect modifiers on which $\beta_{j}(\bZ)$ depends. 
Our theoretical analysis significantly extends the work of \citet{Deshpande2024}. In the more challenging regime where $p$ grows with $N,$ we establish a faster contraction rate for estimating sparse VCMs than that obtained in their analysis.
We further show that the posterior over the splitting probabilities in each tree concentrates on the true modifier sets $S_{0,j}$ (\Cref{thm:modsel}) and that the posteriors for the local scale parameters corresponding to null $\beta_{j}(\bz)$'s concentrate around zero (\Cref{thm:lambda-shrink}).
These latter results essentially imply that, at least asymptotically, \texttt{sparseVCBART} can identify the modifiers that drive the truly non-null covariate effects while simultaneously shrinking away truly null effects.

%adaptively to \emph{unknown} covariate sparsity ($d_0=|S_0|$) and modifier sparsity ($s_{0,j}=|S_{0,j}|$) without needing to know either in advance (Theorem~\ref{thm:contraction-main}). The rate improves over \citet{Deshpande2024} using the ambient modifier dimension $R$ by replacing it with the intrinsic $s_{0,j}$ for each active coefficient. Moreover, our Dirichlet split prior consistently concentrates on the true modifier sets $S_{0,j}$ for active coefficients (Theorem~\ref{thm:modsel}), and the RHS prior drives truly null $\beta_j(\cdot)$’s toward zero while allowing signals to escape shrinkage (Theorem~\ref{thm:lambda-shrink}). 
Empirically in \Cref{sec:experiments}, across synthetic and semi-synthetic high-dimensional VCMs, \texttt{sparseVCBART} matches or improves predictive accuracy relative to strong baselines while delivering markedly sparser models: it prunes irrelevant covariates and modifiers and provides coherent posterior uncertainty for the remaining effects. 
These properties make \texttt{sparseVCBART} a practical, theoretically grounded tool for high-dimensional varying coefficient analysis well-suited to numerous applications.

\section{Proposed Model}
\label{sec:model}
We observe $N$ triplets $(\bx_{1}, y_{1}, \bz_{1}), \ldots, (\bx_{N}, y_{N}, \bz_{N})$ of a covariate vector $\bx \in \R^{p},$ scalar output $y \in \R,$ and an effect modifier vector\footnote{For simplicity, we present our model in the setting with only continuous effect modifiers. In our actual implementation, we allow for both categorical and continuous modifiers. See \citet{Deshpande2025} for formal definitions and priors for regression trees defined over mixed input spaces.} $\bz \in [0,1]^{R}$ and model
\begin{equation}
\label{eq:vcmodel}
y_{i} = \beta_{0}(\bz_{i}) + \sum_{j = 1}^{p}{\beta_{j}(\bz_{i})x_{ij}} + \sigma \epsilon_{i},
\end{equation}
where $\epsilon_{1}, \ldots, \epsilon_{N}$ are i.i.d. $\normaldist{0}{1}$ and $\beta_{0}(\bZ), \ldots, \beta_{p}(\bZ)$ are unknown functions mapping $[0,1]^{R}$ to $\R.$
Following \citet{Deshpande2024}, we (i) represent each unknown function $\beta_{j}$ in \Cref{eq:vcmodel} as a sum of $M_{j}$ regression trees; (ii) specify a regularizing prior over the regression trees that discourages overfitting; (iii) compute a posterior distribution over regression tree ensembles; and (iv) use Markov chain Monte Carlo to approximate posterior summaries.
The key difference between our proposed \texttt{sparseVCBART} and the original VCBART model is our use of a global-local prior to model the outputs of each regression tree.

\textbf{Notation}.
Formally, a regression tree is a pair $(\calT, \calM)$ of a (i) rooted binary decision tree $\calT$ with $\calL(\calT)$ terminal leaf nodes and several non-terminal decision nodes and (ii) a collection of scalar jumps $\calM = \{\mu_{1}, \ldots, \mu_{\calL(\calT)}\},$ one for each leaf.
Each decision node is associated with a decision rule of the form $\{Z_{r} < t\}.$
Given $\calT,$ we can associate every $\bz \in [0,1]^{R}$ with a single leaf node $\ell(\bz; \calT)$ by tracing a path down from the root node that, upon encountering the decision rule $\{Z_{r} < t\}$, proceeds to the left (resp. right) if $z_{r} < t$ (resp. $z_{r} \geq t$). 
By associating each leaf $\ell$ in $\calT$ with a scalar jump $\mu_{\ell},$ the pair $(\calT, \calM)$ represents a piecewise-constant step function.
Given a regression tree $(\calT, \calM)$, let $g(\bz; \calT, \calM) = \mu_{\ell(\bz; \calT)}$ be the evaluation function returning the jump associated with leaf $\ell(\bz).$

%Given $\calT,$ we can associate every $\bz \in [0,1]^{R}$ with a single leaf node $\ell(\bz; \calT)$ by tracing a path down from the root node that, upon encountering the decision rule $\{Z_{r} < t\}$, proceeds to the left (resp. right) if $z_{r} < t$ (resp. $z_{r} \geq t$). 
%Given a regression tree $(\calT, \calM)$, let $g(\bz; \calT, \calM) = \mu_{\ell(\bz; \calT)}$ be the evaluation function returning the jump associated with leaf $\ell(\bz).$

\subsection{Prior Specification}
\label{sec:prior}
For each $j = 0, \ldots, p,$ we introduce an ensemble of $M_{j}$ regression trees $\calE_{j} = \{(\calT^{(j)}_{m}, \calM^{(j)}_{m})\}_{m = 1}^{M_{j}}$ and express 
$\beta_{j}(\bz) = \sum_{m = 1}^{M_{j}}{g(\bz; \calT_{m}^{(j)}, \calM_{m}^{(j)})}.$
We denote the collection of all ensembles with $\bcalE.$

\textbf{Decision tree prior.} We model individual decision trees $\calT_{m}^{(j)}$ in $\calE_{j}$ as \textit{a priori} i.i.d. and further model the trees as \textit{a priori} independent across ensembles.
To sample a random decision tree from the prior, we first use a Galton-Watson branching process to generate the graphical structure (i.e., the arrangement of leaf and decision nodes) and then draw the decision rules associated to each decision node.
To ensure the tree is of finite depth, we calibrate the branching process so that the probability that a node at depth $d$ is non-terminal is $0.95\times(1+d)^{-2}.$
Having drawn the graphical structure, we draw a decision rule for each non-terminal node in two steps.
For a tree in the ensemble $\calE_{j},$ we select the splitting variable $r \sim \textrm{Multinomial}(\theta_{j1}, \ldots, \theta_{jR})$ and then, conditionally on $r,$ draw the cutting point $t$ uniformly.
We specify independent priors for the vectors $\btheta_{j} := (\theta_{j1}, \ldots, \theta_{jR}) \sim \textrm{Dirichlet}(\eta_{j}/R, \ldots, \eta_{j}/R)$ with hyperpriors $\eta_{j}/(\eta_{j} + R) \sim \betadist{1}{0.5}$.
This prior hierarchy, which was first introduced by \citet{Linero2018}, encourages the regression trees in each ensemble to split on only a few effect modifiers, thereby encouraging modifier sparsity. 

\textbf{Global--local jump prior.}
The original VCBART model specified independent $\normaldist{0}{1/(4M_{j})}$ priors on all the jumps in each ensemble $\calE_{j},$ which implies the marginal prior $\beta_{j}(\bz) \sim \normaldist{0}{1/4}.$
So, the original VCBART applies the exact same amount of shrinkage to all functions $\beta_{j}(\bz).$
Within the Bayesian variable selection literature, it is well-established that \emph{adaptive shrinkage} methods, which shrink smaller signals to zero more aggressively than larger signals, recover underlying sparsity patterns much more effectively than constant shrinkage methods \citep{Polson2012, bhadra2016gl,GeorgeRockova2020_penalty_mixing}. 

On this view, to encourage selective shrinkage of the $\beta_{j}(\bz)$'s in our high-dimensional setting, we specify regularized horseshoe priors \citep[RHS;][]{PiironenVehtari2017} on our regression tree jumps.
Specifically, we introduce a shared global parameter $\tau>0$, a slab parameter $c>0$, and local scales $\lambda_{0},\ldots,\lambda_{p}$ and model
\[
\mu^{(j)}_{m,\ell}\mid \lambda_j,c,\tau \sim \normaldist{0}{s_{j}^{2}},
\]
where $s_{j}^{2} = \tau^{2}\lambda_{j}^{2}\,c^{2}/(c^{2} + \tau^{2}\lambda_{j}^{2}).$
All leaves in $\calE_{j}$ share the same local scale $\lambda_j$; when $\lambda_j$ shrinks toward zero, \emph{every} leaf in that coefficient is pulled to (near) zero, effectively switching off $\beta_j(\bz)$ and yielding covariate-level sparsity. 
The global scale $\tau$ controls the overall amount of shrinkage across ensembles, adapting to the number of truly active coefficients. 
The finite slab $c$ prevents over-shrinkage of large signals and stabilizes inference relative to the plain horseshoe prior \citep{pmlr-v54-piironen17a,PiironenVehtari2017}, especially when signals are strong but sparse. 

We endow the local and global scale parameters with independent half-Cauchy priors: \(
\lambda_j\sim \mathcal{C}^+(0,1) \), and \(\tau\sim \mathcal{C}^+(0,\tau_0)\).
The slab variance uses the conjugate prior $c^2\sim\mathcal{IG}(\nu_c/2,\nu_cs_c^2/2)$. 
Finally, we specify the conjugate prior $\sigma^{2} \sim \mathcal{IG} \big({\nu}/{2},\,{\nu\lambda}/{2}\big)$, where $\nu$ and $\lambda$ are fixed hyperparameters.

%As shown in \Cref{sec:experiments}, tying all leaf jumps within an ensemble to a covariate-specific local scale \(\lambda_j\) under a shared global scale $(\tau)$ induces \emph{covariate-level} sparsity: truly null coefficients are shrunk as a group toward zero, while large signals pass through the slab. 
%This selective shrinkage yields more accurate support recovery and substantially tighter, better-calibrated credible bands in high dimensions (see \Cref{fig:funrec}).

\subsection{Posterior Computation}
\label{sec:sampler}

We generate posterior draws with a high-level Gibbs sampler that sequentially updates each of $\bcalE,$ $\bTheta \coloneqq\{\btheta_j\}_{j=0}^{p},$ $\bEta \coloneqq \{\eta_j\}_{j=0}^{p},$ $\blambda \coloneqq \{\lambda_j\}_{j=0}^{p},$ $\tau,$ $c,$ and $\sigma^{2},$ conditionally on all the others. 
Here, we briefly describe each Gibbs update and defer the full sampler derivation to \switchref{\Cref{sec:extraalgorithm}}{Section S2 of the Supplementary Materials}.
We update each of $\bcalE, \bTheta, \bEta,$ and $\sigma^{2}$ using essentially the same sampler as the one developed by \citet{Deshpande2024} for the original VCBART model.
Compared to that sampler, each iteration of our \texttt{sparseVCBART} sampler includes additional steps to update $\blambda, \tau,$ and $c.$
%As we describe below, we update each of $\bcalE, \bTheta, \bEta,$ and $\sigma^{2}$ using essentially the same strategy as \citet{Deshpande2024} and update each of $\{\lambda\}_{j=0}^{p}$ and  $\tau$ by univariate slice sampling on the log-scale. 
%The slab variance $c^{2}$ is updated by a conjugate inverse-gamma draw.

\textbf{Updating $(\calT^{(j)}, \calM^{(j)}_{m}$).} 
We update the trees in $\bcalE$ one at a time and update individual regression trees in two steps.
First, we draw a new decision tree $\calT$ with a Metropolis-Hastings step that involves randomly growing or pruning the existing tree.
Then, we draw new jumps $\calM$ conditionally on the new tree decision tree.
This sampling strategy is enabled by the fact that given $\by, \blambda, c, \sigma^{2}$, and all other trees, the conditional posterior density of a single decision tree $\calT$ and the conditional distribution of its jumps $\calM \vert \calT$ are available in closed form (\switchref{\Cref{eqS:treemarg,eqS:leafpost}}{Equations S2.4 \& S2.6}).

\begin{comment}
To see this, suppose we are updating tree $m$ in ensemble $\calE_{j}.$
Let $r_{1}, \ldots, r_{n}$ be the partial residuals\footnote{For compactness, we suppress the dependence on $j$ and $m$ from the notation. For additional compactness in \Cref{eq:partial_resid}, we set $x_{i0} = 1$ for the intercept term}
\begin{align}
\begin{split}
\label{eq:partial_resid}
r_{i} = y_{i} &- \sum_{j' \neq j}{\sum_{m' = 1}^{M_{j}}{x_{ij'}g(\bz_{i}; \calT^{(j')}_{m'}, \calM^{(j')}_{m'})}} \\
&- \sum_{m' \neq m}{x_{ij}g(\bz_{i}; \calT^{(j)}_{m}, \calT^{(j)}_{m'})}
\end{split}
\end{align}
Additionally, for each leaf $\ell$ of an arbitrary tree, let $\mathcal{I}(\ell; \calT) = \{i : \ell(\bz_{i};\calT) = \ell\}$ be the set of indices of observations associated with leaf $\ell$ in tree $\calT.$
Further let $\br_{\ell} = (r_{i}: i \in \calI(\ell, \calT))$ and $\bz_{j,\ell} = (x_{ij}: i \in \calI(\ell,\calT))$ be the vector of partial residuals and covariate values corresponding to observations in leaf $\ell.$
The conditional posterior density of $(\calT_{m}^{(j)}, \calM_{m}^{(j)})$ factorizes over the leafs of $\calT_{m}^{(j)}$:
\begin{align}
\begin{split}
\label{eq:tree_joint_posterior}
p(\calT, \calM \vert \ldots) &\propto p(\calT)\prod_{\ell = 1}^{\calL(\calT)}{s_{j}^{-1}\exp\left\{-\frac{\mu_{\ell}^{2}}{2s_{j}^{2}}\right\}} \\
&~\times \prod_{\ell = 1}^{\calL(\calT)}{\exp\left\{-\frac{\lVert \br_{\ell} - \mu_{\ell}\bx_{j,\ell} \rVert^{2}_{2}}{2\sigma^{2}}\right\}}
\end{split}
\end{align}
\end{comment}

\textbf{Updating $\bTheta, \bEta,$ and $\sigma^{2}$}. We respectively update each vector of splitting probabilities $\btheta_{j}$ and $\sigma^{2}$ with standard conjugate updates (\switchref{\Cref{eqS:etaTarget,eqS:sigma2}}{Equations S2.8 \&  S2.11}). We update each $\eta_{j}$ with a Metropolis-Hastings step.

\textbf{Updating $\blambda$, $\tau$, and $c$.} 
It turns out that the conditional posterior density $\pi(\blambda, \tau, c \vert \ldots)$ factorizes, allowing us to update each parameter separately (\switchref{\Cref{eqS:GLtarget}}{Equation S2.9}).
We specifically use one-dimensional slice sampling to update each $\lambda_{j}$'s and $\tau.$ 
We draw a new $c^{2}$ using a conjugate normal-inverse gamma update (\switchref{\Cref{eqS:c2update}}{Equation S2.10}). 

\subsection{Default Hyperparameter Settings}
\label{sec:hyperparameters}

\texttt{sparseVCBART} depends on several hyperparameters: the number of trees $M_{j}$ in ensemble $\bcalE_{j}$; the scale $\tau_{0}$ for the prior on the global scale $\tau;$ and the inverse-gamma hyperparameters $\nu_{c}, s_{c}, \nu,$ and $\lambda$ for the slab variance $c^{2}$ and the residual variance $\sigma^{2},$ respectively. We recommend setting each $M_{j} = 50$ and, like \citet{Chipman2010}, setting $\nu = 3$ and $\lambda$ such that the $\mathcal{IG}(\nu/2, \nu\lambda/2)$ density places 90\% probability on the event that $\sigma < \textrm{sd}(\by),$ the observed standard deviation of the outcomes.
We further recommend setting \(\tau_{0} = {p}/{(p - p_0)}\cdot {\mathrm{sd}(\bm{y})}/{\sqrt{N}},\) where \(p_0 = \min\Big\{10, \max\big(1, \lfloor p/4 \rfloor\big)\Big\}\), as recommended by \citet{pmlr-v54-piironen17a}.
Finally, we recommend setting $\nu_{c} = 4$ and $s_{c} = 2$.
In our experiments, we have found that these settings generally yield very accurate results; see \switchref{\Cref{sec:sensitivity}}{Section S3.2 in the Supplementary Materials} for a comprehensive sensitivity analysis on the ensemble size $M_j$.

\section{Theoretical Results}
\label{sec:theory}
In this section, we analyze the asymptotic behavior of \texttt{sparseVCBART} in a high–dimensional regime where both $p$ and $R$ grow with the sample size $N$. 
Under standard design and regularity conditions, we show that the posterior contracts around the truth at a rate that is nearly minimax–optimal, while adapting to unknown sparsity and smoothness. 
We further establish that the posterior consistently identifies which covariates $X_j$ have non–null effects on $Y$ and, for each active covariate, which modifiers $Z_r$ drive its effect. 
Finally, we prove that the posterior distributions of the local shrinkage scales for truly null covariates collapse to zero as $N$ diverges.

To study \texttt{sparseVCBART} theoretically, we assume the data were generated from a true VCM indexed by a vector of coefficient functions $\bbeta_0 = (\beta_{0,0}, \dots, \beta_{0,p})$ and a noise variance $\sigma_0^2$. Collectively, we call $\theta_0 \coloneqq (\bbeta_0, \sigma_0 ^2)$. We assume that the set of active covariates, $S_0 \subset \{1, \dots, p\}$, is small, with size $d_0 := |S_0| \ll p$. 
For any inactive covariate $j \notin S_0$, we assume that the true function is null (i.e., $\beta_{0,j}(\bZ) \equiv 0$). 
For each active covariate $j \in S_0$, we assume that its coefficient $\beta_{j}(\bZ)$ depends only on a small subset of modifiers $S_{0,j} \subset \{1, \ldots, R\}$ of size $s_{0,j} := |S_{0,j}| \ll R$. 
Finally, we assume that for each $j \in S_{0},$ the truly non-null function $\beta_{0,j}(\bZ)$ lies in a particular H\"{o}lder class (see Assumption \textbf{(A4)}).

\textbf{Notation.} Because we let the number of covariates grow with $N,$ we will use the notation $p_{N}$ instead of $p.$
For any index set $S,$ let $\bz_{S} = (z_{u})_{u \in S}.$
We denote the empirical $L_{2}$ norm $\|f\|_N:=\{N^{-1}\sum_{i=1}^N f(\bm Z_i)^2\}^{1/2}.$
With $\alpha>0$, let $s=\lfloor\alpha\rfloor$ and $\zeta=\alpha-s$, for a multi–index $\bm{k}\in\mathbb N^R$ with $|\bm{k}|=\sum_{r=1}^{R} k_r$, we write $D^{\bm{k}} f$ for the mixed derivative. 
The H\"older ball of radius $B>0$ on $[0,1]^R$ is
\begin{equation*}
\begin{split}
\mathcal H_B^{\alpha}\big([0,1]^R\big)
=
\Big\{ f:[0,1]^R\!\to\!\mathbb R :
\max_{|\bm{k}|\le s}\|D^{\bm{k}} f\|_\infty \\+
\max_{|\bm{k}|=s}\ \sup_{\bm{x}\neq \bm{y}}
\frac{|D^{\bm{k}} f(\bm{x})-D^{\bm{k}} f(\bm{y})|}{\|\bm{x}-\bm{y}\|_\infty^{\zeta}}
\ \le\ B \Big\}.
\end{split}
\end{equation*}

% For any $\alpha > 0$ and constant $B > 0,$ let $\mathcal{H}^{\alpha}_{B}([0,1]^{R})$ be the H\"{o}lder ball
% \[
% \mathcal{H}^{\alpha}_{B}\big([0,1]^R\big)
% = \Big\{\, f:[0,1]^R\to\mathbb R : \|f\|_{C^{\alpha}([0,1]^R)}\le B \Big\},
% \]
% where the H\"older norm $
% \|f\|_{C^{\alpha}([0,1]^R)}
% :=\max_{|k|\le s}\,\|D^k f\|_{\infty}
% +
% \max_{|k|=s} \sup_{x\ne y}
% \frac{|D^k f(x)-D^k f(y)|}{\|x-y\|_{\infty}^{\beta}}.
% $
For fixed sequences $a_N,b_N>0$, $a_N=\mathcal O(b_N)$ means $\sup_N a_N/b_N<\infty;$ $a_N=o(b_N)$ means $a_N/b_N\to0;$ and $a_N=\Theta(b_N)$ means $a_N=\mathcal O(b_N)$ and $b_N=\mathcal O(a_N)$. 
Similarly, for a random sequence $\{X_{N}\},$ we write $X_{N} = \mathcal{O}_{P}(a_{N})$ to mean that $X_{N}/a_{N}$ is bounded in probability.
We write $a\lesssim b$ if $a\le C\,b$ for an absolute constant $C$ independent of $(N,p_N,R)$; $a\gtrsim b$ is defined analogously. For a set $A\subset\mathbb R^d$, $\mathrm{vol}(A)$ denotes its $d$–dimensional Lebesgue measure.

\subsection{Posterior contraction}
\label{sec:postconc}
To establish posterior contraction, we make the following assumptions:
\begin{itemize}
   \item[\textbf{(A1)}]The number of trees per coefficient is uniform and bounded: $M_{j} \equiv M$ for all $j$, with $M=\mathcal O(1)$.\footnote{Our theoretical results are unaffected if $M$ grows slowly (e.g., $M = \mathcal O(\log N)$), since the proofs depend on the total leaf budget per ensemble, not the specific number of trees. Allowing $M$ to grow merely changes constants and bookkeeping.}

   \item[\textbf{(A2)}] The sample $\{\bm Z_i\}_{i=1}^N\subset[0,1]^R$ is \emph{k-d–regular}:   
    there exist constants $0<c_\mathrm{kd}\le C_\mathrm{kd}<\infty$ such that for any axis-aligned rectangle $A\subset[0,1]^R$,
\[
c_\mathrm{kd}\,\mathrm{vol}(A)
\le
\frac{1}{N}\#\{i:\bm Z_i\in A\}
\le
C_\mathrm{kd}\,\mathrm{vol}(A).
\] 
\item[\textbf{(A3)}] $\max_{i\le N}\max_{1\le j\le p_N}|X_{ij}|\le D<\infty$.
\item[\textbf{(A4)}] $\beta_{0,j} \equiv 0 \ \  \forall j \notin S_0,$ and for each $j \in S_0,$ $\exists \tilde{\beta}_{0,j} \in \mathcal{H}_{B} ^{\alpha_j} \left([0,1]^{s_{0,j}}\right)$ with $0 < \alpha_j \le 1$ such that $\beta_{0,j}(\bm z) = \tilde{\beta}_{0,j}(\bm{z}_{S_{0,j}}).$

\item[\textbf{(A5)}] There exists $0<\xi<\min_{j \in S_0} s_{0,j}/(2 \alpha_j + s_{0,j}),$ such that $\log p_N=\mathcal O(N^\xi)$, $d_0=\mathcal O(\log N)$, and $R=o(\log N)$.

\item[\textbf{(A6)}] There exist constants $C\ge 1$ and $0<c_{\min}\le c_{\max}<\infty$ such that for any
$S\subset\{1,\dots,p_N\}$ with $|S|\le C d_0$ and any arrays $u_{ij}\in\mathbb R$,
\begin{equation*}
\begin{split}
c_{\min}\sum_{j\in S}\frac{1}{N}\sum_{i=1}^N u_{ij}^2
 & \le
\frac{1}{N}\sum_{i=1}^N\Big(\sum_{j\in S} X_{ij} u_{ij}\Big)^2 \\
& \le
c_{\max}\sum_{j\in S}\frac{1}{N}\sum_{i=1}^N u_{ij}^2.
\end{split}
\end{equation*}
\end{itemize}

Assumptions \textbf{(A1)} -- \textbf{(A3)} are basic BART regularity conditions: \textbf{(A1)} bounds model complexity, the k-d regularity in \textbf{(A2)} ensures good approximation properties for the tree ensembles \citep{Rockova2019}, and \textbf{(A3)} provides numerical stability by preventing explosion of the prediction norms. 
The high-dimensional regime in \textbf{(A5)} allows for sub-exponentially growing $p_N$, a strictly more challenging setting than the fixed-$p$ regime studied in \citet{Deshpande2024}. 
Our key statistical assumption is the functional restricted-eigenvalue condition \textbf{(A6)}, a functional analogue of standard restricted eigenvalue conditions in sparse regression \citep[\S3]{Bickel_2009} that precludes pathological collinearity and enables the construction of powerful tests and stable identifiability: it lets us translate empirical function error $\sum_{j \in S}\|\beta_j-\beta_{0,j}\|_N ^2$ into prediction error $\|\sum_{j \in S} X_{ij}(\beta_j - \beta_{0,j})^2 \|_N ^2,$ preventing near-collinear designs from masking nonzero effects. 
Finally, the smoothness constraint $\alpha_j \le 1$ in $\textbf{(A4)}$ is an inherent consequence of using piecewise-constant step-functions to approximate the $\beta_{j}(\bZ)$'s \citep{jeong2023artbartminimaxoptimality}; this limitation can be relaxed by using the so-called ``soft'' decision trees (as in \citet{LineroYang2018}) or replacing constant jumps with smoother functions (as in \citet{yee2025scalablepiecewisesmoothingbart}).

To control model capacity while retaining covariate and modifier-specific adaptivity, we make the following prior assumptions on the tree topology, modifier-split coordinates, leaf values with covariate-level global–local shrinkage, and the noise variance. 

\begin{itemize}
\item[\textbf{(P1)}] Each regression tree follows a depth-penalized Galton–Watson process:
a node at depth $m$ splits with probability $\gamma^{m}$ where $N^{-1}<\gamma<1/2$ is fixed. 
% \item[\textbf{(P2)}] For each covariate $j$, we employ the prior on the split-coordinate probabilities 
% $\bm{\theta}_j=(\theta_{j1},\dots,\theta_{jR})$ 
% \[
% \bm \theta_j\sim\mathrm{Dir}\!\left({\eta_j}/{R},\dots,{\eta_j}/{R}\right),
% \ 
% \eta_j\sim\pi(\eta)\propto (R+\eta)^{-(R+1)}.
% \]

\item[\textbf{(P2)}] We draw $d\in\{0,1,\dots,p_N\}$ with $\pi(d)\propto 1/d!$. The global scale $\tau\mid d \sim \mathcal C^+\!\bigl(0, A_{g,d}\bigr)$ with
$A_{g,d}=\Theta\left(dp_N^{-1}(\log p_N)^{-2}(\log N)^{-2}\right)$.

\item[\textbf{(P3)}] Let $r_N^2$ denote the target contraction rate (stated later). We use a mildly informative prior
\[
\sigma^2 \sim \mathcal{IG}\!\left(\frac{\nu_N}{2},\,\beta\right),
\quad \nu_N=\mathcal O(N r_N^2), \quad \beta>0\ \text{fixed}.
\]

% \emph{Local scales.} $\lambda_j \stackrel{\text{i.i.d.}}{\sim}\mathcal C^+(0,1)$ for $j=0,1,\dots,p_N$.

% \emph{Leaf values.} For every leaf $\ell$ in tree $m$ for coefficient $j$,
% \[
% \mu^{(j)}_{\ell,m} \,\Big|\, \tau,\lambda_j
% ~\sim~
% \mathcal N\!\Big(0,\ \frac{1}{M}\,\sigma_{j,\text{leaf}}^2\Big),
% \]
% with the regularized horseshoe prior by introducing a slab scale $c>0$
% (e.g., $c\sim\mathcal C^+(0,1)$ independent of $N$) and set    \[
%       \sigma_{j,\text{leaf}}^2
%       = \tau^2\lambda_j^2\,\frac{c^2}{c^2+\tau^2\lambda_j^2}.
%     \]
% \paragraph{(P4) Residual variance.}
% Let $r_N^2$ denote the target contraction rate (stated later). We use a mildly informative prior
% \[
% \sigma^2 \sim \mathcal{IG}\!\left(\frac{\nu_N}{2},\,\beta\right),
% \quad \nu_N=\mathcal O(N r_N^2), \quad \beta>0\ \text{fixed}.
% \]
% This choice simplifies sieve mass control and does not affect rates.
\end{itemize}
% The Galton-Watson process \textbf{(P1)} is the primary regularization mechanism, penalizing tree depth to keep the individual learners simple \citep{rockovasaha}. 

Assumption \textbf{(P1)} specifies a slightly different Galton-Watson prior on the decision tree structure than what we presented in \Cref{sec:model}: under \textbf{(P1)}, the probability that the tree continues to grow decays exponentially with depth as opposed to quadratically. 
As \citet{rockovasaha} show, such an exponential decay is necessary for posterior contraction.
Moreover, the restriction $\gamma \in (N^{-1},1/2)$ is a key technical condition that balances two opposing needs: the upper bound ensures trees remain small to provide regularization, while the lower bound is required to place sufficient prior mass on the specific k-d approximants used in the posterior concentration proof of \Cref{thm:contraction-main}.
Taking $\pi(d) \propto 1/d!$ in \textbf{(P2)} imposes a super-exponential penalty on larger supports, discouraging overfitting to many covariates. 
The scaling choice of $\tau$ is also utilized to keep cumulative false activation across $p_N$ covariates under control. 
Prior \textbf{(P3)} acts as a weakly informative prior in practice; tying the shape parameter $\nu_N$ to the target contraction rate $r_N ^2$ is a crucial step that facilitates the proof of our main theorems by ensuring the prior for the noise variance is appropriately concentrated. 
\Cref{thm:contraction-main} shows that the \texttt{sparseVCBART} posterior contracts around the true function at a rate $r_N$. 
This rate is adaptive, composed of a parametric component for covariate selection $(N^{-1}d_0 \log p_N)$ and a sum of nonparametric components ($N^{-2\alpha_j/(2 \alpha_j+s_{0,j})}$) whose complexity depends on the intrinsic dimension and smoothness of each active coefficient ($s_{0,j}$). 
For $j \in S_0,$ we define the dimension-adaptive rate $r_{jN,\text{ad}} ^2 \coloneqq (\log N)N^{-2\alpha_j/(2 \alpha_j+s_{0,j})}$ and the overall target rate as $r_N^2
\coloneqq
N^{-1}(d_0\log p_N) + \sum_{j\in S_0} r_{jN,\mathrm{ad}}^2.$
\begin{theorem}
\label{thm:contraction-main}
Under \textbf{(A1)}-\textbf{(A6)} and \textbf{(P1)}-\textbf{(P3)}, there exists $C>0$ such that as $N \rightarrow \infty$,
\[
\Pi \ \Big(\ \|\bm\beta-\bm\beta_0\|_N > C\,r_N \ \Bigm|\ \bm{Y}\Big)
\;\longrightarrow\; 0
\ \text{in $P_{\bbeta_0}$–probability.}
\]
\end{theorem}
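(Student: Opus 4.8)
I would deduce \Cref{thm:contraction-main} from the standard prior-mass / sieve / testing recipe for posterior contraction with independent, non-identically distributed observations, instantiated for the Gaussian varying-coefficient model as in the BART analyses of \citet{rockovasaha} and \citet{jeong2023artbartminimaxoptimality}. Throughout I work simultaneously with the coefficient semimetric $\|\bbeta-\bbeta_0\|_N$ and the prediction semimetric $\rho(\bbeta,\bbeta_0)^2 := N^{-1}\sum_{i}\bigl(\sum_{j} X_{ij}(\beta_j(\bz_i)-\beta_{0,j}(\bz_i))\bigr)^2$ (with $X_{i0}\equiv 1$). Assumption \textbf{(A6)} is the bridge between them: whenever at most $C d_0$ coefficients differ from the truth, $c_{\min}\|\bbeta-\bbeta_0\|_N^2\le \rho(\bbeta,\bbeta_0)^2\le c_{\max}\|\bbeta-\bbeta_0\|_N^2$. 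Since the Kullback--Leibler divergence and variation between $P_{(\bbeta_0,\sigma_0^2)}$ and $P_{(\bbeta,\sigma^2)}$ in the Gaussian model are, up to constants, $\rho(\bbeta,\bbeta_0)^2/\sigma_0^2$ plus smooth functions of $\sigma^2/\sigma_0^2-1$, \textbf{(A6)} lets me control both the KL-neighborhood in the prior-mass step and the separation in the testing step through $\|\bbeta-\bbeta_0\|_N$.

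\textbf{Prior mass.} I would show $\Pi(\rho(\bbeta,\bbeta_0)\lesssim r_N,\ |\sigma^2-\sigma_0^2|\lesssim r_N^2)\ge e^{-c N r_N^2}$, exploiting that this event factorizes over coefficients. For the $p_N-d_0$ null coefficients I need $\sum_{j\notin S_0}\|\beta_j\|_N^2\lesssim r_N^2$; this holds on the event that the global scale $\tau$ and all null local scales $\lambda_j$ are simultaneously small, and the crucial computation is that under the $\pi(d)\propto 1/d!$ prior and the $d$-dependent scaling $A_{g,d}=\Theta(dp_N^{-1}(\log p_N)^{-2}(\log N)^{-2})$ of \textbf{(P2)}, this event has log-prior-mass at least $-C d_0\log p_N$, matching the parametric term of $r_N^2$. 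For each active $j\in S_0$ I would first approximate the H\"older function $\beta_{0,j}$ by a k-d step function with $\bar K_j\asymp N r_{jN,\mathrm{ad}}^2/\log N$ leaves supported only on the coordinates $S_{0,j}$, which by \textbf{(A2)} and \citet{Rockova2019} has $L_2$ error $\lesssim r_{jN,\mathrm{ad}}$; then I would lower-bound the prior probability that a designated tree of $\calE_j$ realizes the required topology (using the exponential depth decay of \textbf{(P1)} with $\gamma\in(N^{-1},1/2)$), splits only within $S_{0,j}$ (using the Dirichlet--Beta split-probability hierarchy of \citet{Linero2018}, which is cheap since $R=o(\log N)$), and has leaf heights within $r_{jN,\mathrm{ad}}/\sqrt{\bar K_j}$ of the targets, while the remaining $M-1$ trees are near-trivial and $\lambda_j,\tau,c$ sit in a moderate range. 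Each such bound is at least $e^{-C N r_{jN,\mathrm{ad}}^2}$—the $\log N$ factor built into $r_{jN,\mathrm{ad}}^2$ is precisely what absorbs the cost of discretizing the partition and the leaf heights. Finally, \textbf{(P3)} with $\nu_N=\mathcal{O}(N r_N^2)$ gives $\Pi(|\sigma^2-\sigma_0^2|\lesssim r_N^2)\ge e^{-C N r_N^2}$, and multiplying the pieces yields the claim.

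\textbf{Sieves, entropy, and tests.} I would take $\calF_N$ to consist of parameters for which (i) at most $C d_0$ coefficients are ``active''; (ii) each ensemble has at most $\bar K\asymp N r_N^2/\log N$ leaves in total; (iii) the leaf outputs and the scale parameters $\tau,c,\{\lambda_j\}$ lie in a large but controlled range; and (iv) $\sigma^2$ is bounded away from $0$ and $\infty$. The exponential depth penalty of \textbf{(P1)} makes the prior probability of a tree with more than $\bar K$ leaves exponentially small, as in \citet{rockovasaha}; the $1/d!$ prior makes $\Pi(d>C d_0)$ super-exponentially small; bounding the probability of large scale parameters is the delicate part, since the half-Cauchy and inverse-gamma hyperpriors are heavy-tailed, and I would handle it by restricting those parameters to a moderate range carrying all but exponentially little prior mass and on which the leaf-value marginals are effectively light-tailed. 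A step function with $\bar K$ cells over a k-d-regular design and bounded height has log-covering number $\lesssim \bar K\log(N/\epsilon)$ at scale $\epsilon$; summing over the $\le C d_0$ active coefficients, adding $\log\binom{p_N}{C d_0}\asymp d_0\log p_N$ for the choice of active set and $\mathcal{O}(\log N)$ for the one-dimensional scale and variance parameters, shows that the logarithm of the $\epsilon$-covering number of $\calF_N$ under $\|\cdot\|_N$ is $\lesssim N r_N^2$ for $\epsilon\asymp r_N$. Finally, covering $\{\bbeta\in\calF_N:\|\bbeta-\bbeta_0\|_N>C r_N\}$ by $\|\cdot\|_N$-balls, converting each into a prediction-norm separation via \textbf{(A6)}, and using the usual likelihood-ratio tests for the Gaussian VCM (with a separate variance test when $\sigma$ is far from $\sigma_0$) yields tests with vanishing type-I error and type-II error $\le e^{-c' N r_N^2}$ on the sieve; combining the prior-mass, sieve, and testing bounds in the standard way gives the stated convergence.

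\textbf{Main obstacle.} The step I expect to dominate the work is the prior-mass bound for the null coefficients. Because the regularized horseshoe never sets a coefficient exactly to zero, one cannot appeal to the discrete model-space arithmetic available for spike-and-slab priors; instead one must show that the pole of the half-Cauchy at the origin, the across-coefficient coupling through the shared global scale $\tau$, and the precise $d$-dependent calibration $A_{g,d}$ in \textbf{(P2)} together force $\sum_{j\notin S_0}\|\beta_j\|_N^2\lesssim r_N^2$ on an event whose log-probability is no worse than $-d_0\log p_N$. A secondary difficulty—needed to keep all constants uniform in $(N,p_N,R)$—is the joint bookkeeping of the sum-of-$M$-trees structure together with the topology-and-leaf-value prior-mass lower bounds for the active coefficients, and the matching truncation of the heavy-tailed scale parameters in the sieve construction.
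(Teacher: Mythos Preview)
Your overall strategy matches the paper's exactly: the Ghosal--Ghosh--van der Vaart program instantiated for the Gaussian VCM, with $\rho$ and $\|\bbeta-\bbeta_0\|_N$ linked via \textbf{(A6)}, a k-d approximant for each active $\beta_{0,j}$ using $K_j\asymp Nr_{jN,\mathrm{ad}}^2/\log N$ leaves supported on $S_{0,j}$, tree-prior-mass lower bounds exploiting the exponential depth decay of \textbf{(P1)} and the Dirichlet split hierarchy, a sieve capping support size and leaf budget, and Gaussian likelihood-ratio tests over $\rho$-shells. These pieces line up with the paper's Lemmas essentially one-for-one, and your identification of the null-coefficient prior mass as the crux is exactly right.

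The specific mechanism you propose for that crux, however, does not work as stated. You want the event ``$\tau$ small and all null $\lambda_j$ simultaneously small'' to deliver both $\sum_{j\notin S_0}\|\beta_j\|_N^2\lesssim r_N^2$ and log-prior-mass $\ge -Cd_0\log p_N$. But with $p_N-d_0$ i.i.d.\ half-Cauchy $\lambda_j$'s, the event $\{\lambda_j\le L\ \forall j\notin S_0\}$ costs $\asymp p_N/L$ in log-probability; matching $d_0\log p_N$ forces $L\gtrsim p_N/(d_0\log p_N)$, whence the effective leaf scale obeys $\tau\lambda_j\lesssim A_{g,d_0}L\asymp(\log p_N)^{-3}(\log N)^{-2}$, and $\sum_{j\notin S_0}\|\beta_j\|_N^2$ is of order $p_N$ times a polylog, which diverges whenever $p_N$ grows faster than polylogarithmically---precisely the regime \textbf{(A5)} allows. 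The paper's route is different: it does \emph{not} seek uniform smallness of the $\lambda_j$. Conditioning on $\{d=d_0,\ \tau\asymp A_{g,d_0}\}$, it \emph{marginalizes} each $\lambda_j$ to obtain a per-coefficient activation bound $\mathbb P(\|\beta_j\|_\infty>\varepsilon_0\mid\tau)\lesssim(\tau/\varepsilon_0)\log(1+\varepsilon_0/\tau)\lesssim d_0/\bigl(p_N(\log p_N)(\log N)^2\bigr)$ for a \emph{fixed} threshold $\varepsilon_0$, and then the product bound $(1-p_\star)^{p_N-d_0-s_N}\ge\exp\{-Cd_0\log p_N\}$ shows that \emph{all but $s_N=\Theta(\log N)$} inactive coefficients stay below $\varepsilon_0$. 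Correspondingly, both the prior-mass neighborhood and the sieve are built with soft sparsity---inactive coefficients bounded by a small threshold and at most $s_N$ coefficients declared ``active''---rather than the hard-zero support your sieve item (i) implicitly uses. So the missing idea is to replace ``all $\lambda_j$ small'' by ``marginalize over $\lambda_j$, control the \emph{count} of large inactives, and carry a budget of $s_N$ exceptions through both the KL-neighborhood and sieve definitions.''
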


We prove \Cref{thm:contraction-main} in \switchref{\Cref{sec:proofs}}{Section S1 of the Supplementary Materials} using the general framework outlined in \citet{Ghosal2000}. 
First, we prove \switchref{\Cref{lem:approx-main}}{Lemma S1.1}, which shows that any sparse H\"older function can be well-approximated by a tree ensemble.
More specifically, for each active covariate $j \in S_{0},$ we construct a piecewise-constant step function $g_{j}$ on a balanced k-d partition that splits only along modifiers in $S_{0,j}.$
We then show that $g_{j}$ can be represented as a regression tree with $K_j \asymp N r_{jN,\textrm{ad}}^2/\log N$ leaves and bound the approximation error $\| \beta_{0,j}-g_j \|_N \lesssim N^{-\alpha_j/(2\alpha_j + s_{0,j})}.$
\switchref{\Cref{lem:smallball-main,lem:tree-mass}}{Lemmas S1.2 and S1.3} show that our prior assigns sufficient mass (i.e.,$\exp(-c N r_{jN,\mathrm{ad}}^2)$) in a small ball around each efficient approximation $g_{j}.$
Finally, \switchref{\Cref{lem:sieve-mass_corrected,lem:entropy,lem:tests}}{Lemmas S1.4--S1.6} construct a sieve with controlled metric entropy and uniformly powerful hypothesis tests, which are used to verify the conditions of \citet{Ghosal2000}'s general theorem for posterior contraction.

The rate $r_N$ in \Cref{thm:contraction-main} improves upon the original VCBART theory (see Theorem 3.1 in \citet{Deshpande2024}) in two key aspects. 
First, it formally handles the high-dimensional setting where the number of covariates $p_N$ grows with $N$, a case not covered by the original analysis that assumes $p_{N}$ is fixed. 
Second, the nonparametric component of the rate adapts to the intrinsic modifier dimension $s_{0,j}$ for each coefficient, which is faster than the VCBART contraction rate $(\log N)\sum_{j \in S_0} N^{-2 \alpha_j/(2 \alpha_j +R)}$ whenever the modifier structure is truly sparse $(s_{0,j}<R)$. 
\Cref{thm:minimax-main} shows that the rate $r_{N}$ from \Cref{thm:contraction-main} is, up to a logarithmic factor, the minimax optimal rate for learning $\bbeta_{0}$ in the empirical $L_2$ norm.
%--------------------------
% Minimax lower bound
%--------------------------
\begin{theorem}
\label{thm:minimax-main}
Under \textbf{(A1)}–\textbf{(A6)}, there exists $c>0$ (independent of $N,R,p_N$) such that
\begin{align*}
& \inf_{\widehat{\bm\beta}}
 \sup_{\bm\beta_0\in\mathcal F}
 \E_{\bm\beta_0}\big\|\widehat{\bm\beta}-\bm\beta_0\big\|_N^{2}
\ge \\
&c\Bigg\{
\frac{d_0\log (p_N/d_0)}{N} 
+ \sum_{j\in S_0} N^{-\frac{2\alpha_j}{2\alpha_j+s_{0,j}}}
\Bigg\},
\end{align*}
where $\mathcal F$ denotes the class in $\textbf{(A4)}$. \end{theorem}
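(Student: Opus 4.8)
The plan is to establish a matching lower bound for each of the two summands separately and then combine them via $\max\{a,b\}\ge (a+b)/2$, using that the class $\mathcal{F}$ of \textbf{(A4)} (in which $S_0$ ranges over all size-$d_0$ supports, each $S_{0,j}$ over all size-$s_{0,j}$ modifier sets, and each active $\tilde\beta_{0,j}$ over the H\"older ball) contains both of the hard sub-families I will construct, so the minimax risk dominates each. For the covariate-selection term $d_0\log(p_N/d_0)/N$, I would take every active coefficient to be a \emph{constant} $\rho$ (which lies in $\mathcal{H}_B^{\alpha_j}$ once $\rho$ is small and which depends on no modifiers, hence trivially meets the modifier constraint) and index hypotheses by their support. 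A standard constant-weight Varshamov--Gilbert argument produces a family $\mathcal{M}$ of $d_0$-subsets of $\{1,\dots,p_N\}$ with $\log|\mathcal{M}|\gtrsim d_0\log(p_N/d_0)$ and $|S\triangle S'|\asymp d_0$ for all distinct $S,S'\in\mathcal{M}$; set $\beta_j^S=\rho\,\ind{j\in S}$. The Gaussian KL identity (for the realized design) gives $\mathrm{KL}(P_{\bm\beta^S}\,\|\,P_{\bm\beta^{S'}})=\tfrac{N}{2\sigma_0^2}\|\bm\beta^S-\bm\beta^{S'}\|_N^2$, which by the upper half of \textbf{(A6)} is $\lesssim N\rho^2 d_0/\sigma_0^2$, while the lower half of \textbf{(A6)} gives $\|\bm\beta^S-\bm\beta^{S'}\|_N^2\gtrsim c_{\min}\rho^2|S\triangle S'|\gtrsim \rho^2 d_0$. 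Choosing $\rho^2\asymp \sigma_0^2\log(p_N/d_0)/N$ makes every pairwise KL a small constant multiple of $\log|\mathcal{M}|$, so Fano's inequality yields a lower bound of order $\rho^2 d_0\asymp d_0\log(p_N/d_0)/N$.

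For the nonparametric term I would instead fix $S_0$ and the modifier sets $S_{0,j}$, set all inactive coefficients to zero, and perturb the active coefficients \emph{simultaneously}. On $[0,1]^{s_{0,j}}$ partition into $m_j^{s_{0,j}}$ sub-cubes of side $1/m_j$ and on each place an independently signed rescaled bump $\pm h_j\psi(m_j\cdot)$ for a fixed smooth $\psi$ compactly supported strictly inside the unit cube; extend each $\beta_j$ to $[0,1]^R$ by ignoring coordinates outside $S_{0,j}$. Taking $h_j=\Theta(m_j^{-\alpha_j})$ with the constant tuned so the H\"older norm stays below $B$ keeps every resulting $\bm\beta$ in $\mathcal{F}$ (disjoint supports make the H\"older seminorm of the sum equal that of a single bump). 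I would then invoke Assouad's lemma over the product hypercube of bump signs: by the k-d regularity \textbf{(A2)} each bump support carries $\asymp Nm_j^{-s_{0,j}}$ design points, so flipping one sign changes the KL by $\lesssim Nh_j^2 m_j^{-s_{0,j}}/\sigma_0^2$ (using $|X_{ij}|\le D$ from \textbf{(A3)}) and moves $\|\bm\beta-\bm\beta_0\|_N^2$ by $\gtrsim c_{\min}c_{\mathrm{kd}}h_j^2 m_j^{-s_{0,j}}$ (lower \textbf{(A6)} with $S=\{j\}$ together with \textbf{(A2)}). Forcing the per-bump KL below a small constant gives $m_j^{2\alpha_j+s_{0,j}}\asymp N$, i.e.\ $h_j^2\asymp N^{-2\alpha_j/(2\alpha_j+s_{0,j})}$; summing Assouad's per-coordinate contributions over the $m_j^{s_{0,j}}$ bumps of coefficient $j$ gives a contribution $\gtrsim h_j^2$ to $\E\|\beta_j-\beta_{0,j}\|_N^2$, and because \textbf{(A6)} yields $\|\bm\beta-\bm\beta_0\|_N^2\gtrsim c_{\min}\sum_{j\in S_0}\|\beta_j-\beta_{0,j}\|_N^2$ (valid since the perturbed support has size $d_0\le Cd_0$), these add up to the claimed $\sum_{j\in S_0}N^{-2\alpha_j/(2\alpha_j+s_{0,j})}$.

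The step I expect to demand the most care is the nonparametric construction rather than the selection one. First, one must verify that the multi-bump perturbations genuinely lie in $\mathcal{H}_B^{\alpha_j}$ for \emph{every} admissible $\alpha_j\in(0,1]$, in particular at the boundary $\alpha_j=1$ where the H\"older condition degenerates into a bound on a true gradient, so the bump profile and the scaling $h_j\asymp m_j^{-\alpha_j}$ must be chosen uniformly over $j\in S_0$ (and in the regime where $m_j$ would round down to a constant one falls back on a cruder two-point bound). Second, the passage between the empirical norm $\|\cdot\|_N$ and Lebesgue volume must be carried out carefully enough that the k-d constants $c_{\mathrm{kd}},C_{\mathrm{kd}}$, the design bound $D$, and the eigenvalue constants $c_{\min},c_{\max}$ enter only the absolute constant $c$ and not the rate; this is where \textbf{(A2)}, \textbf{(A3)}, and \textbf{(A6)} are used in concert. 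Third, one should confirm explicitly that the stated class $\mathcal{F}$ accommodates both hard sub-families — the constant-coefficient family with varying support for the first term and the fixed-support, fixed-modifier bump family for the second. The remaining ingredients (constant-weight codes, the Gaussian KL formula, and Fano's and Assouad's lemmas) are entirely standard and affect only constants.
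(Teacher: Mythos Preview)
Your proposal is correct and uses the same core tools as the paper --- a Gilbert--Varshamov packing plus Fano for the selection term and a multi-bump Assouad construction for the nonparametric term --- but the packaging differs. You treat the two pieces \emph{separately}: two disjoint hard sub-families of $\mathcal F$, two lower bounds, combined at the end by $\max\{a,b\}\ge (a+b)/2$. The paper instead builds a single \emph{product} family: it splits the modifier domain into two disjoint regions via smooth windows $\omega_{\mathrm{sel}},\omega_{\mathrm{np}}$, reserves $m=\lfloor d_0/2\rfloor$ of the active indices for the bump code on $\Omega_{\mathrm{np}}$ and uses the remaining $d_0-m$ slots for a Gilbert--Varshamov support code with constant signals on $\Omega_{\mathrm{sel}}$, so that the selection and nonparametric losses live on disjoint supports and the inf--sup decouples additively (their eq.~(S1.44)). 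Your route is more elementary --- no domain split, no need to keep two constructions from interfering in the KL --- at the price of a harmless factor $1/2$; the paper's route recovers the sum in one shot but needs the smooth cutoffs to preserve H\"older membership while orthogonalizing the two pieces (and it, too, loses a factor $1/2$ since only half of $S_0$ carries bumps). One small remark: since the paper defines $\|\bm\beta-\bm\beta_0\|_N^2=\sum_j\|\beta_j-\beta_{0,j}\|_N^2$ coefficient-wise, your appeals to the \emph{lower} half of \textbf{(A6)} for separation are unnecessary --- that direction is an equality with constant $1$; \textbf{(A6)} is genuinely needed only on the \emph{upper} side, to bound the Gaussian KL (the prediction norm) by the coefficient-wise norm in both constructions.
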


%--------------------------
% Proof trajectory (short)
%--------------------------
We prove \Cref{thm:minimax-main} in \switchref{\Cref{sec:minimax}}{Section S1.4 of the Supplementary Materials}.
The lower bound combines a Gilbert–Varshamov packing over covariate supports (yielding the $N^{-1}(d_0\log (p_N/d_0))$ term) with an Assouad bump function construction \citep[][Lemma 2.12]{Tsybakov2009} in the modifier variables for each active coefficient (yielding the $N^{-2\alpha_j/(2\alpha_j+s_{0,j})}$ terms). 
Under \textbf{(A5)} we have $\log(p_N/d_0) \asymp \log p_N;$ hence the selection term in $r_N$ matches the minimax lower bound’s $N^{-1}d_0 \log(p_N/d_0)$ up to constants. 
The only gap is a benign $\log N$ factor which arises from standard entropy controls for sieve priors over step function partitions, making $r_N$ \emph{near}-minimax rather than exactly minimax. 
The two-term rate in \Cref{thm:minimax-main} has been observed in several other high-dimensional contexts: in particular, \citet{klopp2014} establish an analogous rate for a sparse VCM. 
% For example, \citet{raskutti2009minimaxratesestimationhighdimensional} derive a minimax lower bound for sparse additive models that decouples into a $(d_0 \log p)/N$ subset-selection cost and an $s_0$-dimensional nonparametric estimation cost.
\subsection{Modifier selection consistency}

When fitting VCMs with multiple modifiers ($R > 1$), a key goal is identifying which of them truly drive the non-null effects.
\Cref{thm:modsel} shows that the posterior for the split-variable probabilities $\bm{\theta}_j$ correctly concentrates on the true active modifier set.

\begin{theorem}
\label{thm:modsel}
Assume \textbf{(A1)}–\textbf{(A6)} and \textbf{(P1)}–\textbf{(P3)}. Fix $j\in S_0$ and let
$S_{0,j}$ be the true active modifier set for $\beta_{0,j}$.
Then for any $\varepsilon>0$, as $N\to\infty,$
\[
\Pi \ \Big(\sum_{r\notin S_{0,j}} \theta_{jr} > \varepsilon\ \Bigm \vert \bm{Y}\Big)
\;\longrightarrow\; 0
\quad\text{in $P_{\theta_0}$–probability}. 
\]
\end{theorem}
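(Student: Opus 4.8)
The plan is to combine \Cref{thm:contraction-main} with the conjugacy of the split-probability update and an \emph{efficiency} estimate for the modifier usage of $\calE_j$. First I would work on the high-probability event $\mathcal A_N=\{\|\bm\beta-\bm\beta_0\|_N\le C r_N\}$ supplied by \Cref{thm:contraction-main}; on the sieve from that proof the fitted covariate support also has size $\mathcal O(d_0)$, so \textbf{(A6)} lets me pass from the joint bound to the coordinatewise bounds $\|\beta_{j'}-\beta_{0,j'}\|_N\lesssim r_N$ for every $j'$, and in particular the partial residual that ensemble $\calE_j$ effectively fits is $\beta_{0,j}(\cdot)\,x_{\cdot j}$ plus noise up to an $\mathcal O(r_N)$ perturbation. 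Next I would use that, conditionally on $\calE_j$, the posterior of $\btheta_j$ is $\mathrm{Dirichlet}(\eta_j/R+n_{j1},\dots,\eta_j/R+n_{jR})$, where $n_{jr}$ counts the decision nodes of $\calE_j$ splitting on modifier $r$; hence, given $\calE_j$, $\sum_{r\notin S_{0,j}}\theta_{jr}$ follows a Beta law with total mass $\eta_j+\sum_r n_{jr}$, concentrating around $(\sum_{r\notin S_{0,j}}n_{jr}+(R-s_{0,j})\eta_j/R)/(\eta_j+\sum_r n_{jr})$ once that total mass diverges. Since a large $\eta_j$ makes the Dirichlet nearly uniform and is therefore strongly down-weighted whenever the split counts are concentrated on few coordinates (recall $R=o(\log N)$ by \textbf{(A5)}), the posterior of $\eta_j$ puts negligible mass on $\eta_j\gtrsim K_j$, with $K_j\asymp N r_{jN,\mathrm{ad}}^2/\log N\to\infty$. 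It thus suffices to show that the posterior concentrates on ensembles $\calE_j$ with $\sum_r n_{jr}\asymp K_j$ and, for every fixed $\delta>0$, $\sum_{r\notin S_{0,j}}n_{jr}\le\delta K_j$.

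This efficiency statement is the heart of the argument and rests on two ingredients. First, a converse to the approximation lemma behind \Cref{thm:contraction-main}: under the k-d regularity \textbf{(A2)}, any piecewise-constant ensemble achieving $\|\beta_j-\beta_{0,j}\|_N\lesssim r_N$ for the non-degenerate $\tilde\beta_{0,j}\in\mathcal H_B^{\alpha_j}([0,1]^{s_{0,j}})$ must induce $\gtrsim K_j$ distinct cells along the coordinates of $S_{0,j}$, hence must contain $\gtrsim K_j$ productive splits on $S_{0,j}$. Second, by \textbf{(A4)} the true coefficient is constant in every direction $r\notin S_{0,j}$, so the integrated likelihood gains nothing from a split on such a modifier, whereas the $\mathrm{Dirichlet}(\eta_j/R,\dots)$--$\eta_j$ hierarchy of \citet{Linero2018} charges a multiplicative factor of order $\eta_j/R$ for each additional distinct modifier an ensemble touches and further penalizes piling many splits onto any low--prior--probability coordinate. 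Pairing an ensemble that places $\delta K_j$ of its splits outside $S_{0,j}$ with the ensemble obtained by relabelling those splits to coordinates in $S_{0,j}$---a move that does not worsen the fit by \textbf{(A4)} and can only refine it---then shows the former carries posterior mass smaller by a factor decaying in $K_j$. Crucially, this comparison must be carried out inside the sieve already constructed for \Cref{thm:contraction-main}, whose metric-entropy and testing bounds tame the $e^{\mathcal O(N r_N^2)}$ fluctuation of the likelihood over the relevant function classes.

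Granting the efficiency statement, on $\mathcal A_N$ the Beta law of $\sum_{r\notin S_{0,j}}\theta_{jr}$ has mean $(\delta K_j+\mathcal O(\eta_j))/(K_j+\eta_j)$ and vanishing variance, so choosing $\delta$ small relative to $\varepsilon$ and integrating over the posterior of $(\calE_j,\eta_j)$ gives $\Pi(\sum_{r\notin S_{0,j}}\theta_{jr}>\varepsilon\mid\bm Y)\to 0$ for every $\varepsilon>0$; removing the conditioning on $\mathcal A_N$ costs only $o(1)$. The step I expect to dominate the work is the efficiency claim of the second paragraph: ruling out ensembles that fit $\beta_{0,j}$ well yet squander a constant fraction of their splits on irrelevant modifiers requires the Occam/sparsity penalty attached to those splits to beat the noise-level fluctuation of the likelihood---the same difficulty that forces contraction proofs to use tests rather than likelihood suprema---so the argument has to be threaded through the sieve and entropy bounds of \Cref{thm:contraction-main} rather than a self-contained likelihood-ratio estimate.
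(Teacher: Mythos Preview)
Your skeleton matches the paper's proof closely: work on the contraction event from \Cref{thm:contraction-main}, use that $\sum_{r\notin S_{0,j}}\theta_{jr}\mid\{n_{jr}\},\eta_j\sim\mathrm{Beta}(A,B)$, control the Beta by Chebyshev once the total split count $T_j=\sum_r n_{jr}$ diverges and the waste fraction $\delta_j:=T_j^{-1}\sum_{r\notin S_{0,j}}n_{jr}$ is small, and separately bound the posterior tail of $\eta_j$. The difference lies entirely in how the ``efficiency'' step---ruling out $\delta_j\ge\varepsilon/2$---is executed.

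You propose a relabeling/prior-comparison argument: pair a wasteful ensemble with one in which the wasted splits are redirected into $S_{0,j}$, and show the Dirichlet/$\eta_j$ penalty for touching extra coordinates beats any likelihood gain. As you yourself flag, this is delicate because the relabeled ensemble has a different partition and hence a different (data-dependent) integrated likelihood, so the comparison must be made uniform over the sieve. The paper sidesteps all of this with a purely approximation-theoretic argument: if the ensemble for $\beta_j$ has $T_j$ internal splits and waste $\delta_j$, then the number of distinct cells it can induce in the intrinsic $s_{0,j}$-dimensional space is at most $\prod_{u\in S_{0,j}}(m_u+1)\le(1+T_j^{\mathrm{rel}}/s_{0,j})^{s_{0,j}}$ by AM--GM, so the H\"older lower bound yields $\|\beta_j-\beta_{0,j}\|_N\gtrsim((1-\delta_j)T_j)^{-\alpha_j}$. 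On the contraction event one immediately reads off $T_j^{\mathrm{rel}}=(1-\delta_j)T_j\gtrsim r_{jN,\mathrm{ad}}^{-1/\alpha_j}\to\infty$, and the waste-inflation factor $(1-\delta_j)^{-\alpha_j}\ge 1+c_\star\delta_j$ means ensembles with $\delta_j\ge\varepsilon/2$ cannot attain the adaptive rate; \Cref{thm:contraction-main} then directly gives them vanishing posterior mass. No likelihood ratios, no relabeling map, no fresh entropy control is needed beyond what already went into the contraction theorem. The paper also tightens the $\eta_j$ control: rather than arguing $\eta_j\lesssim K_j$ heuristically, it writes out the Dirichlet--multinomial marginal in $\eta_j$ and uses Stirling to show the posterior tail (conditional on the counts) is $\lesssim T_j^{-\eta_j}(1+\eta_j)^{-2}$, giving $\eta_j=O(\log T_j)$. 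Your route could plausibly be completed, but the approximation-theoretic shortcut removes exactly the step you identified as hardest.
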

We prove \Cref{thm:modsel} in \switchref{\Cref{sec:proof_modifier}}{Section S1.5 of the Supplementary Materials}.
Essentially, the result guarantees that each active ensemble allocates only a vanishing fraction of splits to irrelevant modifiers; a Dirichlet–Beta tail bound then converts a few irrelevant split counts into small posterior mass on $(\sum_{r\notin S_{0,j}}\theta_{jr})$, yielding consistency. 
%This result also applies to the original VCBART model \citep{Deshpande2024}, which uses the identical hierarchical Dirichlet prior for modifier sparsity. The proof is detailed in the Supplementary Section S1.5.
% This forces the empirical fraction of irrelevant splits to concentrate near zero. 
\subsection{Concentration of local scales \texorpdfstring{\(\lambda_j\)}{lambda\_j}}
\label{sec:lambda-concentration}

% \Cref{thm:lambda-shrink} shows that the local scales $\lambda_j$ for truly null coefficients $(j \notin S_0)$ concentrate near zero at the optimal rate $r_N$. This concentration provides a practical screening device: covariates with posterior $\lambda_j$ below a vanishing threshold can be flagged as null with high probability.
\Cref{thm:lambda-shrink} demonstrates that the local scales $\lambda_j$ for truly null coefficients concentrate near zero at the optimal rate $r_N$, providing a practical screening device to identify inactive covariates.

To formalize how our model achieves covariate-level sparsity, we introduce one additional mild assumption. 
\begin{itemize}
\item[\textbf{(A7)}] For a deterministic sequence $\underline{L}_N \rightarrow \infty$, every inactive ensemble \(j\notin S_0\) satisfies (i) its total number of leaves \(L_j \ge \underline{L}_N\) and (ii) its induced partition of $[0,1]^R$ remains k-d regular.
\end{itemize}
% Assumption \textbf{(A7)} is a natural consequence of the BART prior structure. The depth-penalizing tree prior \textbf{(P1)} ensures that ensembles with no true signal to fit are unlikely to grow complex trees, satisfying (i). The k-d regularity of the design \textbf{(A2)} ensures the resulting partitions remain well-behaved, satisfying (ii). Under this condition, the following theorem confirms that the posterior for $\lambda_j$ for any null coefficient concentrates near zero, achieving the parametric variable selection rate.
Assumption \textbf{(A7)} ensures inactive ensembles are informative yet well-behaved: $L_j \rightarrow \infty$ guarantees enough splits so a Gamma–type tail bound in the $\lambda_j$ analysis yields exponential decay, forcing $\lambda_j$ to concentrate near zero.

\begin{theorem}\label{thm:lambda-shrink}
Suppose \textbf{(A1)}–\textbf{(A7)}, and \textbf{(P1)}–\textbf{(P3)} hold, and let \(j\notin S_0\).  Assume also \(\tau\) and \(c^2\) are bounded in probability away from \(0\) and \(\infty\) under the posterior.  Then, for a constant \(C\) independent of \(N\), and letting \(S_j=\sum_{\ell,m}(\mu^{(j)}_{\ell,m})^2\) and \(L_j\) be the leaf count,
\[
\Pi\left(\ \lambda_j \le C\,\sqrt{\frac{MS_j}{L_j}} \Bigm|\ \bm{Y}\right) \longrightarrow\ 1, \
\text{and} \
\frac{S_j}{L_j} = \mathcal O_{\Pi} \ \Big(r_N ^2\Big).
\]
\end{theorem}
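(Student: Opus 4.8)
\textbf{Proof proposal for \Cref{thm:lambda-shrink}.}

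The plan is to exploit the fact that, conditionally on the tree topologies and on $(\tau,c)$, the leaf values in ensemble $j$ are i.i.d.\ $\normaldist{0}{s_j^2}$ with $s_j^2 = \tau^2\lambda_j^2 c^2/(c^2+\tau^2\lambda_j^2)$, so that $S_j = \sum_{\ell,m}(\mu^{(j)}_{\ell,m})^2$ has a scaled chi-squared conditional law with $ML_j$ degrees of freedom, namely $S_j \mid \lambda_j,\tau,c \sim s_j^2\,\chi^2_{ML_j}$ (using \textbf{(A1)} that each tree contributes $\asymp L_j/M$ leaves). The key elementary fact is a two-sided concentration of $S_j/(ML_j)$ around $s_j^2$: by a Gamma/chi-squared tail bound, $\Pi\bigl(S_j \le \tfrac12 ML_j s_j^2 \text{ or } S_j \ge 2ML_j s_j^2 \bigm| \lambda_j,\tau,c\bigr) \le 2\exp(-cML_j)$, which vanishes uniformly by \textbf{(A7)}(i) since $L_j \ge \underline L_N \to \infty$. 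Inverting the lower tail gives $s_j^2 \le 2S_j/(ML_j)$ with posterior probability tending to one; since $s_j^2 = \tau^2\lambda_j^2 c^2/(c^2+\tau^2\lambda_j^2) \ge \tfrac12\tau^2\lambda_j^2$ on the event $\{\tau^2\lambda_j^2 \le c^2\}$ (and $\lambda_j$ will be shown small, so this event has high probability), we get $\lambda_j^2 \lesssim S_j/(\tau^2 ML_j)$, and the assumed boundedness of $\tau$ away from $0$ yields $\lambda_j \le C\sqrt{MS_j/L_j}$ with posterior probability $\to 1$. One must be slightly careful about the direction of the argument here: the bound $s_j^2 \ge \tfrac12\tau^2\lambda_j^2$ holds automatically whenever $\tau^2\lambda_j^2 \le c^2$, and if instead $\tau^2\lambda_j^2 > c^2$ then $s_j^2 \ge c^2/2$ is bounded below by a constant, forcing $S_j/(ML_j) \gtrsim 1$ — but we will rule this out via the second claim, so the two claims must be established jointly or in the right order; I would first prove $S_j/L_j = \mathcal O_\Pi(r_N^2)$ and then feed it back.

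For the second claim, $S_j/L_j = \mathcal O_\Pi(r_N^2)$, the idea is that $S_j$ controls the squared empirical norm of the fitted null coefficient: by \textbf{(A2)} (k-d regularity of the partition, part (ii) of \textbf{(A7)}) the step function $\beta_j = \sum_m g(\cdot;\calT_m^{(j)},\calM_m^{(j)})$ satisfies $\|\beta_j\|_N^2 \asymp N^{-1}\sum_\ell \nx_\ell \mu_\ell^2$ up to the cross-terms between trees, and k-d regularity gives $\nx_\ell \asymp N/L_j$ for each leaf, so $\|\beta_j\|_N^2 \gtrsim S_j/(ML_j)$ (a matching upper bound also holds, giving $\|\beta_j\|_N^2 \asymp S_j/(ML_j)$). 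Since $j\notin S_0$ we have $\beta_{0,j}\equiv 0$, so $\|\beta_j\|_N = \|\beta_j - \beta_{0,j}\|_N$, and \Cref{thm:contraction-main} together with \textbf{(A6)} (the functional restricted-eigenvalue condition, applied to the singleton $S=\{j\}$, or rather to $S_0\cup\{j\}$) implies $\|\beta_j-\beta_{0,j}\|_N = \mathcal O_\Pi(r_N)$ — here \textbf{(A6)} is exactly what converts the joint prediction-error contraction of \Cref{thm:contraction-main} into a per-coefficient empirical-norm bound. Combining, $S_j/L_j \lesssim M\|\beta_j\|_N^2 = \mathcal O_\Pi(M r_N^2) = \mathcal O_\Pi(r_N^2)$ by \textbf{(A1)}.

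The main obstacle, and the step requiring the most care, is the passage from the joint contraction statement $\|\bbeta - \bbeta_0\|_N = \mathcal O_\Pi(r_N)$ to the coordinatewise bound $\|\beta_j - \beta_{0,j}\|_N = \mathcal O_\Pi(r_N)$ for the inactive coordinate $j$: one cannot simply drop terms from $\|\sum_{j'} X_{ij'}(\beta_{j'}-\beta_{0,j'})\|_N^2$ since the summands need not be orthogonal. This is precisely where \textbf{(A6)} must be invoked — applying the lower inequality in \textbf{(A6)} to the support $S = S_0 \cup \{j\}$ (which has size $\le d_0+1 \le Cd_0$ for the constant $C$ in \textbf{(A6)}) gives $c_{\min}\sum_{j'\in S}\|\beta_{j'}-\beta_{0,j'}\|_N^2 \le \|\sum_{j'\in S} X_{ij'}(\beta_{j'}-\beta_{0,j'})\|_N^2 = \|\bbeta-\bbeta_0\|_N^2$, whence $\|\beta_j - \beta_{0,j}\|_N^2 \le c_{\min}^{-1}\|\bbeta-\bbeta_0\|_N^2 = \mathcal O_\Pi(r_N^2)$. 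A secondary subtlety is handling the cross-terms $\sum_{m\ne m'}\langle g_m, g_{m'}\rangle_N$ when relating $S_j$ to $\|\beta_j\|_N^2$; since $M=\mathcal O(1)$ by \textbf{(A1)} this only affects constants, but one should note that the lower bound $\|\beta_j\|_N^2 \gtrsim S_j/(ML_j)$ is the direction actually needed and it can be obtained by a leaf-wise argument on the common refinement of the $M$ tree partitions, which remains k-d regular with $\Theta(L_j)$ cells by \textbf{(A7)}(ii). Finally, one must confirm that the half-Cauchy prior on $\lambda_j$ and the posited boundedness of $(\tau,c^2)$ are enough to make all the "with posterior probability $\to 1$" statements uniform; this is routine given the explicit exponential tail $2\exp(-cML_j)$ and $\underline L_N\to\infty$.
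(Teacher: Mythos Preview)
Your treatment of the second claim ($S_j/L_j=\mathcal O_\Pi(r_N^2)$) matches the paper's: leaf--mass comparability from \textbf{(A7)}(ii) gives $S_j/L_j\lesssim\|\beta_j\|_N^2$, and the functional RE condition \textbf{(A6)} together with \Cref{thm:contraction-main} converts joint contraction into the per--coefficient bound $\|\beta_j\|_N^2=\mathcal O_\Pi(r_N^2)$ for $j\notin S_0$. You also correctly flag the cross--term subtlety when $M>1$.

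The first claim, however, has a genuine gap. Your chi--squared bound $\Pi\bigl(S_j\le\tfrac12 ML_j s_j^2\mid\lambda_j,\tau,c\bigr)\le e^{-cML_j}$ is a statement about the \emph{prior} conditional law of $S_j$ given $\lambda_j$ (equivalently, a likelihood--level concentration of the sufficient statistic). The step ``inverting the lower tail gives $s_j^2\le 2S_j/(ML_j)$ with posterior probability $\to 1$'' is not justified: a uniform--in--$\lambda_j$ bound on $\Pi(S_j\in A\mid\lambda_j)$ does not translate into a bound on $\Pi(\lambda_j\in B\mid\bm Y)$, because the posterior conditional $\Pi(S_j\in A\mid\lambda_j,\bm Y)$ depends on the data through $\mu^{(j)}$ and is \emph{not} the scaled chi--squared you invoke. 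Concentration of a sufficient statistic around the parameter is only half of a Bernstein--von Mises argument; by itself it does not give posterior concentration of the parameter around the statistic.

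The paper conditions in the opposite direction, which is the one that works: since $\lambda_j\perp\bm Y\mid(\mu^{(j)},\tau,c)$, the posterior conditional $\Pi(\lambda_j\mid\mu^{(j)},\tau,c,\bm Y)$ equals the \emph{prior} conditional $\Pi(\lambda_j\mid\mu^{(j)},\tau,c)$, whose density is explicit (half--Cauchy times the Gaussian leaf factor). After changing variables to $v=s_j^2$ and then $U=MS_j/(2v)$, this conditional density is sandwiched between $\mathrm{Gamma}(L_j/2+1/2,\,1)$ kernels on the relevant range, and a Chernoff lower--tail bound for the Gamma gives $\Pi(U<L_j/(2K)\mid\mu^{(j)})\le Ce^{-cL_j}$, i.e.\ $\Pi(v>K\,MS_j/L_j\mid\mu^{(j)})\le Ce^{-cL_j}$. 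Monotonicity of $v\mapsto\lambda(v)$ and boundedness of $\tau$ then yield $\Pi(\lambda_j>C'\sqrt{MS_j/L_j}\mid\mu^{(j)})\le Ce^{-cL_j}$, which is integrated over $\Pi(\mu^{(j)}\mid\bm Y)$ using $L_j\ge\underline L_N\to\infty$ from \textbf{(A7)}(i). The Gamma tail bound thus plays the same role as your chi--squared bound, but it is applied to the correct object --- the conditional posterior density of $\lambda_j$ given the leaf values --- rather than to the forward model of $S_j$ given $\lambda_j$.
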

Assumption \textbf{(A7)} also requires k-d regularity of the induced partition so that each terminal cell carries mass $\asymp 1/L_j;$ this prevents vanishing-mass leaves and gives the uniform leaf-mass bounds used to control $S_j/L_j$ via $\|\beta_j\|_N ^2,$ which is crucial for translating function-level contraction into shrinkage of $\lambda_j$.

We prove \Cref{thm:lambda-shrink} in \switchref{\Cref{sec:proof_local_scale}}{Section S1.6 of the Supplementary Materials}.
The result implies that, at least asymptotically, \texttt{sparseVCBART} can filter out the truly null signals.
To identify those covariates with non-null signals in practice, we recommend thresholding the posterior mean or median of the $\lambda_{j}$'s at some data-dependent level $t_{N} \asymp r_{N}.$
In our experiments, we observed a sharp ``elbow'' when plotting the posterior medians of the $\lambda_{j}$'s, with the scales for active covariates exceeding the scales of the noise covariates; see \switchref{\Cref{fig:real-lambdas}}{Figure S3.2 in the Supplementary Materials}.

\section{Numerical Experiments}
\label{sec:experiments}
We benchmark \texttt{sparseVCBART} against strong parametric, kernel, and tree-ensemble baselines: (1) the original VCBART model (\texttt{vanillaVCBART}); (2) the standard linear model (\texttt{LM}); (3) \citet{LiRacine2010}'s multivariate kernel-smoothing estimator (\texttt{KS}); (4) \citet{Burgin2017}'s method that approximates each $\beta_{j}(\cdot)$ with a single regression tree (\texttt{TVC}); and (5) \citet{ZhouHooker2022}'s boosted tree approach that approximates each coefficient with an ensemble of shallow trees (\texttt{BTVCM}).
Since these competitors are not designed for the high-dimensional regime studied in \Cref{sec:theory}, we focus on establishing \texttt{sparseVCBART}'s finite-sample performance in the $p,R  < N$ setting.
In \Cref{sec:synthetic-experiments} using synthetic data, we demonstrate that \texttt{sparseVCBART} correctly identifies null $\beta_{j}(\cdot)$'s when present but does not inappropriately introduce sparse structure when it is not present. 
We then demonstrate \texttt{sparseVCBART}'s ability to perform variable selection in a semi-synthetic data example in \Cref{sec:semisynth}

\subsection{Synthetic experiments}
\label{sec:synthetic-experiments}

We conduct two experiments with $R = 20$ effect modifiers, one with $p = 3$ covariates (Experiment 1) and one with $p = 50$ covariates (Experiment 2).
In both experiments, we generated 25 synthetic datasets with $N_{\text{train}} = 1000$ training observations and $N_{\text{test}} = 200$ testing observations from a varying coefficient model with (i) i.i.d. covariate vectors $\bx \sim \mvnormaldist{p}{\mathbf{0}_{p}}{\Sigma}$ with $\Sigma_{ij} = 0.5^{\lvert i - j \rvert}$; (ii) i.i.d. modifier vectors $\bz_{i}$ drawn uniformly from $[0,1]^{R}$; (iii) i.i.d. $\normaldist{0}{1}$ errors; and (iv) the following non-null covariate functions
\begin{align*}
\beta_{0}(\bz) &= 3z_{1} + (2 - 5 \ind{z_{2} > 0.5}\sin(\pi z_{1}) - 2 \ind{z_{2} > 0.5}, \\
\beta_{1}(\bz) &= (3 - 3z^{2}_{1}) \times \ind{z_{1} > 0.6} - 10\sqrt{z_{1}} \times \ind{z_{1} < 0.25}, \\
\beta_{2}(\bz) &= 1, \\
\beta_{3}(\bz) &= 10\sin(\pi z_{1}z_{2}) + 20(z_{3} - 0.5)^{2} + 10z_{4} + 5z_{5}.
\end{align*}
We ran each method with the default hyperparameters suggested by their respective authors and computed point estimates and pointwise 95\% uncertainty intervals for the testing set evaluations of the coefficient functions $\beta_{j}(\bz).$ 
For \texttt{KS}, \texttt{TVC}, and \texttt{BTVCM}, we formed these intervals using $50$ bootstrap re-samples.
We used $M_{j} = 50$ per ensemble for \texttt{sparseVCBART} and \texttt{vanillaVCBART} and computed posterior means and 95\% pointwise credible intervals after running four Markov chains for 2000 iterations each, discarding the first 400 samples of each as burn-in. 
All experiments were run on a shared university-based high-throughput computing cluster \citep{chtc_2006}. 

\Cref{fig:expfig} compares the average mean square error for evaluating $\beta_{j}(\bz)$ (panels (a) and (c)) and the average coverage of the pointwise 95\% uncertainty intervals (panels (b) and (d)).
Generally speaking, both BART-based approaches achieved smaller estimation error and higher uncertainty interval coverage than the non-Bayesian methods.
In Experiment 1 (top row \Cref{fig:expfig}) with $p = 3$, \texttt{sparseVCBART} and \texttt{vanillaVCBART} displayed virtually indistinguishable covariate effect recovery and uncertainty quantification.
These findings suggest that in the low-dimensional setting with no null coefficient functions, \texttt{sparseVCBART}'s shrinkage priors do not ``manufacture'' sparsity when it does not exist.

\begin{figure}[t]
  \centering
  \includegraphics[width=\linewidth]{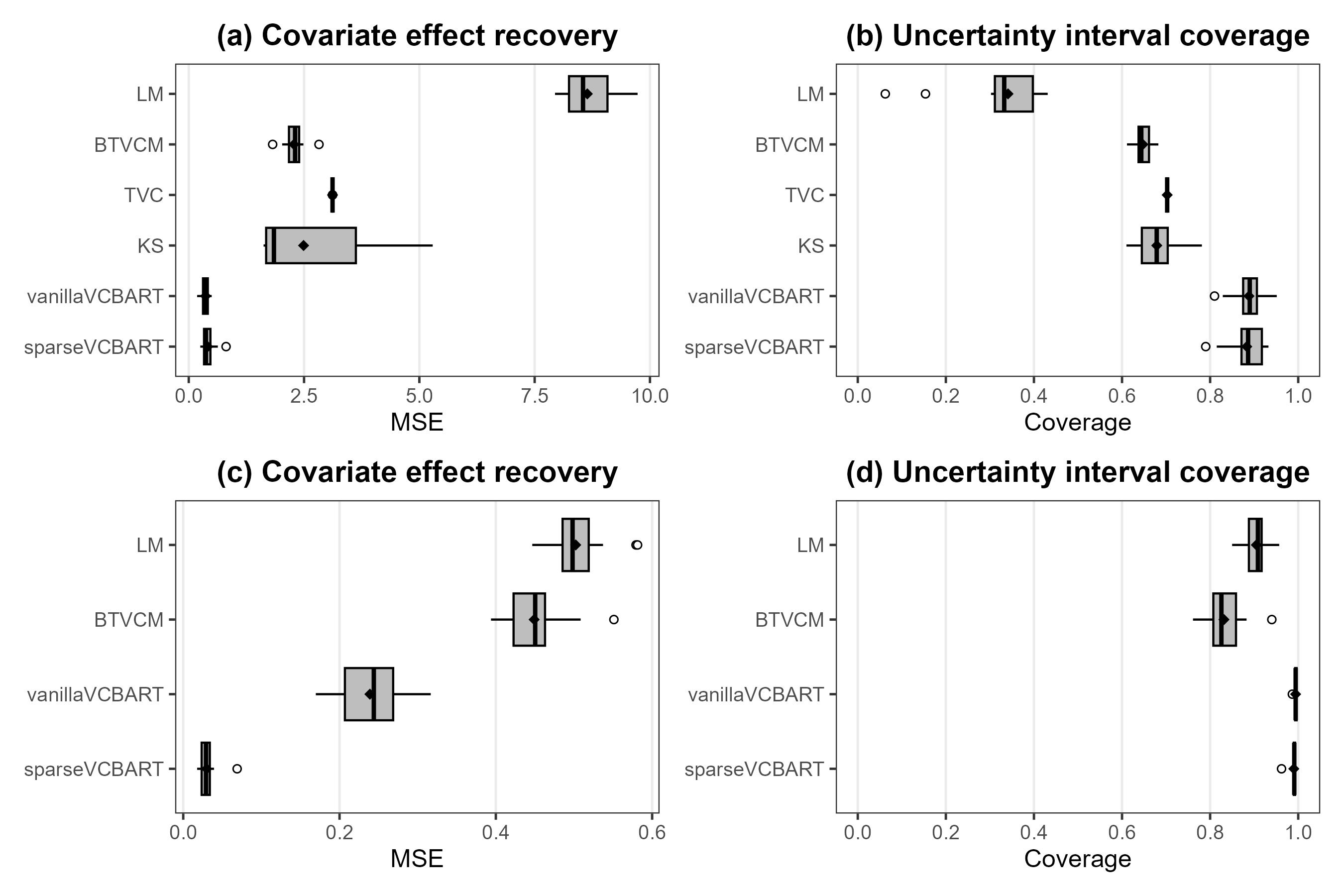}
  \caption{\textbf{Experiment 1 (top row) and Experiment 2 (bottom row)}.
  (a,c) Average MSE for evaluations $\beta_j(\bm z)$; (b,d) average $95\%$ coverage for $\beta_j(\bm z)$.}
  \label{fig:expfig}
\end{figure}

Unlike Experiment 1, in which all covariates had a non-null effect on the outcome, Experiment 2 involves $p = 50$ covariates of which all but the first three have null effects.
In Experiment 2, \texttt{sparseVCBART} achieved much smaller estimation error than \texttt{vanillaVCBART} thanks to the adaptive shrinkage provided by the global-local jump prior. 
Note that we omitted \texttt{KS} and \texttt{TVC} from the bottom row of \Cref{fig:expfig} because we were unable to compute bootstrap uncertainty intervals within our clusters' 72-hour time limit. 

Both \texttt{sparseVCBART} and \texttt{vanillaVCBART} were much faster than \texttt{KS} and \texttt{TVC} in both experiments.
Somewhat surprisingly, \texttt{sparseVCBART} was slightly faster than \texttt{vanillaVCBART} in both experiments.
Averaging across the 25 simulation replications, it took 10.53 seconds and 2.86 minutes to draw from the \texttt{sparseVCBART} posteriors in Experiments 1 and 2.
In contrast, drawing the same number of samples from the \texttt{vanillaVCBART} posterior took 14 seconds and 4.1 minutes, respectively.

\Cref{fig:funrec} shows function recovery from one representative replication of Experiment 2. 
We plot posterior means and 95\% bands for \texttt{sparseVCBART} (blue) and \texttt{vanillaVCBART} (grey) against the truth (solid black) for $(\beta_0), (\beta_1), (\beta_2)$, and one of the null functions \((\beta_4)\). 
For the active functions, \texttt{sparseVCBART} closely tracks \texttt{vanillaVCBART}, confirming that it does not degrade recovery when signals are present. 
On the null, \texttt{sparseVCBART}’s estimate and uncertainty intervals are tightly centered at zero, whereas \texttt{vanillaVCBART}'s estimate and uncertainty intervals ``wiggle” around zero, partially explaining \texttt{sparseVCBART}’s lower MSE in panel (c) of \Cref{fig:expfig}.

\begin{figure}[t]
  \centering
  \includegraphics[width=\linewidth]{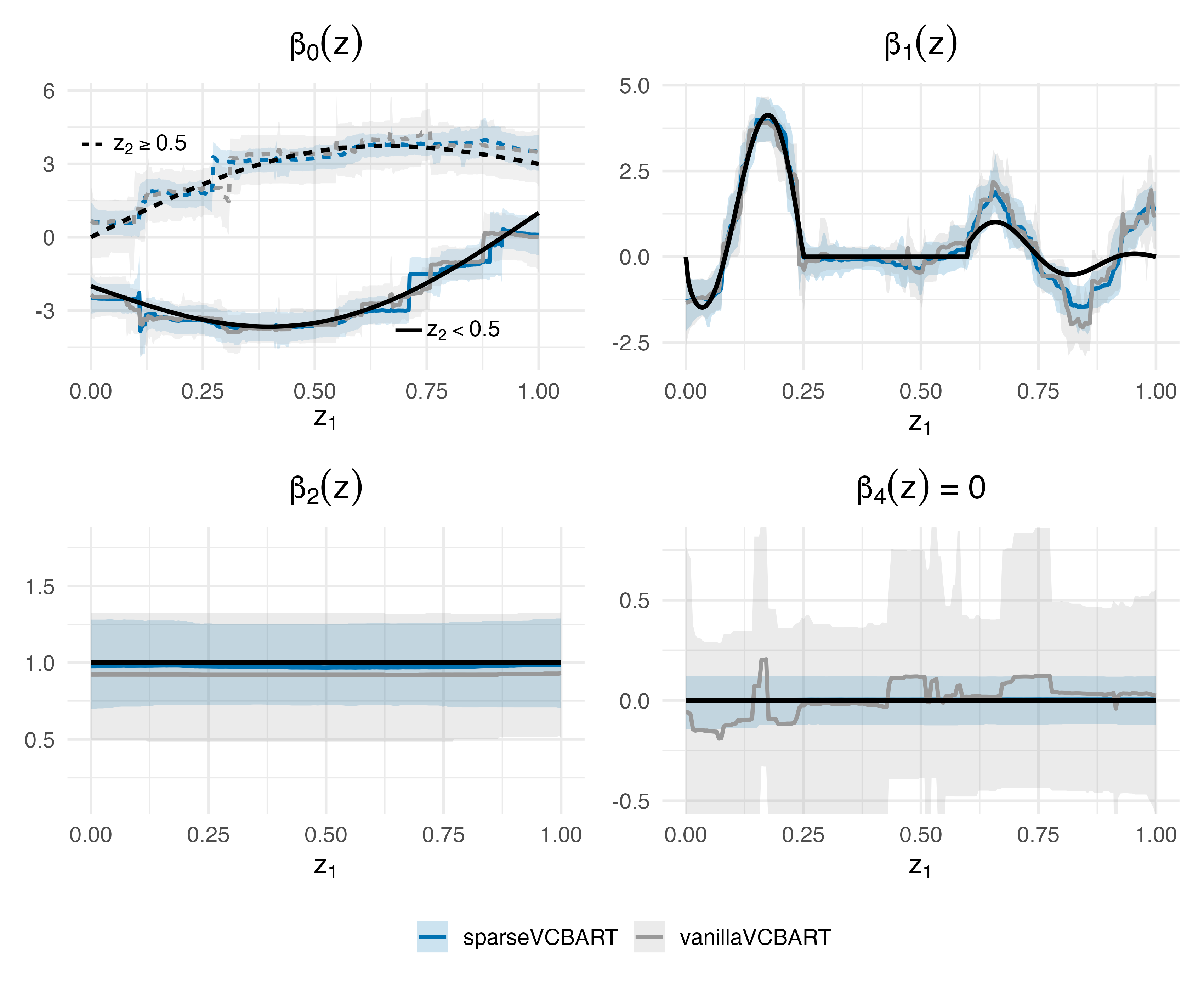}
  \caption{Function recovery in Experiment~2 for $\beta_0$, $\beta_1$, $\beta_2$, and a zero function $(\beta_4)$.}
  \label{fig:funrec}
\end{figure}

\subsection{A semi-synthetic experiment}
\label{sec:semisynth}

Interpersonal conversations can reduce exclusionary attitudes towards outgroups \citep{broockman_kalla16, williamson_etal21}.
\citet{kalla_broockman23} recently evaluated whether the type of conversation affects the reduction of prejudice. 
Their study involved canvassers having conversations with voters who were randomly assigned to one of the three intervention arms.
In the first arm, voters were told a story about an immigrant they knew before being asked to discuss a time when they needed support.
The second arm involved only the first part of the full intervention (i.e., the story telling), but not the second.
The third arm was a placebo in which the canvasser and the voter had a short conversation about unrelated topics.

We downloaded the raw data from \citet{kalla_broockman23}'s study from the Harvard Dataverse Network.
The data set contains $N = 4921$ observations and $R = 99$ effect modifiers that record voter demographics and current political attitudes.
The original data also contains $p = 2$ binary covariates that encode whether voters received the full or partial intervention arm.
Because this was a randomized trial, the coefficients associated with these covariates in a VCM capture the conditional average treatment effects of both interventions relative to placebo.

To assess \texttt{sparseVCBART}'s ability to identify non-null covariate effects, we augmented the data with 18 additional noise covariates generated from a standard Normal distribution.
We found a clear separation between the local scales corresponding to the two original covariates and the additional noise covariates.
Specifically, the posterior median $\lambda_{j}$'s for the intervention indicators exceed 3.5 while the posterior median scales for the null covariates were all less than 1.7; see \switchref{\Cref{fig:real-lambdas}}{Figure S3.2}.
Such separation is consistent with the conclusions of \Cref{thm:lambda-shrink} and suggest that both interventions had a non-null effect on the outcome.
Examining the corresponding vectors of the split probabilities $\btheta_{j}$, we further found that only a handful of effect modifiers drove the heterogeneity of each intervention's effect.
The most prominent were related to the voter's support for allowing undocumented immigrants to access public schools: the more supportive the voters were for inclusive schooling, the more effective both narrative interventions were in reducing prejudice; see \switchref{\Cref{sec:datadesc}}{Section S3.3 in the Supplementary Materials} for a more detailed overview of the data and analysis of effect heterogeneity.

\section{Discussion}
\label{sec:discussion}
We have introduced \texttt{sparseVCBART} for fitting sparse VCMs with potentially more covariates $p$ and/or effect modifiers $R$ than observations $N.$
Our main theoretical results show that, under mild conditions, (i) the \texttt{sparseVCBART} posterior contracts at nearly the minimax-optimal rate that adapts to the underlying sparsity structure and (ii) \texttt{sparseVCBART} consistently identifies which modifiers drive the non-null effects and shrinks null effects to zero. 
Empirically, the model delivers sparse, interpretable representations and strong predictive performance with coherent posterior uncertainty.

A key practical consideration is the choice of ensemble sizes $M_{j}.$ 
While both models generally benefit from larger ensembles, our preliminary sensitivity analyses (see \switchref{\Cref{sec:sensitivity}}{Section S3.2 of the Supplementary Materials}) suggest that \texttt{sparseVCBART} is more robust to a smaller ensemble size.
Beyond determining an optimal ensemble size, possible extensions include sharing variable splitting probabilities across coefficient ensembles, adapting the framework to generalized outcomes (e.g., logistic regression), and establishing formal covariate-selection consistency beyond the shrinkage results presented here.

%A very large $M$ can distribute a signal across too many small leaf values, making it harder for the prior to distinguish from noise and potentially leading to over-shrinkage.

\newpage
\bibliography{refs}

\newpage
\begin{center}
{\large 
\textbf{Supplementary Materials}
}
\end{center}
\appendix 

\renewcommand{\thesection}{S\arabic{section}}
\renewcommand{\thefigure}{\thesection.\arabic{figure}}
\renewcommand{\thetable}{\thesection.\arabic{table}}
\renewcommand{\theequation}{\thesection.\arabic{equation}}
\renewcommand{\thelemma}{\thesection.\arabic{lemma}}
\renewcommand{\thetheorem}{\thesection.\arabic{theorem}}

This supplement provides additional technical and empirical details to support the main paper. 
\Cref{sec:proofs} contains the complete mathematical proofs for our theoretical results. 
\Cref{sec:extraalgorithm} offers a detailed description of the MH-in-Gibbs sampler and its implementation. 
Finally, \Cref{sec:extraempirical} presents additional figures, tables, and analyses from our synthetic and semi-synthetic experiments.
An \textsf{R} package implemented \texttt{sparseVCBART} and code to replicate all experiments are available at \url{https://github.com/ghoshstats/sparseVCBART}.

\setcounter{section}{0}
\setcounter{figure}{0}
\setcounter{equation}{0}
\setcounter{table}{0}
\setcounter{theorem}{0}
\section{Proofs}
\label{sec:proofs}
The proof of Theorem 1 follows the general framework for posterior contraction rates established by \citet{Ghosal2000}, which requires verifying three main conditions. First, we establish sufficient prior concentration in a neighborhood of the true function (\Cref{cor:joint-smallball}). This is achieved by constructing an efficient dimension-adaptive tree approximant (\Cref{lem:approx-main}) and then showing that our priors collectively assign exponentially sufficient mass to this target (\Cref{lem:tree-mass,lem:smallball-main}). Second, we define a sieve of well-behaved functions (\Cref{def:sieve_revised}) and prove that the prior mass outside this sieve is negligible (\Cref{lem:sieve-mass_corrected}) while its metric entropy is appropriately bounded (\Cref{lem:entropy}). Finally, we construct uniformly exponentially consistent hypothesis tests (\Cref{lem:tests}) that can distinguish the truth from alternatives on the sieve. With these three conditions verified, the posterior contraction result follows.
\subsection{Dimension-adaptive H\"older approximation lemma}
Our proof of posterior contraction begins by establishing a non-asymptotic approximation bound. The following lemma shows that any true coefficient function satisfying our sparse H\"older assumption can be well-approximated by a tree ensemble whose complexity depends on the intrinsic modifier dimension $s_{0,j},$ not the ambient dimension $R$.
\begin{lemma}
\label{lem:approx-main}
Fix $j\in S_0$ with modifier set size $s_{0,j}$ and smoothness $0<\alpha_j\le 1$.
Assume $\beta_{0,j}(\bm{z})=\tilde\beta_{0,j}(\bm{z}_{S_{0,j}})$ with
$\tilde\beta_{0,j}\in \mathcal H^{\alpha_j}_B([0,1]^{s_{0,j}})$.
Then there exists a tree ensemble for the $j$th coefficient using
\[
K_j \;=\; \Big\lceil N^{\frac{s_{0,j}}{2\alpha_j+s_{0,j}}}\Big\rceil
\ =\ \mathcal O \Big(\tfrac{N r_{jN,\mathrm{ad}}^2}{\log N}\Big)
\]
terminal leaves (splitting only on $S_{0,j}$) and an associated step function $g_j$ such that
\[
\|\beta_{0,j}-g_j\|_N \;\le\; \tfrac{1}{4}\,r_{jN,\mathrm{ad}}
\qquad\text{for all $N$ large enough.}
\]
\end{lemma}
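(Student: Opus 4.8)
The plan is to prove \Cref{lem:approx-main} by combining a classical piecewise-constant approximation bound for H\"older functions on the intrinsic $s_{0,j}$-dimensional modifier cube with the observation that a k-d-regular design lets us upgrade the naive rectangular-grid bound into one controlled in the empirical $L_2$ norm. Write $s := s_{0,j}$, $\alpha := \alpha_j$, and work with the intrinsic function $\tilde\beta_{0,j} \in \mathcal H^\alpha_B([0,1]^{s})$. First I would fix the number of splits per relevant coordinate: set $m = \lceil N^{1/(2\alpha+s)}\rceil$, so that a balanced dyadic-type (or equal-mass) partition of $[0,1]^{s}$ into $m$ slices per coordinate yields $K_j = m^{s} = \lceil N^{s/(2\alpha+s)}\rceil$ cells, which is exactly the stated leaf count. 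Since each cell is an axis-aligned box in the $s$ relevant coordinates (and unconstrained in the other $R-s$ modifiers), the resulting partition of $[0,1]^R$ can be realized by a decision tree that splits only on coordinates in $S_{0,j}$: one builds it greedily, inserting the $m-1$ thresholds for coordinate $r_1$, then within each resulting block the $m-1$ thresholds for $r_2$, and so on; the total leaf count is $m^{s} = K_j$.

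Next I would define $g_j$ to be the cellwise average of $\tilde\beta_{0,j}$ (equivalently, any value achieving the cellwise sup-inf interpolation — the exact constant does not matter for the rate). The pointwise approximation error on a cell of side-length $\asymp m^{-1}$ is bounded using the H\"older condition: for $\alpha \le 1$, $|\tilde\beta_{0,j}(\bm u) - \tilde\beta_{0,j}(\bm v)| \le B \|\bm u - \bm v\|_\infty^{\alpha} \lesssim B\, m^{-\alpha}$ whenever $\bm u, \bm v$ lie in the same cell. Hence $\|\beta_{0,j} - g_j\|_\infty \lesssim B\, m^{-\alpha} \asymp N^{-\alpha/(2\alpha+s)}$, and a fortiori $\|\beta_{0,j} - g_j\|_N \le \|\beta_{0,j} - g_j\|_\infty \lesssim N^{-\alpha/(2\alpha+s)}$. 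Comparing with $r_{jN,\mathrm{ad}}^2 = (\log N)\, N^{-2\alpha/(2\alpha+s)}$, we get $\|\beta_{0,j}-g_j\|_N \lesssim r_{jN,\mathrm{ad}}/\sqrt{\log N}$, which is $\le \tfrac14 r_{jN,\mathrm{ad}}$ for all $N$ large enough, absorbing the hidden constant into the $\log N$ slack. The leaf-count identity $K_j = \mathcal O(N r_{jN,\mathrm{ad}}^2/\log N)$ is then just arithmetic: $N r_{jN,\mathrm{ad}}^2/\log N = N \cdot N^{-2\alpha/(2\alpha+s)} = N^{s/(2\alpha+s)} \asymp K_j$.

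The one point requiring a little care — and the place I'd expect the main technical friction — is making the step-function error bound genuinely an \emph{empirical} $L_2$ statement compatible with how the tree is later used in the prior-mass lemmas, rather than relying on the crude $\|\cdot\|_N \le \|\cdot\|_\infty$ bound. If one wants the partition cells to carry comparable numbers of design points (so that the same tree serves in \Cref{lem:smallball-main,lem:tree-mass}), one should use the equal-empirical-mass partition instead of the equal-volume one; Assumption \textbf{(A2)} (k-d regularity) guarantees that an equal-volume split and an equal-mass split are interchangeable up to the constants $c_{\mathrm{kd}}, C_{\mathrm{kd}}$, so each cell has volume $\asymp m^{-s}$ and empirical mass $\asymp K_j^{-1}$ simultaneously, and the diameter bound $\asymp m^{-1}$ survives. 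With that, the sup-norm argument above goes through verbatim and additionally certifies that $g_j$ lives on a k-d-regular partition with $K_j$ leaves, which is precisely the object the subsequent small-ball lemmas need. Everything else is bookkeeping: tracking that the constants depend only on $B, s, \alpha$ (all fixed) and not on $N, p_N, R$, and noting that $\alpha \le 1$ is exactly what keeps the one-term H\"older modulus (no Taylor remainder) valid for a piecewise-\emph{constant} approximant.
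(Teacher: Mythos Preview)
Your proposal is correct and follows essentially the same approach as the paper: build an axis-aligned partition of $[0,1]^{s_{0,j}}$ with $\asymp N^{s/(2\alpha+s)}$ cells, bound the sup-norm error via the H\"older modulus and the $\asymp K_j^{-1/s}$ cell diameter, pass to $\|\cdot\|_N\le\|\cdot\|_\infty$, and absorb the constant into the $\sqrt{\log N}$ slack in $r_{jN,\mathrm{ad}}$. The only cosmetic difference is that the paper uses a balanced k-d tree (recursive empirical-median splits cycling through the $s$ coordinates), which hits the exact leaf count $K_j=\lceil N^{s/(2\alpha+s)}\rceil$ rather than your grid's $m^s$, and dovetails directly with the empirical-midpoint cutpoint prior used in the subsequent prior-mass lemma; you correctly flag this alignment in your final paragraph.
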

\begin{proof}
The proof establishes the existence of a suitable tree-based approximator by adapting the logic from \citet[Lemma 3.2]{Rockova2019} for k-d trees to the varying-coefficient setting. 
Let $s:=s_{0,j}$ and write the active coordinates as an index set $S_{0,j}$.
Consider the unit cube $[0,1]^s$ and construct a balanced $s$–dimensional k–d tree
by recursive empirical median splits cycling through the $s$ coordinates until $K_j$ terminal
cells are created. Denote this partition by $\{A_k\}_{k=1}^{K_j}$, $A_k\subset[0,1]^s$.
Lift it to a partition of $[0,1]^R$ by taking
\[
\Omega_k \;:=\; A_k \times [0,1]^{R-s}\subset[0,1]^R,
\qquad k=1,\dots,K_j.
\]
By construction, the partition $\{\Omega_k\}$ only splits along coordinates in $S_{0,j}$.

Let $\mathrm{diam}_\infty(\cdot)$ denote the $\ell_\infty$–diameter. A balanced k-d tree on $[0,1]^s$ splits axis-aligned; each cell $A_k$ is a rectangle $A_k = \prod_{\ell=1}^{s} I_{k,\ell},$ with $|I_{k,\ell}| =: \ell_{k,\ell}$, so $\text{diam}_{\infty}(A_k) = \max_{\ell \le s} \ell_{k,\ell}$.
For such a tree on $[0,1]^s$ with $K_j$ leaves, each coordinate is split
$\lfloor m/s\rfloor$ or $\lceil m/s\rceil$ times when $K_j=2^m$ \footnote{If $K_j$ is not a power of two, pick $m$ with $2^{m-1} < K_j \le 2^m.$ A balanced k-d tree with $2^m$ leaves has side length $\le 2^{-\lfloor m/s \rfloor}$ per active coordinate; merging adjacent cells to reach $K_j$ leaves increases diameters by at most a factor $\le 2,$ hence $\max_{k} \text{diam}_{\infty} (A_k) \le C K_j^{-1/s}.$};
hence $\ell_{k,\ell} \le 2^{-\lfloor m/s\rfloor} \implies \text{diam}_{\infty}(A_k) \le 2 \cdot 2^{-m/s} = 2K_j^{-1/s}.$
Therefore,
\begin{equation}\label{eq:intrinsic-diam}
\max_{1\le k\le K_j}\mathrm{diam}_\infty(A_k)
\;\le\; 2\,K_j^{-1/s},
\end{equation}
and, since $\beta_{0,j}$ is constant in the inactive coordinates,
\begin{equation}\label{eq:intrinsic-diam-extended}
\sup_{\bm{u},\bm{v}\in \Omega_k}\big|\beta_{0,j}(\bm{u})-\beta_{0,j}(\bm{v})\big|
~=~\sup_{\bm{u'},\bm{v'}\in A_k}\big|\tilde\beta_{0,j}(\bm{u'})-\tilde\beta_{0,j}(\bm{v'})\big| .
\end{equation}

Define $g_j$ to be constant on each $\Omega_k$, equal to the Lebesgue average
of $\beta_{0,j}$ on $\Omega_k$:
\[
g_j(\bm{z}) \;:=\; \frac{1}{\mathrm{vol}(\Omega_k)}\int_{\Omega_k}\beta_{0,j}(\bm{u})\,d\bm{u},
\quad \bm{z}\in\Omega_k.
\]
Because $\beta_{0,j}(\bm{u})=\tilde\beta_{0,j}(\bm{u}_{S_{0,j}})$ and $\Omega_k=A_k\times[0,1]^{R-s}$,
this average reduces to the average of $\tilde\beta_{0,j}$ over $A_k$.
For any cell $\Omega_k$ and any $\bm{z}\in\Omega_k$,
\[
\big|\beta_{0,j}(\bm{z})-g_j(\bm{z})\big|
~\le~
\sup_{\bm{u},\bm{v}\in \Omega_k}\big|\beta_{0,j}(\bm{u})-\beta_{0,j}(\bm{v})\big|
~=~ \sup_{\bm{u'},\bm{v'}\in A_k}\big|\tilde\beta_{0,j}(\bm{u'})-\tilde\beta_{0,j}(\bm{v'})\big|,
\]
where the equality is \eqref{eq:intrinsic-diam-extended}. Since
$\tilde\beta_{0,j}\in \mathcal{H}_B^{\alpha_j}([0,1]^s)$, we have the Hölder bound
$|\tilde\beta_{0,j}(\bm{u}')-\tilde\beta_{0,j}(\bm{v}')|\le B\,\|\bm{u}'-\bm{v}'\|_\infty^{\alpha_j}$.
Using \eqref{eq:intrinsic-diam},
\[
\|\beta_{0,j}-g_j\|_\infty
\;\le\; B\Big(\max_k \mathrm{diam}_\infty(A_k)\Big)^{\alpha_j}
\;\le\; C_2\,K_j^{-\alpha_j/s}.
\]
Therefore the empirical $L_2$–error also satisfies
\begin{equation}\label{eq:empL2-bound}
\|\beta_{0,j}-g_j\|_N \;\le\; \|\beta_{0,j}-g_j\|_\infty
\;\le\; C_2\,K_j^{-\alpha_j/s}.
\end{equation}
Take
\(
K_j = \Big\lceil N^{\,s/(2\alpha_j+s)} \Big\rceil
,\) 
then $K_j^{-\alpha_j/s}\le C_3\,N^{-\alpha_j/(2\alpha_j+s)}$ for a constant $C_3$ ($\because \lceil x\rceil \ge x$ and the map $t \mapsto t^{-\alpha_j/s}$ is monotonically decreasing on $t>0$),
and by \eqref{eq:empL2-bound},
\[
\|\beta_{0,j}-g_j\|_N \;\le\; C_4\,N^{-\alpha_j/(2\alpha_j+s)}.
\]
Since $r_{jN,\mathrm{ad}}=(\log N)^{1/2}N^{-\alpha_j/(2\alpha_j+s)}$, we have for $C_4 \le \frac{1}{4} \sqrt{\log N}$, $C_4\,N^{-\alpha_j/(2\alpha_j+s)} \le \tfrac14\,r_{jN,\mathrm{ad}}$ for all $N\ge N_0 \coloneqq \exp((4C_4)^2)$, because $(\log N)^{1/2}\to\infty$.
Finally,
\[
\frac{N\,r_{jN,\mathrm{ad}}^2}{\log N}
~=~ N^{\,1-2\alpha_j/(2\alpha_j+s)}
~=~ N^{\,s/(2\alpha_j+s)}
~\asymp~ K_j,
\]
so $K_j=\mathcal{O}(N r_{jN,\mathrm{ad}}^2/\log N)$.

Similar to the proof of Theorem 2 in \citet{jeong2023artbartminimaxoptimality}, we assign all $K_j$ terminal leaves to a single tree in the ensemble $E_j^\star$
and let the remaining $M-1$ trees be stumps. The total number
of leaves is $K_j$, and $g_j$ is realized as a step function represented by $E_j^\star$.
This construction uses splits only on coordinates in $S_{0,j}$. Combining the steps gives the claim.
\end{proof}

\subsection{Dimension-adaptive small-ball probability}
Having established the existence of an efficient approximation, we now verify the second key condition for posterior contraction: the prior must assign sufficient mass to a “small ball’’ around the true function. The following result confirms that our combined priors on tree structure, modifier selection, and leaf values concentrate enough mass near the sparse target.

\begin{lemma}
\label{lem:smallball-main}
Let $j\in S_0$ and define
\[
B_j(r)\ :=\ \{\beta_j:\ \|\beta_j-\beta_{0,j}\|_N\le r\},
\qquad r=r_{jN,\mathrm{ad}}.
\]
Under \textbf{(A1)}–\textbf{(A5)} and \textbf{(P1)}–\textbf{(P3)}, there exist constants $C_1,N_0>0$ (independent of $N,R,p_N$) such that
\[
\Pi\big(B_j(r_{jN,\mathrm{ad}})\big)
\;\ge\;
\exp\!\big\{-C_1\,N\,r_{jN,\mathrm{ad}}^2\big\},
\quad \forall\,N\ge N_0.
\]
\end{lemma}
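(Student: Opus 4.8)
The plan is to decompose the small-ball event over the three independent components of the prior: the tree topology, the modifier-split coordinates, and the leaf values (via the global--local scale). Write $\Pi(B_j(r_{jN,\mathrm{ad}})) \ge \Pi(\mathcal{A}_{\mathrm{top}} \cap \mathcal{A}_{\mathrm{split}} \cap \mathcal{A}_{\mathrm{leaf}})$, where $\mathcal{A}_{\mathrm{top}}$ is the event that the tree topology in ensemble $\calE_j$ matches (or refines) the balanced k-d partition underlying the approximant $g_j$ from \Cref{lem:approx-main}, $\mathcal{A}_{\mathrm{split}}$ is the event that all split coordinates in that tree lie in $S_{0,j}$ and are arranged in the cyclic order producing the k-d cells, and $\mathcal{A}_{\mathrm{leaf}}$ is the event that the leaf outputs are within an $L_\infty$-ball of radius $\asymp r_{jN,\mathrm{ad}}/\sqrt{K_j}$ (or $r_{jN,\mathrm{ad}}$ after accounting for $\|\cdot\|_N \le \|\cdot\|_\infty$) of the cell-averages defining $g_j$. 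By the triangle inequality, $\|\beta_j - \beta_{0,j}\|_N \le \|\beta_j - g_j\|_N + \|g_j - \beta_{0,j}\|_N$; \Cref{lem:approx-main} controls the second term by $\tfrac14 r_{jN,\mathrm{ad}}$, so it suffices to make the first term at most $\tfrac34 r_{jN,\mathrm{ad}}$, which on $\mathcal{A}_{\mathrm{top}} \cap \mathcal{A}_{\mathrm{split}}$ reduces to a bound on the leaf-deviation.

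Next I would lower-bound each of the three probabilities separately, since the three prior components are \emph{a priori} independent. For $\Pi(\mathcal{A}_{\mathrm{top}})$: under \textbf{(P1)} the depth-penalized Galton--Watson prior assigns to a specific full binary tree of $K_j \asymp N^{s_{0,j}/(2\alpha_j+s_{0,j})}$ leaves a probability that is a product over internal and leaf nodes of $\gamma^{\mathrm{depth}}$ and $(1-\gamma^{\mathrm{depth}})$ factors; the balanced k-d tree has depth $\asymp \log_2 K_j \asymp \log N$, and a careful bookkeeping (as in \citet{rockovasaha} and \citet{Rockova2019}) gives $\log \Pi(\mathcal{A}_{\mathrm{top}}) \gtrsim -K_j \log(1/\gamma) \gtrsim -K_j \log N \asymp -N r_{jN,\mathrm{ad}}^2$; here the lower bound $\gamma > N^{-1}$ in \textbf{(P1)} is exactly what keeps $\log(1/\gamma) = \mathcal{O}(\log N)$ and prevents this term from blowing up. For $\Pi(\mathcal{A}_{\mathrm{split}})$: with $K_j - 1$ internal nodes, each needing a split variable drawn from $\mathrm{Multinomial}(\btheta_j)$, the probability of hitting the right coordinate in $S_{0,j}$ every time is $\ge \big(\min_{r\in S_{0,j}} \theta_{jr}\big)^{K_j-1}$ integrated against the $\mathrm{Dirichlet}(\eta_j/R)$ prior on $\btheta_j$; a standard Dirichlet small-ball estimate (concentrating $\btheta_j$ near the uniform-on-$S_{0,j}$ vector, using the $\mathrm{Beta}(1,1/2)$ hyperprior on $\eta_j/(\eta_j+R)$ to put non-negligible mass on moderate $\eta_j$) yields $\log \Pi(\mathcal{A}_{\mathrm{split}}) \gtrsim -K_j \log R \gtrsim -K_j \log N \asymp -N r_{jN,\mathrm{ad}}^2$, using $R = o(\log N)$ from \textbf{(A5)}. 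For $\Pi(\mathcal{A}_{\mathrm{leaf}})$: conditionally on $\tau, \lambda_j, c$, the $K_j$ leaves are i.i.d. $\normaldist{0}{s_j^2}$; on the event that $s_j \asymp $ a suitable constant (which has non-negligible prior probability under the half-Cauchy priors on $\lambda_j, \tau$ and the inverse-gamma on $c^2$, noting that $g_j$ is uniformly bounded by $B$ so the target leaf values lie in a fixed compact set), the Gaussian small-ball probability of an $L_\infty$-ball of radius $\rho \asymp r_{jN,\mathrm{ad}}/\sqrt{K_j} \asymp \sqrt{\log N}\cdot N^{-\alpha_j/(2\alpha_j+s_{0,j})}/\sqrt{K_j}$ around a bounded center is at least $(\rho/s_j)^{K_j}\exp(-\mathrm{const}\cdot K_j) \ge \exp(-C K_j \log(1/\rho)) \gtrsim \exp(-C K_j \log N) \asymp \exp(-C N r_{jN,\mathrm{ad}}^2)$.

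Multiplying the three bounds and absorbing constants gives $\Pi(B_j(r_{jN,\mathrm{ad}})) \ge \exp(-C_1 N r_{jN,\mathrm{ad}}^2)$ for $N$ large, as claimed. The main obstacle I anticipate is the leaf-value step: one must verify that the \emph{regularized} horseshoe scale $s_j^2 = \tau^2\lambda_j^2 c^2/(c^2+\tau^2\lambda_j^2)$ can be made of constant order with enough prior probability \emph{uniformly in $j$}, which requires coordinating the half-Cauchy tail on $\lambda_j$ against the prior on $\tau\mid d$ in \textbf{(P2)} — recall $A_{g,d} = \Theta(d p_N^{-1}(\log p_N)^{-2}(\log N)^{-2})$ is tiny, so the needed order-one value of $\tau\lambda_j$ comes entirely from the heavy right tail of $\lambda_j\sim\mathcal{C}^+(0,1)$, and the half-Cauchy density at scale $\asymp 1/(\tau_0 A_{g,d})$ contributes only a polynomial-in-$(p_N, \log N)$ penalty, i.e. $\log(\cdot) = \mathcal{O}(\log p_N + \log\log N) = o(N r_{jN,\mathrm{ad}}^2)$ by \textbf{(A5)}, so it is absorbed. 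A secondary subtlety is ensuring the radius $\rho$ for the leaf ball is large enough that $\|\beta_j - g_j\|_N \le \tfrac34 r_{jN,\mathrm{ad}}$ holds deterministically on $\mathcal{A}_{\mathrm{leaf}}$ — this uses $\|\cdot\|_N \le \|\cdot\|_\infty$ and the k-d regularity \textbf{(A2)} to relate the number of data points per cell to $1/K_j$, but introduces no new rate.
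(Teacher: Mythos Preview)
Your approach is essentially the same as the paper's: decompose the small-ball into (i) tree structure realizing the $K_j$-leaf k-d approximant $g_j$ from \Cref{lem:approx-main}, (ii) a calibration event on the global--local scales that pins $s_j^2$ to a constant-order interval, and (iii) a Gaussian small-ball bound on the $K_j$ leaf means around the targets; then multiply and absorb the polynomial-in-$(p_N,\log N)$ penalty from step~(ii) using $\log p_N=o(Nr_{jN,\mathrm{ad}}^2)$ under \textbf{(A5)}. Your identification of the main obstacle---that $\tau$ has scale $A_{g,d}\to0$ so one must ride the heavy right tail of $\lambda_j\sim\mathcal C^+(0,1)$ to reach $\tau\lambda_j\asymp 1$, at a cost $\asymp A_{g,d}$ which is only polynomial---is exactly the paper's calibration-event argument.

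One bookkeeping gap: your three-way split omits the \emph{cutpoints}. The event $\mathcal A_{\mathrm{top}}$ as you describe it (``topology matches the balanced k-d partition'') cannot be an event about topology alone; matching a specific partition of $[0,1]^R$ requires topology \emph{and} split variables \emph{and} cutpoint values. The paper handles this in \Cref{lem:tree-mass} by noting that under \textbf{(P1)} each cutpoint is drawn uniformly from at most $N$ admissible empirical midpoints, so hitting the exact target midpoint costs $\ge N^{-1}$ per internal node, i.e.\ another $\exp\{-c\,K_j\log N\}$ factor---same order as your topology and split-variable terms, hence harmless. You should either add a fourth component $\mathcal A_{\mathrm{cut}}$ or fold cutpoints into $\mathcal A_{\mathrm{top}}$ and account for the $N^{-(K_j-1)}$ cost explicitly; otherwise your $\Pi(\mathcal A_{\mathrm{top}})$ bound, which only multiplies Galton--Watson factors $\gamma^{\mathrm{depth}}$ and $(1-\gamma^{\mathrm{depth}})$, does not actually control the event you need. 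A minor secondary point: the paper also intersects with $\{d=d_\star\}$ (from \textbf{(P2)}) in its calibration event, contributing a $1/d_\star!$ factor that is likewise absorbed; you should include this when you coordinate $\tau\mid d$ against $\lambda_j$.
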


We first control the prior mass of the specific $K_j$–leaf tree realizing the approximation from \Cref{lem:approx-main}.

\begin{lemma}[Prior mass of the dimension–adaptive target tree]
\label{lem:tree-mass}
Fix $j\in S_0$ and write $s:=s_{0,j}$. Let $r_{jN,\mathrm{ad}}=(\log N)^{1/2}N^{-\alpha_j/(2\alpha_j+s)}$ and let
$E_j^\star$ be the ensemble from \Cref{lem:approx-main} that realizes a step function $g_j$
using a single $K_j$–leaf tree that splits only on coordinates in $S_{0,j}$, with
\(
K_j=\Big\lceil N^{\,s/(2\alpha_j+s)}\Big\rceil=\mathcal{O} \left({N\,r_{jN,\mathrm{ad}}^2}/{\log N}\right).
\)
Assume {\textbf{(P1)}}’s split proposal chooses a cutpoint \emph{uniformly} from the set of admissible empirical midpoints within a node. Under {\textbf{(A2)}} and {\textbf{(P1)}}, there exists $c>0$ (independent of $N,R,p_N$) such that
\[
  \Pi \ \!\bigl(\text{topology+split-variables+cutpoints of }E^{\!*}_{j}\bigr)
  \;\ge\;
  \exp \bigl\{-c\,K_j\log N\bigr\}
  \;=\;
  \exp\bigl\{-c'\,N r_{jN,\mathrm{ad}}^{2}\bigr\}.
\]
\end{lemma}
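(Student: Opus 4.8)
The plan is to lower-bound $\Pi\bigl(\text{topology}+\text{split-variables}+\text{cutpoints of }E_j^\star\bigr)$ by exploiting the staged structure of the decision-tree prior — first the graphical structure, then the split coordinates at the internal nodes, then the cutpoints — and to bound each stage from below by $\exp\{-c\,K_j\log N\}$ for an absolute constant. Recall from \Cref{lem:approx-main} that $E_j^\star$ is a single tree $\calT^\star$ with $K_j$ leaves, arranged as a balanced k-d tree that cycles through the $s:=s_{0,j}$ coordinates of $S_{0,j}$ using empirical-median cutpoints, together with $M-1$ stumps. Since $M=\mathcal{O}(1)$, the stumps contribute only an $N$-free constant (each stump has topology probability bounded away from $0$ and carries no split variables or cutpoints), so it suffices to treat $\calT^\star$ and then multiply the three stage-wise bounds.

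For the \emph{topology}, the graph of $\calT^\star$ has $K_j-1$ internal nodes and $K_j$ leaves at depths $\Theta(\log K_j)$, so under the Galton–Watson prior \textbf{(P1)}, $\Pi(\calT^\star)=\prod_{v\ \mathrm{int}}\gamma^{d(v)}\cdot\prod_{v\ \mathrm{leaf}}\bigl(1-\gamma^{d(v)}\bigr)$. Because $\calT^\star$ is balanced, $\sum_{v\ \mathrm{int}}d(v)=\Theta(K_j\log K_j)$, so the first product equals $\gamma^{\Theta(K_j\log K_j)}=\exp\{-\Theta(K_j\log K_j)\}\ge\exp\{-c_1K_j\log N\}$ using $K_j\le N$; the second product is bounded below by an absolute constant, since leaves sit at depth $\gtrsim\log_2 K_j$, so $\gamma^{d(v)}\le 2^{-d(v)}\lesssim K_j^{-1}$ and $\sum_{v\ \mathrm{leaf}}\log(1-\gamma^{d(v)})\gtrsim-K_j\cdot K_j^{-1}$. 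For the \emph{split coordinates}, given $\calT^\star$ and $\btheta_j$ the coordinates are i.i.d.\ $\mathrm{Multinomial}(\btheta_j)$, and $\calT^\star$ prescribes a multiset $(n_r)_{r\in S_{0,j}}$ with $\sum_r n_r=K_j-1$ and, for $N$ large, each $n_r\ge 1$ (as $K_j\gtrsim N^{1/3}$ while $2^{s}=N^{o(1)}$). Marginalizing $\btheta_j\mid\eta_j\sim\mathrm{Dir}(\eta_j/R,\dots,\eta_j/R)$ gives the Dirichlet-moment identity $\E[\prod_r\theta_{jr}^{n_r}]=\tfrac{\Gamma(\eta_j)}{\Gamma(\eta_j+K_j-1)}\prod_{r\in S_{0,j}}\tfrac{\Gamma(\eta_j/R+n_r)}{\Gamma(\eta_j/R)}$, and on the event $\{\eta_j/(\eta_j+R)\in[\tfrac14,\tfrac34]\}$ — which has positive prior mass uniformly in $N,R$ under the $\mathrm{Beta}(1,\tfrac12)$ hyperprior — we have $\eta_j\asymp R$ and $a:=\eta_j/R\asymp 1$. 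Then the first factor is $\ge(\eta_j+K_j)^{-(K_j-1)}\ge\exp\{-c_2K_j\log N\}$, because $R=o(\log N)\ll K_j$ by \textbf{(A5)} forces $\log(\eta_j+K_j)\lesssim\log N$, while the product over $r$ is $\ge\min(a,1)^{s}\ge\exp\{-\mathcal{O}(\log N)\}$ since $s\le R=o(\log N)$; integrating over the hyperprior yields $\Pi(\text{split coords}\mid\calT^\star)\ge\exp\{-c_3K_j\log N\}$.

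For the \emph{cutpoints}, conditional on the topology and split coordinates each internal node draws a cutpoint uniformly over the admissible empirical midpoints in its cell; by k-d regularity \textbf{(A2)} and $K_j=o(N)$ every internal node of $\calT^\star$ contains at least two data points, so at least one admissible midpoint exists and at most $N$ do, whence the prescribed cutpoint is realized with probability $\ge 1/N$ per node and $\Pi(\text{cutpoints}\mid\cdots)\ge N^{-(K_j-1)}=\exp\{-(K_j-1)\log N\}$. Multiplying the three bounds gives $\Pi(E_j^\star)\ge\exp\{-c\,K_j\log N\}$, and the identity $N r_{jN,\mathrm{ad}}^2/\log N=N^{s/(2\alpha_j+s)}\asymp K_j$ from \Cref{lem:approx-main} converts this to $\exp\{-c'N r_{jN,\mathrm{ad}}^2\}$, proving the claim.

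The main obstacle is the split-coordinate bound: the hierarchical Dirichlet–Beta prior has a \emph{random} concentration parameter $\eta_j$ and the ambient modifier count $R$ grows with $N$, so one must first localize $\eta_j$ on an event of constant prior probability and then push the Dirichlet-moment computation through carefully enough that the exponent remains $\Theta(K_j\log N)$ rather than, say, $\Theta(K_j\log R\cdot\log N)$ — this is precisely where $R=o(\log N)$ is used, to absorb $\log R$ and $\log(\eta_j+K_j)$ into $\log N$. A secondary but genuinely needed point is checking, via \textbf{(A2)} and $K_j=o(N)$, that every node of the balanced k-d tree keeps $\ge 2$ data points so that the empirical-median split, hence a valid admissible cutpoint, is available at every internal node.
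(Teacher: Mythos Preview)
Your proof is correct and follows the same three–stage factorization (topology $\times$ split variables $\times$ cutpoints) as the paper. The topology and cutpoint steps are essentially identical to the paper's, with your leaf-product bound $\prod_{\mathrm{leaves}}(1-\gamma^{d(v)})\gtrsim e^{-1}$ slightly sharper than the paper's cruder $(1-\gamma)^{K_j}$, and your explicit check (via \textbf{(A2)} and $K_j=o(N)$) that each internal node retains $\ge 2$ design points is a detail the paper omits.

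The genuine difference is in the split-variable step. The paper bounds only the event ``all $m_j$ split coordinates fall in $S_{0,j}$'' via Jensen on $V^{m_j}$ with $V=\sum_{r\in S_{0,j}}\theta_{jr}$, obtaining $(s/R)^{m_j}$; this is the probability of a strictly larger event than the specific split sequence prescribed by the balanced k--d tree, so taken literally it does not quite deliver the mass of the \emph{exact} $E_j^\star$. You instead compute the Dirichlet moment $\E\bigl[\prod_r\theta_{jr}^{n_r}\bigr]$ directly and localize $\eta_j$ on $\{\eta_j/(\eta_j+R)\in[1/4,3/4]\}$, which targets the specific sequence and is therefore the more careful argument. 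What the paper's route buys is simplicity (one Jensen step, no hyperparameter localization, no Gamma-ratio bookkeeping); what yours buys is a clean match to the lemma's statement about the \emph{particular} tree $E_j^\star$. Both land at $\exp\{-cK_j\log N\}$ after using $R=o(\log N)$.

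One small caveat: your claim that the $M-1$ stumps ``carry no split variables or cutpoints'' presumes single-leaf trees, which under the literal reading of \textbf{(P1)} (root at depth $0$, split probability $\gamma^0=1$) have prior mass zero. This is a wrinkle the paper shares (it too invokes ``stumps'' in \Cref{lem:approx-main}); it is harmless since replacing stumps by depth-one trees costs only a factor $\exp\{-\mathcal{O}(\log N)\}$ from $M-1=\mathcal{O}(1)$ extra split variables and cutpoints, which is absorbed.
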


\begin{proof}
Let $m_j=K_j-1$ be the number of internal nodes. We factor the mass into three pieces:

\medskip\noindent\emph{(i) Topology.}
Under the Galton–Watson (GW) depth–decay prior of {\textbf{(P1)}} with $\mathbb P(\text{split at depth }d)=\gamma^d$ and $N^{-1}<\gamma<1/2$, the GW tail bound of \citet{rockovasaha} (Sec.\ 5) yields, for any fixed $K$–leaf full binary topology $\mathcal T$ with the set of internal nodes being $\mathcal{I}$, and the set of leaves being $\mathcal{L}$, 
\begin{equation}\label{eq:topology-mass-fixed}
\begin{split}
\Pi(\text{topology =}\mathcal{T}) & = \prod_{v \in \mathcal{I}} \gamma^{\text{depth}(v)} \prod_{u \in \mathcal{L}} (1-\gamma^{\text{depth}(u)}) \\
& \ge (1-\gamma)^{K} \gamma^{\sum_{v \in \mathcal{I}}\text{depth}(v)} \\
& = (1-\gamma)^{K} \gamma^{\Theta(K \log K)} \\
& \ge \exp(-c_1 K \log K)
\end{split}
\end{equation}
With $K=K_j \asymp N^{s/(2\alpha_j+s)} \le N,$ we have $\log K \le \log N$ and thus $\exp(-c_1 K \log K) \ge \exp(-c_1 K \log N).$ This proves that $\Pi(\text{topology of} \ E_j ^\star) \ge \exp\{-c_1K_j \log N\}.$

\medskip\noindent\emph{(ii) Split variables restricted to $S_{0,j}$.}
Let $\bm{\theta}_j\sim\mathrm{Dir}(\eta_j/R,\ldots,\eta_j/R)$ with $\eta_j\sim\pi(\eta)\propto (R+\eta)^{-(R+1)}$. As also noted in \citet{Deshpande2024}, we define $V:=\sum_{u\in S_{0,j}}\theta_{ju}$. Conditional on $\bm{\theta}_j$, the probability that all $m_j$ split variables lie in $S_{0,j}$ equals $V^{m_j}$. This can be explained by observing the fact that conditional on $\bm{\theta}_j = (\theta_{j1},\dots,\theta_{jR}),$ the split variable at node $\ell$ is a categorical draw $U_{\ell} \vert \bm{\theta}_j \sim \text{Cat}(\theta_{j1},\dots,\theta_{jR}),$ independently across nodes. Then for any node $\ell$, $\mathbb{P}(U_{\ell} \in S_{0,j} \vert \bm{\theta}_j) = \sum_{u \in S_{0,j}} \theta_{ju} = V.$ By the conditional independence across the $m_j$ nodes, we have $\mathbb{P}(U_1 \in S_{0,j},\dots,U_{m_j} \in S_{0,j} \vert \bm{\theta}_j) = \prod_{\ell=1}^{m_j} \mathbb{P}(U_{\ell} \in S_{0,j} \vert \bm{\theta}_j) = V^{m_j}$ which is the claimed equality.
Thus taking expectation w.r.t. $\bm{\theta}_j$ gives 
\[
\Pi(\text{all $m_j$ split variables in }S_{0,j}) \ =\ \mathbb E[V^{m_j}].
\]
For each fixed $\eta_j$, $V\sim\mathrm{Beta}(a_S,a_0-a_S)$ with $a_S=s\,\eta_j/R$ and $a_0=\eta_j$, hence by Jensen (since $x\mapsto x^{m_j}$ is convex on $[0,1]$ for $m_j\ge1$),
\[
\mathbb E(V^{m_j}\mid\eta_j)\ \ge\ \big(\mathbb E(V\mid\eta_j)\big)^{m_j} \ =\ \Big(\frac{s}{R}\Big)^{m_j}.
\]
Taking expectation over $\eta_j$ preserves the bound:
\begin{equation}\label{eq:splitvar-mass-fixed}
\begin{split}
\Pi(\text{all split variables in }S_{0,j}) & = \mathbb{E}_{\eta_j}\mathbb{E}(V^{m_j} \vert \eta_j) \\
&\ge\ \Big({s}/{R}\Big)^{m_j} \\
& = \exp \{m_j \log(s/R)\}\\
&\ge\ \exp\{-c_2\,K_j\log(R/s)\} \quad (\because m_j =K_j-1 \asymp K_j) 
\end{split}
\end{equation}
Since {\textbf{(A5)}} assumes $R=o(\log N)$, the factor $K_j\log(R/s)$ is $o\big(K_j\log N\big)$ and will be absorbed by the topology/cutpoint terms.

\medskip\noindent\emph{(iii) Cutpoints.}
By the stated cutpoint rule,\footnote{Conditional on a chosen split variable and the set $\{Z_{iu}: \ i \in \text{node}\}$, we draw the cutpoint uniformly from the set of admissible empirical midpoints $\{(z_{(r)}+z_{(r+1)})/2\}$ between consecutive order statistics that yield non-empty children.} at each internal node the cutpoint is drawn uniformly from a finite grid of admissible empirical midpoints (at most $N$ options). Therefore, the probability to hit the \emph{exact} midpoint that realizes the balanced $k$–d split for $g_j$ is at least $N^{-1}$ per internal node, independently of other nodes, conditional on the topology and split variables (following \textbf{(P1)}). Hence
\begin{equation}\label{eq:cut-mass-fixed}
\Pi(\text{all cutpoints equal the target midpoints})\ \ge\ N^{-m_j}\ \ge\ \exp\{-c_3 K_j\log N\}.
\end{equation}

Multiplying \eqref{eq:topology-mass-fixed}, \eqref{eq:splitvar-mass-fixed}, and \eqref{eq:cut-mass-fixed}, we obtain
\[
\Pi(\text{topology+split-variables+cutpoints of }E_j^\star)
\ \ge\ \exp\{-c\,K_j\log N\},
\]
and since $K_j\asymp N r_{jN,\mathrm{ad}}^2/\log N$, this equals $\exp\{-c' N r_{jN,\mathrm{ad}}^2\}$ for some $c,c'>0$.
\end{proof}

\paragraph{Leaf-variance calibration event.}
In ordinary BART, each leaf mean is $\mu_{\ell,m} \sim \mathcal{N}(0,\sigma_{\mu} ^2/M)$ with a fixed variance scale $\sigma_{\mu} ^2$. That makes ``small-ball" lower bounds straightforward: the mass of a rectangle of width $\delta$ around a target step value is $\gtrsim (c\delta)^K$ with a constant $c>0$ that does not vanish with $N$. Here, under the regularized horseshoe leaf prior in {\textbf{(P2)}}, each leaf value satisfies
\[
\mu^{(j)}_{\ell,m} \sim \mathcal N\Big(0,s_j^2\Big),
\quad
s_j^2 =\frac{1}{M}\,\frac{\tau^2\lambda_j^2\,c^2}{c^2+\tau^2\lambda_j^2},
\]
so $s_j ^2$ can be arbitrarily small if $\tau$ or $\lambda_j$ is too small. If we do not control $s_j ^2$, the Gaussian neighborhood around the target step values might carry negligible mass. The calibration event $\mathcal{E}_{\textrm{sc}}$ (defined below) pins $s_j ^2$ to a fixed interval $[a/M,b/M]$ (with $0<a<b<c_L ^2),$ ensuring a uniform per-leaf small-ball bound \[
\inf_{|x| \le B} \mathbb{P} \left(| \mu_{\ell,m}^{(j)} - x| \le \delta \mid \mathcal{E}_{\textrm{sc}} \right) \ge c \delta
\]
with $c>0$. That is the exact analogue of fixed-variance BART and is what allows us to get $\exp \{-CNr_{jN,\textrm{ad}}^2\}$ after multiplying over $K_j$ leaves.

Recall that $\lambda_j\sim\mathcal C^+(0,1)$, $\tau\sim\mathcal C^+(0,A_{g,d})$, and \emph{slab} $c^2\sim\mathcal{IG}(\nu_c/2,\nu_c s_c^2/2)$. We fix constants $0<a<b<c_L^2$ and $0<c_L<c_U<\infty$, and set
\(
a_\sigma:=a\,A_{g,d_\star},\quad b_\sigma:=b\,A_{g,d_\star}.
\)
We define the calibration event
\[
\mathcal E_{\mathrm{sc}}
:=\ \Bigl\{a_\sigma\le \tau \le b_\sigma\Bigr\}
 \ \cap\ \Bigl\{c^2\in[c_L^2,c_U^2]\Bigr\}
 \ \cap\ \Bigl\{\lambda_j\in[\Lambda_L(\tau,c),\Lambda_U(\tau,c)]\Bigr\}
 \ \cap\ \{d=d_\star\},
\]
where (inverting the map $\lambda\mapsto s_j^2$)
\[
\Lambda_t(\tau,c)
=\sqrt{\frac{t\,c^2}{\tau^2(c^2-t)}}\,,\qquad t\in\{a,b\}.
\]
On $\mathcal E_{\mathrm{sc}}$ we have $s_j^2\in[a/M,b/M]$, i.e. the leaf variance is bounded between two positive constants independent of $(N,R,p_N)$.

\medskip\noindent\emph{Mass of $\mathcal E_{\mathrm{sc}}$.}
Exploiting the scale invariance of half-Cauchy, we can write $\tau=A_{g,d_{*}} X$ with $X \sim \mathcal{C}^{+}(0,1).$ Then, 
\[
\mathbb{P}(a_{\sigma} \le \tau \le b_{\sigma} \vert d_{*}) = \mathbb{P}(a \le X \le b) = 2\pi^{-1}(\arctan(b)-\arctan(a)) =: \kappa_{\tau} >0,
\]
a positive constant independent of $(N,R,p_N)$ and of $A_{g,d_*}.$ Again, with $c^2 \sim \mathcal{IG}(\nu_c /2,\nu_c s_c ^2/2),$ any fixed interval $[c_L ^2, c_U ^2] \subset (0,\infty)$ has strictly positive probability $\mathbb{P}(c^2 \in [c_L ^2, c_U ^2]) =: \kappa_c >0.$ Since $\lambda_j \sim \mathcal{C}^{+}(0,1),$ for any $0 \le x<y,$ 
\begin{equation*}
    \begin{split}
        \mathbb{P}(x \le \lambda_j \le y) & = 2\pi^{-1} (\arctan(y)-\arctan(x)) \\
        & \ge 2\pi^{-1} \frac{y-x}{1+y^2} \quad (\text{by the mean-value theorem}).
    \end{split}
\end{equation*}
Imposing $x=\Lambda_{L}(\tau,c)$ and $y=\Lambda_{U}(\tau,c)$, we have \[
\Lambda_{U}(\tau,c)-\Lambda_{L}(\tau,c) = \tau^{-1} \left[\sqrt{\frac{bc^2}{c^2-b}}-\sqrt{\frac{ac^2}{c^2-a}}\right]=\tau^{-1} \kappa_{\textrm{gap}}(c)
\]
where $\kappa_{\textrm{gap}}(c) \coloneqq \sqrt{bc^2/(c^2-b)}-\sqrt{ac^2/(c^2-a)} \ge \underline{\kappa}_{\textrm{gap}} > 0$. Because $c \in [c_L,c_U]$ lives in a compact set and $a,b$ are fixed with $0<a<b<c_L^2$. Likewise, 
\[
\Lambda_{U}(\tau,c) = \tau^{-1}\sqrt{\frac{bc^2}{c^2-b}} \le \tau^{-1} \bar{\kappa},
\]
where $\bar{\kappa} \coloneqq \max_{c \in [c_L,c_U]} \sqrt{bc^2/(c^2-b)} < \infty.$ Therefore, \begin{align*}
    \mathbb{P}(\lambda_j \in [\Lambda_L,\Lambda_U] \mid \tau,c) & \ge 2 \pi^{-1} \frac{\Lambda_{U}-\Lambda_{L}}{1+\Lambda_U ^2} \\
    & \ge 2 \pi^{-1} \frac{\underline{\kappa}_{\textrm{gap}}/\tau}{1+ (\bar{\kappa}/\tau)^2} \\
    & = 2 \pi^{-1} \frac{\underline{\kappa}_{\textrm{gap}} \tau}{\tau^2 + \bar{\kappa}^2}.
\end{align*}

Now, observe that on $\{a_\sigma \le \tau \le b_{\sigma}\}$ with $A_{g,d_{*}} \rightarrow 0,$ we have $0 < \tau \le b_{\sigma} \le \bar{\kappa}$ eventually, so for all large $N$, \[
\frac{\tau}{\tau^2+\bar{\kappa}^2} \ge \frac{\tau}{2 \bar{\kappa}^2} \implies \mathbb{P}[\lambda_j \in [\Lambda_L,\Lambda_U] \mid \tau,c] \ge \underbrace{\pi^{-1} \bar{\kappa}^{-2} \underline{\kappa}_{\textrm{gap}}}_{=:\kappa_{\lambda}} \tau. 
\]
Hence, uniformly over $c \in [c_L,c_U]$ and $\tau \in [a_{\sigma},b_{\sigma}],$ \[
\mathbb{P}(\lambda_j \in [\Lambda_L,\Lambda_U] \mid \tau,c) \ge \kappa_{\lambda} \tau \ge \kappa_{\lambda} a_{\sigma}.
\]

With $\pi(d) \propto 1/d!$ on $\{0,1,\dots,p_N\}$ the normalizer is $Z_{p_N} = \sum_{d=0}^{p_N} 1/d! \in (e-1,e) \ \forall p_N \ge 1.$ Thus, 
\[
\pi(d_{*}) = \frac{1/d!}{Z_{p_N}} \ge \frac{1}{e d_{*} !} =: \kappa_{d} (d_{*}).
\]

By independence across the prior blocks, 
\begin{equation}\label{eq:RHS-Esc-mass-fixed}
    \begin{split}
    \Pi(\mathcal{E}_{\textrm{sc}}) & \ge \pi(d_{*}) \cdot \mathbb{P}(a_{\sigma} \le \tau \le b_{\sigma} \mid d_{*}) \cdot \mathbb{P}(c^2 \in [c_L ^2, c_U ^2]) \cdot \inf_{(\tau,c) \in [a_{\sigma,b_{\sigma}}] \times [c_L,c_U]} \mathbb{P}(\lambda_j \in [\Lambda_L,\Lambda_{U}] \mid \tau,c) \\
     & \ge \kappa_d (d_{*}) \kappa_c \kappa_{\tau} \kappa_{\lambda} a_{\sigma} \\
     & \gtrsim \kappa_{d}(d_{*}) \cdot \frac{d_{*}}{p_N(\log p_N)^2 (\log N)^2} 
    \end{split}
\end{equation}
% For a half–Cauchy with scale $A_{g,d_\star}$, the interval $[a_\sigma,b_\sigma]=[aA_{g,d_\star},bA_{g,d_\star}]$ has probability $\kappa_\tau>0$ independent of $N$; similarly $\Pi(c^2\in[c_L^2,c_U^2])=\kappa_c>0$, and $\pi(d_\star)>0$ under the prior on $d$. Conditional on $(\tau,c)$, the half–Cauchy mass of $[\Lambda_L,\Lambda_U]$ satisfies
% \[
% \mathbb P\big(\lambda_j\in[\Lambda_L,\Lambda_U]\mid \tau,c\big)
% =\frac{2}{\pi}\big(\arctan\Lambda_U-\arctan\Lambda_L\big)
% \ \ge\ \frac{2}{\pi}\,\frac{\Lambda_U-\Lambda_L}{1+\Lambda_U^2}.
% \]
% Since $\Lambda_U-\Lambda_L\asymp 1/\tau$ and $\Lambda_U\lesssim 1/\tau$ when $c\in[c_L,c_U]$ and $t\in\{a,b\}$, the right-hand side is $\ge \kappa_\lambda\,\tau$ for some $\kappa_\lambda>0$. With $\tau\in[a_\sigma,b_\sigma]$ we obtain
% \begin{equation}\label{eq:RHS-Esc-mass-fixed}
% \Pi(\mathcal E_{\mathrm{sc}})\ \ge\ \kappa_d\,\kappa_c\,\kappa_\tau\,\kappa_\lambda\,a_\sigma
% \ \asymp\ \frac{d_\star}{p_N}.
% \end{equation}

\begin{proof}[Proof of \Cref{lem:smallball-main}]
By \Cref{lem:approx-main}, there exists a step function $g_j$ realized by a tree splitting only on $S_{0,j}$ with
\begin{equation}\label{eq:approx-error-fixed}
\|\beta_{0,j}-g_j\|_N \ \le\ \frac{1}{4}\,r_{jN,\mathrm{ad}},
\qquad
K_j\ \asymp\ \frac{N\,r_{jN,\mathrm{ad}}^2}{\log N}.
\end{equation}
By \Cref{lem:tree-mass},
\begin{equation}\label{eq:tree-splits-mass-fixed}
\Pi\bigl(\text{topology+split-variables+cutpoints realizing }g_j\bigr)
\ \ge\ \exp\{-c_T\,N r_{jN,\mathrm{ad}}^2\}.
\end{equation}

On $\mathcal E_{\mathrm{sc}}$ we have $s_j^2\in[\underline v,\overline v]$ with $\underline v:=a/M$, $\overline v:=b/M$. Label the $K_j$ terminal cells by $k=1,\dots,K_j$ and let $g_{jk}$ be the value of $g_j$ on cell $k$. Because $\tilde\beta_{0,j}\in\mathcal H_B^{\alpha_j}$, we have $\|g_j\|_\infty\le B$ for some $B>0$.\footnote{By definition, $\|\tilde{\beta}_{0,j}\|_{\infty} \le B$. If we write $g_j$ on each cell $A_k$ as the empirical average  $v_k \coloneqq |A_k|^{-1} \sum_{i:\bm{Z}_i \in A_k} \tilde{\beta}_{0,j}(\bm{Z}_i),$ then $|v_k| \le \|\tilde{\beta}_{0,j}\|_{\infty} \le B.$ Hence $\|g_j\|_{\infty} = \max_{k} |v_k| \le B.$} For each $k$,
\[
\mathbb P\big(|\mu_{jk}-g_{jk}|\le \delta\ \big|\ \mathcal E_{\mathrm{sc}}\big)
\ \ge\ 2\delta\cdot \inf_{|x|\le B}\frac{1}{\sqrt{2\pi}\,\overline v^{1/2}}\exp\!\Big(-\frac{x^2}{2\underline v}\Big)
\ :=\ c_6\,\delta,
\]
with $c_6>0$ depending only on $(\underline v,\overline v,B)$.
Take \(
\delta_N :=\frac{r_{jN,\mathrm{ad}}}{4\sqrt{K_j}}.
\) Conditional on $\mathcal E_{\mathrm{sc}}$ and the event in \eqref{eq:tree-splits-mass-fixed}, independence of the leaf means across cells gives
\begin{align}
\mathbb P\Big(\cap_{k=1}^{K_j} A_k \mid \mathcal E_{\mathrm{sc}}\Big) & = \prod_{k=1}^{K_j}\int_{g_{jk}-\delta_{N}}^{g_{jk}+\delta_N} \frac{1}{\sqrt{2\pi} s_j}\exp(-t^2/(2s_j ^2)) \ dt \ge \prod_{k=1}^{K_j} \delta_N \cdot \underbrace{2\inf_{|x| \le B}\frac{1}{\sqrt{2\pi} \bar{v}^{1/2}}\exp(-x^2/(2\underline{v}))}_{=:c_6 >0} \nonumber\\
&\ge (c_6\,\delta_N)^{K_j}
= \exp\!\Big\{ -K_j \log\!\frac{1}{c_6\delta_N}\Big\}\nonumber\\
&= \exp\!\Big\{ -K_j\Big(\tfrac{1}{2}\log K_j + \log\!\tfrac{4}{c_6 r_{jN,\mathrm{ad}}}\Big)\Big\}.\label{eq:leaf-close-fixed}
\end{align}
where we let $A_k \coloneqq \{|\mu_{jk}-g_{jk}| \le \delta_N\}$ and the event $\cap_{k=1}^{K_j} A_k$ depicts $\{ \max_{1 \le k \le K_j} |\mu_{jk}-g_{jk}| \le \delta_{N} \}.$ Using $K_j\asymp N r_{jN,\mathrm{ad}}^2/\log N$ and $\log K_j\asymp \log N$, the exponent in \eqref{eq:leaf-close-fixed} is bounded by $-c_L\,N r_{jN,\mathrm{ad}}^2$ for some $c_L>0$ and all large $N$, hence
\begin{equation}\label{eq:leaf-product-final}
\mathbb P\Big(\max_{k}|\mu_{jk}-g_{jk}|\le \delta_N\ \Big|\ \mathcal E_{\mathrm{sc}}\Big)
\ \ge\ \exp\{-c_L\,N r_{jN,\mathrm{ad}}^2\}.
\end{equation}
By construction of $\delta_N$ and \eqref{eq:approx-error-fixed}, the step function $\tilde g_j$ with leaves $\{\mu_{jk}\}$ satisfies
\[
\|\tilde g_j-\beta_{0,j}\|_N\ \le\ \|\tilde g_j-g_j\|_N + \|g_j-\beta_{0,j}\|_N
\ \le\ \delta_N\sqrt{K_j} \footnote{This can be seen by writing $w_k$ as the empirical fraction of sample points in cell $k$ ($0 \le w_k \le 1$). Then we have a loose upper bound as $\|\tilde{g}_j - g_j \|_N ^2 = \sum_{k=1}^{K_j} w_k (\mu_{jk}-g_{jk})^2 \le \sum_{k=1}^{K_j} w_k \delta_N ^2 \le K_j \delta_N ^2$.} + \frac{1}{4}r_{jN,\mathrm{ad}}
\ =\ \frac{1}{2}\,r_{jN,\mathrm{ad}}.
\]

Multiplying \eqref{eq:tree-splits-mass-fixed}, \eqref{eq:RHS-Esc-mass-fixed}, and \eqref{eq:leaf-product-final} gives
\begin{align*}
\Pi\bigl(\|\beta_j-\beta_{0,j}\|_N\le r_{jN,\mathrm{ad}}\bigr)
&\ge
\exp\!\big\{-c_T\,N r_{jN,\mathrm{ad}}^2\big\}\cdot
\Pi(\mathcal E_{\mathrm{sc}})\cdot
\exp\!\big\{-c_L\,N r_{jN,\mathrm{ad}}^2\big\}\\
&\ge
\exp\!\big\{-C\,N r_{jN,\mathrm{ad}}^2\big\}\cdot \frac{d_\star}{p_N (\log p_N)^2 (\log N)^2}.
\end{align*}
By the growth condition in {\textbf{(A5)}} we have $\log p_N=o(N r_{jN,\mathrm{ad}}^2)$, so the polynomial factor can be absorbed into the exponential by enlarging $C$, which can be seen by \[
\log \frac{d_{*}}{p_N(\log p_N)^2 (\log N)^2} = \log d_{*} -\log p_N - 2 \log \log p_N - 2\log \log N,
\]
with each term being $o(Nr_{jN,\textrm{ad}}^2)$ and because $\log \log p_N, \log \log N \ll \log p_N$, implying that for any $\epsilon>0,$ $d_{*}p_N^{-1}(\log p_N)^{-2}(\log N)^{-2} \ge \exp\{-\epsilon N r_{jN,\textrm{ad}}^2\}$ for all large $N$. Thus there exist $C_1,N_0>0$ such that
\[
\Pi\bigl(\|\beta_j-\beta_{0,j}\|_N\le r_{jN,\mathrm{ad}}\bigr)
\ \ge\ \exp\!\big\{-C_1\,N r_{jN,\mathrm{ad}}^2\big\},
\quad \forall\,N\ge N_0,
\]
which proves the lemma.
\end{proof}

\subsection{Proof of Theorem 1}
\label{sec:proof_thm1}
The remaining steps of \citet{Ghosal2000} to show posterior contraction require us to construct the sieve and bound its metric entropy along with proving the exponential decay rate of the Type-I/II errors of likelihood-ratio tests constructed on the sieve. 

Throughout, indices for predictors sometimes appear as $\{0,1,\dots,p_N\}$ (including the intercept block $(j=0)$) and elsewhere as $\{1,\dots,p_N\}$ (excluding it). Both conventions are equivalent for our arguments: the intercept is never subject to sparsity selection and does not affect leaf budgets or rates, so any bounds and constants remain unchanged up to immaterial additive ($\mathcal{O}(1)$) terms. For definiteness, when needed we treat ($j=0$) as fixed and excluded from counts over active predictors.

Corollary \ref{cor:joint-smallball} extends the single-coefficient analysis (Lemma \ref{lem:tree-mass}) to the full model, establishing the joint prior mass condition required by the general posterior contraction framework. It confirms that our prior assigns sufficient mass to functions that correctly mimic the true two-level (predictor and modifier) sparse structure.

\begin{corollary}[Joint concentration under predictor sparsity]
\label{cor:joint-smallball}
Let $S_0\subset\{1,\dots,p_N\}$ be the active predictor set with $|S_0|=d_0$, and for each $j\in S_0$
let $s_{0,j}$ be the size of its active modifier set and $\alpha_j\in(0,1]$ its H\"older smoothness.
Define the dimension–adaptive rates
\[
r_{jN,\mathrm{ad}}^2 \;:=\; (\log N)\,N^{-2\alpha_j/(2\alpha_j+s_{0,j})},
\quad
r_N^2 \;:=\; \frac{d_0\log p_N}{N} \;+\; \sum_{j\in S_0} r_{jN,\mathrm{ad}}^2.
\]
Fix $s_N=\lceil C_s\log N\rceil$ with any constant $C_s>0$, and a constant threshold $\varepsilon_0>0$.
Define the target neighborhood
\[
 B_{S}(r_N) \;:=\;
 \Bigl\{
    \bm\beta:\;
    \|\beta_{j}-\beta_{0,j}\|_{N}\le r_{jN,\mathrm{ad}}\ \ (\forall j\in S_0),\quad
    \#\{j\notin S_0:\ \|\beta_j\|_\infty>\varepsilon_0\}\le s_N
 \Bigr\}.
\]
Under {\textbf{(A1)}}–{\textbf{(A5)}} and {\textbf{(P1)}}–{\textbf{(P2)}}, there exists a constant $C_1^\star>0$ (independent of $N,R,p_N$) such that, for all sufficiently large $N$,
\[
 \Pi\bigl(B_{S}(r_N)\bigr)\;\ge\;\exp \ \!\bigl\{-C_{1}^{\star}\,N r_N^{2}\bigr\}.
\]
\end{corollary}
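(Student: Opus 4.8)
The plan is to factor the target neighborhood $B_S(r_N)$ into independent contributions from the $p_N+1$ ensembles and bound each factor, then take the product. Since the tree priors are independent across ensembles and the global--local jump prior couples them only through $(\tau, c)$, I would first condition on the calibration event $\mathcal E_{\mathrm{sc}}$ introduced before the proof of \Cref{lem:smallball-main}, which pins $\tau$, $c$, and all the $\lambda_j$ into a regime where every leaf variance $s_j^2$ lies in a fixed interval $[a/M, b/M]$. Conditionally on $\mathcal E_{\mathrm{sc}}$, the leaf draws across ensembles become independent Gaussians with bounded variance, so the event in $B_S(r_N)$ decomposes as a product over $j$.

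For the active predictors $j \in S_0$, I would invoke \Cref{lem:smallball-main} directly: conditional on $\mathcal E_{\mathrm{sc}}$ (which is exactly what the internal argument of that lemma uses), $\Pi(\|\beta_j - \beta_{0,j}\|_N \le r_{jN,\mathrm{ad}} \mid \mathcal E_{\mathrm{sc}}) \ge \exp\{-C_1 N r_{jN,\mathrm{ad}}^2\}$. Multiplying over the $d_0$ active coefficients contributes $\exp\{-C_1 N \sum_{j\in S_0} r_{jN,\mathrm{ad}}^2\}$, which is one of the two terms in $N r_N^2$. For the inactive predictors $j \notin S_0$, the requirement is much weaker --- we only need at most $s_N = \lceil C_s \log N\rceil$ of them to have $\|\beta_j\|_\infty > \varepsilon_0$. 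I would handle this by the split-variable/support prior: under $\pi(d)\propto 1/d!$ in {\textbf{(P2)}}, the event that at most $s_N$ inactive ensembles are ``switched on'' has prior probability that I can lower-bound by the probability that the model dimension $d$ equals $d_\star \asymp d_0$ combined with a union bound showing that, for the remaining ensembles, $\|\beta_j\|_\infty \le \varepsilon_0$ with overwhelming probability. Concretely, for a single ensemble with local scale $\lambda_j$ integrated against its half-Cauchy prior (restricted to the $\mathcal E_{\mathrm{sc}}$ regime where $\tau$ is tiny, forcing $s_j^2$ small), the probability that a $K$-leaf step function has sup-norm exceeding a constant $\varepsilon_0$ is exponentially small in the leaf count; a Gaussian max-tail bound gives $\P(\|\beta_j\|_\infty > \varepsilon_0) \le 2K\exp(-c\varepsilon_0^2/s_j^2)$, and since there are at most $\mathcal O(p_N)$ such ensembles, the expected number exceeding the threshold is $o(\log N)$, so a Markov/Chernoff bound puts the ``at most $s_N$'' event at probability at least a constant (or at worst $\exp\{-o(N r_N^2)\}$). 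The count of which $d_0$ coordinates are active costs a combinatorial factor $\binom{p_N}{d_0} \le p_N^{d_0}$, i.e. $\exp\{d_0 \log p_N\}$ in the denominator, which is exactly the first term $N \cdot (d_0\log p_N / N) = d_0\log p_N$ of $N r_N^2$.

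Finally I would assemble: multiply $\Pi(\mathcal E_{\mathrm{sc}}) \gtrsim d_\star / \{p_N (\log p_N)^2 (\log N)^2\}$ (from \eqref{eq:RHS-Esc-mass-fixed}) against the $d_0$ active-coefficient small-ball bounds, the support-selection factor $\exp\{-C d_0 \log p_N\}$, and the inactive-ensemble control. The polynomial-in-$p_N$ losses ($\Pi(\mathcal E_{\mathrm{sc}})$ and the binomial coefficient) are absorbed into the exponential because {\textbf{(A5)}} guarantees $\log p_N = \mathcal O(N^\xi)$ with $\xi < \min_j s_{0,j}/(2\alpha_j+s_{0,j})$, hence $\log p_N = o(N r_{jN,\mathrm{ad}}^2)$ and also $d_0 \log p_N \le N r_N^2$ by definition; this is the same absorption step used at the end of \Cref{lem:smallball-main}. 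Collecting constants yields $\Pi(B_S(r_N)) \ge \exp\{-C_1^\star N r_N^2\}$.

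The main obstacle is the inactive-ensemble bound: I need to show that conditioning on $\mathcal E_{\mathrm{sc}}$ (which forces $\tau$ small but still lets $\lambda_j$ range over an interval) does \emph{not} accidentally inflate $\|\beta_j\|_\infty$ for too many $j \notin S_0$, \emph{and} that this conditioning is compatible with the active-coefficient bounds which need $\lambda_j$ large enough to realize the target jumps. The resolution is that $\mathcal E_{\mathrm{sc}}$ is defined only through $\tau$, $c$, and $\lambda_j$ \emph{for the particular active $j$}; for inactive ensembles the local scales are free, and I would instead work on the unconditional event, using that a randomly drawn $K$-leaf step function under the horseshoe prior has small sup-norm with high probability because the half-Cauchy on $\lambda_j$ places most of its mass at moderate values while $\tau$ under {\textbf{(P2)}} is $\Theta(d p_N^{-1}(\log p_N)^{-2}(\log N)^{-2}) \to 0$, so $s_j^2 \to 0$ and the step function is uniformly near zero except on a prior-negligible set. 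Balancing this small-$\tau$ requirement (good for inactive shrinkage) against the active small-ball bound (which \Cref{lem:smallball-main} already shows is achievable precisely because $\mathcal E_{\mathrm{sc}}$ rescales $\lambda_j$ by $1/\tau$ to keep $s_j^2$ order one) is the delicate accounting, but it goes through because the rescaling is built into the definition of $\Lambda_t(\tau, c)$.
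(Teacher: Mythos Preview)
Your proposal is essentially correct and follows the same architecture as the paper's proof: condition on a calibration event fixing $(d,\tau)$ (the paper uses $\mathcal E_0\cap\mathcal E_\tau$ rather than the full $\mathcal E_{\mathrm{sc}}$, reserving per-$j$ events $\mathcal E_{\mathrm{sc},j}$ for each active coordinate), invoke the small-ball machinery of \Cref{lem:smallball-main} for each $j\in S_0$ and multiply, then control the number of ``activated'' inactives via the Gaussian tail bound $\mathbb P(I_j=1\mid\tau)\lesssim \tau\log(1/\tau)$ integrated against the half-Cauchy on $\lambda_j$, and finally absorb polynomial factors using \textbf{(A5)}. Your self-correction in the last paragraph about the scope of $\mathcal E_{\mathrm{sc}}$ is exactly right and matches how the paper separates the global $(d,\tau)$ conditioning from the per-coordinate $\lambda_j$ calibration.

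One conceptual slip worth flagging: the $d_0\log p_N$ cost does \emph{not} come from a combinatorial $\binom{p_N}{d_0}$ selection factor --- there is no subset selection in this lower bound, since $S_0$ is fixed and we are simply computing the prior mass of a fixed event. In the paper, that term arises instead from two sources: (i) Stirling on the model-dimension prior, $\pi(d_0)\propto 1/d_0!\gtrsim\exp\{-c\,d_0\log p_N\}$, and (ii) the product over $j\in S_0$ of the per-coordinate calibration masses $\Pi(\mathcal E_{\mathrm{sc},j}\mid\mathcal E_\tau)\gtrsim d_0/\{p_N(\log p_N)^2(\log N)^2\}$, which contributes another $\exp\{-C\,d_0\log p_N\}$ when raised to the $d_0$th power. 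The numerical bookkeeping ends up the same, but the mechanism is continuous shrinkage rather than discrete model selection.
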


\begin{proof}
Let $\mathcal E_0:=\{d=d_0\}$ and $\mathcal E_\tau:=\{a_\tau A_{g,d_0}\le \tau\le b_\tau A_{g,d_0}\}$ for fixed $0<a_\tau<b_\tau<\infty$.
Since $\pi(d)\propto 1/d!$, Stirling’s bound yields $\pi(d_0)\ge \exp\{-c_1 d_0\log d_0\}\ge \exp\{-c_1' d_0\log p_N\}$ for large $N$.
The half–Cauchy scale prior implies $\mathbb P(\mathcal E_\tau\mid d_0)=c_\tau>0$ independent of $N$.
Hence
\begin{equation}\label{eq:mass-d-tau}
\Pi(\mathcal E_0\cap\mathcal E_\tau)\ \ge\ \exp\{-c_s\,d_0\log p_N\}.
\end{equation}
From now on we work conditionally on $\mathcal E_0\cap\mathcal E_\tau$ and multiply by \eqref{eq:mass-d-tau} at the end.
Conditional on $(d,\tau)$, the priors for different predictors $j$ are independent by {\textbf{(P1)}}–{\textbf{(P2)}}.

\medskip\noindent
\textbf{Active coordinates.}
Fix $j\in S_0$. By the approximation lemma (\Cref{lem:approx-main}), there exists a step function $g_j$ realized by a single $K_j$–leaf tree splitting only on $S_{0,j}$ with
\begin{equation}\label{eq:approx-step}
\|\beta_{0,j}-g_j\|_N \ \le\ \tfrac14\,r_{jN,\mathrm{ad}},
\quad
K_j\ \asymp\ \frac{N\,r_{jN,\mathrm{ad}}^2}{\log N}.
\end{equation}
By the tree-mass lemma (\Cref{lem:tree-mass}), the prior mass of the \emph{exact} topology+split-variables+cutpoints producing $g_j$ is
\begin{equation}\label{eq:tree-mass}
\Pi\bigl(\text{topology+splits realizing }g_j\bigr)\ \ge\ \exp\{-c_T\,K_j\log N\}\ =\ \exp\{-c_T' N r_{jN,\mathrm{ad}}^2\}.
\end{equation}
Following the proof of Lemma \ref{lem:tree-mass}, we fix $0<c_L<c_U$ and $0<a<b<c_L^2$ and define
\[
\mathcal E_{\mathrm{sc},j}:=\Bigl\{c^2\in[c_L^2,c_U^2]\Bigr\}\cap \Bigl\{\lambda_j\in[\Lambda_L(\tau,c),\Lambda_U(\tau,c)]\Bigr\},
\
\Lambda_t(\tau,c):=\sqrt{\frac{t\,c^2}{\tau^2(c^2-t)}},\ t\in\{a,b\}.
\]
and essentially mimicking the same line of argument, we obtain
\begin{equation}\label{eq:calib-mass-dp-new}
\Pi\bigl(\mathcal E_{\mathrm{sc},j}\mid \mathcal E_\tau\bigr)
\ \ge\ c_\lambda'\,\frac{d_0}{\,p_N\,(\log p_N)^2(\log N)^2}.
\end{equation}
% Conditional on the event in \eqref{eq:tree-mass} and on $\mathcal E_{\mathrm{sc},j}$, the $K_j$ leaf means are i.i.d.\ $N(0,s_j^2)$ with $s_j^2\in[a/M,b/M]$.
% Let $\delta_N:=r_{jN,\mathrm{ad}}/(4\sqrt{K_j})$. 
Then arguing similarly as in the proof of Lemma \ref{lem:smallball-main} yields
% \begin{align}
% \Pi\Bigl(\|\beta_j-g_j\|_N\le \tfrac14 r_{jN,\mathrm{ad}}\ \Bigm|\ \text{tree},\ \mathcal E_{\mathrm{sc},j}\Bigr)
% &\ge \prod_{\ell=1}^{K_j}\ \inf_{|x|\le B}\mathbb P\bigl(|\mu^{(j)}_{\ell}-x|\le \delta_N\bigr)\nonumber\\
% &\ge \bigl(c_0\,\delta_N\bigr)^{K_j}
% = \exp\!\Bigl\{-K_j\log\frac{1}{c_0\delta_N}\Bigr\}\nonumber\\
% &\ge \exp\{-c_L\,N r_{jN,\mathrm{ad}}^2\}, \label{eq:leaf-smallball}
% \end{align}
% where $B<\infty$ bounds $\|g_j\|_\infty$ (H\"olderness), and $K_j\asymp N r_{jN,\mathrm{ad}}^2/\log N$ so that $K_j\log(1/\delta_N)\lesssim N r_{jN,\mathrm{ad}}^2$.
% Combining \eqref{eq:approx-step} and \eqref{eq:leaf-smallball} gives
% \[
% \Pi\Bigl(\|\beta_j-\beta_{0,j}\|_N\le r_{jN,\mathrm{ad}}\ \Bigm|\ \text{tree},\ \mathcal E_{\mathrm{sc},j}\Bigr)
% \ \ge\ \exp\{-c_L\,N r_{jN,\mathrm{ad}}^2\}.
% \]
% Multiplying with \eqref{eq:tree-mass} and using \eqref{eq:calib-mass-dp-new} yields
\begin{equation}\label{eq:active-one-new}
\Pi\Bigl(\|\beta_j-\beta_{0,j}\|_N\le r_{jN,\mathrm{ad}}\ \Bigm|\ \mathcal E_0\cap\mathcal E_\tau\Bigr)
\ \ge\ \exp\{-c_a\,N r_{jN,\mathrm{ad}}^2\}\cdot
c''\,\frac{d_0}{\,p_N\,(\log p_N)^2(\log N)^2}.
\end{equation}
Conditional independence across $j\in S_0$ then gives
\begin{align}
\Pi\Bigl(\bigcap_{j\in S_0}\{\|\beta_j-\beta_{0,j}\|_N\le r_{jN,\mathrm{ad}}\}\ \Bigm| \ \mathcal E_0\cap\mathcal E_\tau\Bigr)
&\ge \exp \Bigl\{-c_a\,N\sum_{j\in S_0} r_{jN,\mathrm{ad}}^2\Bigr\}\cdot
\Biggl(c''\,\frac{d_0}{\,p_N\,(\log p_N)^2(\log N)^2}\Biggr)^{d_0}\nonumber\\
&= \exp \Bigl\{-c_a\,N\sum_{j\in S_0} r_{jN,\mathrm{ad}}^2
- c_s''\,d_0\log p_N - 2d_0\log\log p_N \nonumber\\
& - 2d_0\log\log N + d_0\log(c''d_0)\Bigr\}\nonumber\\
&\ge \exp \Bigl\{-c_a\,N\sum_{j\in S_0} r_{jN,\mathrm{ad}}^2 - C\,d_0\log p_N\Bigr\},
\label{eq:actives-joint-new}
\end{align}
for a suitable constant $C>0$, since $\log\log p_N,\log\log N,\log d_0=o(\log p_N)$ under {\textbf{(A5)}}.

\medskip\noindent
\textbf{Inactive coordinates.}
For $j\notin S_0$, define $I_j:=\mathbbm{1}\{\|\beta_j\|_\infty>\varepsilon_0\}$ and
$S_N:=\sum_{j\notin S_0} I_j$. We show that
$\Pi(S_N\le s_N\mid \mathcal E_0\cap\mathcal E_\tau)$ is large.

Write $\beta_j=\sum_{m=1}^M \beta_{j,m}$ with $M= \mathcal{O}(1)$ (Assumption \textbf{(A1)}) and let $L_j$ be the total number
of leaves across the $M$ trees for predictor $j$. If \emph{all} leaves satisfy
$|\mu^{(j)}_{\ell,m}|\le \varepsilon_0/M$, then $\|\beta_j\|_\infty\le \sum_{m=1}^M\|\beta_{j,m}\|_\infty
\le \sum_{m,\ell}|\mu^{(j)}_{\ell,m}|\le M\cdot(\varepsilon_0/M)=\varepsilon_0$, hence $I_j=0$.
Therefore, conditioning on $(\tau,\lambda_j,L_j)$ and using a union bound plus the Gaussian tail,
\begin{align}
\mathbb P(I_j=1\mid \tau,\lambda_j,L_j)
&\le L_j\cdot \mathbb P\!\left(|Z|>\frac{\varepsilon_0}{M}\ \Bigm|\ Z\sim \mathcal{N} \Big(0,\frac{\tau^2\lambda_j^2}{M}\Big)\right)\nonumber\\
&\le L_j\cdot C\,\min \Bigl\{1,\ \frac{\tau\lambda_j}{\varepsilon_0\sqrt{M}}\Bigr\}
\ \le\ L_j\cdot C'\,\min \Bigl\{1,\ \frac{\tau\lambda_j}{\varepsilon_0}\Bigr\},
\label{eq:Ij-union-new}
\end{align}
absorbing $\sqrt{M}$ into $C'$ since $M=\mathcal{O}(1)$.

Taking expectation over $L_j$ (independent of $(\tau,\lambda_j)$ under {\textbf{(P1)}}–{\textbf{(P2)}}),
$\mathbb E[L_j]\le C_L$ for the depth–decay Galton–Watson prior, so
\[
\sup_{j\notin S_0}\ \mathbb P(I_j=1\mid \tau)
\ \le\ C_L C'\ \mathbb E_{\lambda\sim\mathcal C^+(0,1)}\!\left[\min\!\left\{1,\ \frac{\tau\lambda}{\varepsilon_0}\right\}\right].
\]
For any $a\in(0,1]$ and $\lambda\sim\mathcal C^+(0,1)$,
\begin{equation}\label{eq:min-logineq}
\begin{split}
\mathbb E\bigl[\min\{1,a\lambda\}\bigr]
&=\frac{2}{\pi}\int_0^\infty \frac{\min\{1,a\lambda\}}{1+\lambda^2}\,d\lambda
=\frac{a}{\pi}\log \Bigl(1+\frac{1}{a^2}\Bigr)+\frac{2}{\pi}\Bigl(\frac{\pi}{2}-\arctan\!\frac{1}{a}\Bigr)\\
&\le C_0\,a\,\log \Bigl(1+\frac{1}{a}\Bigr),
\end{split}
\end{equation}
for a universal $C_0>0$.
With $a:=\tau/\varepsilon_0$ and $\tau\in\mathcal E_\tau\subset [a_\tau,b_\tau]\cdot A_{g,d_0}$, {\textbf{(P2)}} gives, for large $N$,
\begin{align}
p_\star(N)
&:=\ \sup_{j\notin S_0}\ \mathbb P(I_j=1\mid \tau)\nonumber\\
&\le C\,\frac{d_0}{\,p_N(\log p_N)^2(\log N)^2}\ \log\Bigl(1+c\,\frac{p_N(\log p_N)^2(\log N)^2}{d_0}\Bigr)\nonumber\\
&\le C'\,\frac{d_0}{\,p_N\,(\log p_N)\,(\log N)^2},\label{eq:pstar-new}
\end{align}
since $\log\!\big(1+x\big)\le \log x + 1$ and
$\log\!\big(d_0^{-1}{p_N(\log p_N)^2(\log N)^2}\big)
\le c_1\log p_N + c_2\log\log p_N + c_3\log\log N
\le C''\log p_N$ under {\textbf{(A5)}} (the $\log\log$ terms are $o(\log p_N)$).
In particular, $p_\star(N)\le 1/2$ for all sufficiently large $N$.

Now fix any subset $T\subset \{1,\dots,p_N\}\setminus S_0$ of size
\(
m\ :=\ p_N - d_0 - s_N.
\)
If $I_j=0$ for all $j\in T$, then at most the remaining $(p_N-d_0)-m=s_N$ inactive predictors can satisfy $I_j=1$; thus
\(
\{I_j=0\ \forall j\in T\}\subset\{S_N\le s_N\}.
\)
Conditional on $\tau$, the indicators $\{I_j\}_{j\notin S_0}$ are independent, so using \eqref{eq:pstar-new} and $\log(1-x)\ge-2x$ for $x\in[0,1/2]$,
\begin{align}
\mathbb P(S_N\le s_N\mid \tau)
&\ge \prod_{j\in T}\bigl(1-\mathbb P(I_j=1\mid \tau)\bigr)
\ \ge\ \bigl(1-p_\star(N)\bigr)^{m}\nonumber\\
&\ge\ \exp\bigl\{\,m\,\log(1-p_\star(N))\,\bigr\}
\ \ge\ \exp\bigl(-2\,m\,p_\star(N)\bigr).\label{eq:inactive-lb-new}
\end{align}
By {\textbf{(A5)}}, $s_N+d_0=\mathcal O(\log N)=o(p_N)$, hence $m/p_N\to 1$ and
\[
2\,m\,p_\star(N)\ \le\ 2(1+o(1))\,p_N\cdot C'\frac{d_0}{\,p_N(\log p_N)(\log N)^2}
\ =\ (1+o(1))\,\frac{2C' d_0}{(\log p_N)(\log N)^2}.
\]
Since the right-hand side is $o \big(d_0\log p_N\big)$, there exists $C_4>0$ such that, for all large $N$,
\[
2\,m\,p_\star(N)\ \le\ C_4\,d_0\log p_N,
\]
and therefore from \eqref{eq:inactive-lb-new},
\begin{equation}\label{eq:inactive-final-new}
\Pi\bigl(S_N\le s_N\ \bigm|\ \mathcal E_0\cap\mathcal E_\tau\bigr)
\ \ge\ \exp \bigl(-C_4\,d_0\log p_N\bigr).
\end{equation}

\medskip\noindent
Multiplying \eqref{eq:mass-d-tau}, \eqref{eq:actives-joint-new}, and \eqref{eq:inactive-final-new},
\begin{align*}
\Pi\bigl(B_S(r_N)\bigr)
&\ge \Pi(\mathcal E_0\cap\mathcal E_\tau)\cdot 
\Pi\Bigl(\bigcap_{j\in S_0}\{\|\beta_j-\beta_{0,j}\|_N\le r_{jN,\mathrm{ad}}\}\ \Bigm| \ \mathcal E_0\cap\mathcal E_\tau\Bigr)
\cdot
\Pi\bigl(S_N\le s_N\ \bigm|\ \mathcal E_0\cap\mathcal E_\tau\bigr)\\[2pt]
&\ge \exp \Bigl\{-c_a\,N\sum_{j\in S_0} r_{jN,\mathrm{ad}}^2 - C\,d_0\log p_N\Bigr\}
\ =\ \exp \Bigl\{-C_1^\star\,N r_N^2\Bigr\},
\end{align*}
for a suitable $C_1^\star>0$. This completes the proof.
\end{proof}

\textbf{Rationale of sieve construction.} The construction of a sieve is a critical part of proving posterior contraction. At a high level, the sieve is a well-behaved subset of the entire parameter space. Proving that the posterior concentrates on a tiny neighborhood around the truth is too hard to do in the full, infinitely complex model space. The strategy is to first show that the posterior concentrates on the much smaller, more manageable sieve, and then work within that sieve. A good sieve must satisfy two competing goals: (i) It must be \emph{large enough} so that the true function is inside it with very high prior probability, and (ii) it must be \emph{small enough} so that its complexity can be controlled.

The key to \texttt{sparseVCBART} theory is adaptation: the final contraction rate depends on the unknown $\alpha_j$ and the unknown $s_{0,j}$. However, the sieve itself must be constructed without knowing these true values. This is why we use a worst-case rate $\bar{r}_N$ to define the complexity cap, $L_N$ (the maximum number of leaves). We need a single leaf budget, $L_N$, that is guaranteed to be sufficient for any possible true function that satisfies our assumptions. The number of leaves required to approximate a function gets larger as the function becomes less smooth (smaller $\alpha$) and higher-dimensional (larger $s$). Therefore, the \emph{hardest} possible function to approximate is one with the minimum allowed smoothness ($\alpha_{\min}$) and the maximum possible dimension ($R$).

In \Cref{def:sieve_revised}, we ensure our sieve is large enough to contain a good approximation of the true function, no matter what its true $\alpha_j$ and $s_{0,j}$ are.

\begin{definition}[Dimension–adaptive sparse sieve]\label{def:sieve_revised}
Let
\(
s_N=\big\lceil C_s\,\log N\big\rceil
\)
for a constant $C_s>0$ chosen large enough so that $d_0\le s_N$ for all large $N$.
Fix a minimal smoothness $\alpha_{\min}\in(0,1]$ and define the worst–case per–component rate
\[
r_{N,\mathrm{wc}}^2 \;:=\; (\log N)\,N^{-2\alpha_{\min}/(2\alpha_{\min}+R)}.
\]
Set the envelope rate
\[
\bar r_N^2 \;:=\; \frac{s_N\log p_N}{N}\;+\; s_N\, r_{N,\mathrm{wc}}^2,
\qquad
L_N \;:=\; C_L\,\frac{N\,\bar r_N^2}{\log N},
\]
for a sufficiently large constant $C_L>0$, and the sieve inactivity threshold
\(
\varepsilon_N :={C_\varepsilon}/{\log N}, \ C_\varepsilon>0.
\)
For a set $S\subset\{0,1,\dots,p_N\}$ of size $|S|=k\le s_N$, define
\[
\mathcal F_N(S)\footnote{Since $\beta_j: 
\mathcal{Z}\subseteq [0,1]^R \mapsto \mathbb{R}$, so the notation $\|\beta_j\|_{\infty} \coloneqq \sup_{\bm{z} \in \mathcal{Z}}|\beta_{j}(\bm{z})|$} \;:=\; \left\{(\bm\beta,\sigma^2):
\begin{array}{l}
\|\beta_j\|_\infty \le \varepsilon_N \quad (j\notin S)\\[2pt]
\text{the $M$ trees forming }\beta_j\text{ have in total }\le L_N\text{ leaves}\quad (j\in S),\\[2pt]
0<\sigma^2\le c_0
\end{array}\right\}.
\]
The full sieve is the model–selection union
\[
\mathcal F_N^\star \;:=\; \bigcup_{k=0}^{s_N}\ \bigcup_{\substack{S\subset\{0,1,\dots,p_N\}\\ |S|=k}} \mathcal F_N(S).
\]
\end{definition}

\begin{remark}
The specific structure of $\mathcal{F}_N^\star$, is designed to discard unreasonable functions in three ways, perfectly mirroring the assumptions of our model: The most important step is taking the union over only sparse sets of predictors. This immediately throws away the vast majority of the model space (all non-sparse models). This is justified by the assumption that the true model is sparse ($d_0 \ll p_N$). Next, for the covariates that are considered active, we do not allow their corresponding functions $\beta_j$ to be infinitely complex. The cap on the total number of leaves, $L_N$, prevents overfitting by ensuring the functions remain reasonably simple. Finally, for predictors that are considered inactive, we do not require them to be exactly zero, which is difficult with continuous shrinkage priors. Instead, we only require that they are negligibly small ($\|\beta_j\|_\infty \le \varepsilon_N$). This creates a ``soft sparsity" that is compatible with the prior and is sufficient for the theoretical arguments.
\end{remark}

\begin{lemma}[Prior mass of the sieve]\label{lem:sieve-mass_corrected}
Assume {\textbf{(A1)}}–{\textbf{(A5)}} and priors {\textbf{(P1)}}–{\textbf{(P3)}}. Then there exists $c_3>0$ such that, for all large $N$,
\[
\Pi\bigl((\mathcal F_N^\star)^c\bigr)\ \le\ \exp \bigl\{-c_3\,N\,r_N^2\bigr\}.
\]
\end{lemma}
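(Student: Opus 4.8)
The plan is to bound $\Pi((\mathcal F_N^\star)^c)$ by a union bound over the three ways a parameter can fall outside the sieve: (a) the active-predictor support $S$ has size $> s_N$; (b) for some $j \in S$, the total leaf count $L_j$ of its $M$ trees exceeds $L_N$; (c) for some $j \notin S$, $\|\beta_j\|_\infty > \varepsilon_N$; or (d) $\sigma^2 > c_0$. Each event must be shown to carry prior probability at most $\exp\{-c\,N r_N^2\}$ for a suitable constant, so the constant $c_3$ is the minimum of the four resulting constants. I would first note that $N r_N^2 \gtrsim d_0 \log p_N + \sum_{j\in S_0} N r_{jN,\mathrm{ad}}^2 \gtrsim \log N$ (since $r_{jN,\mathrm{ad}}^2 \gtrsim (\log N)/N$ in the worst case and $d_0 \ge 1$), so it suffices to produce bounds of the form $\exp\{-c\,\log N \cdot (\text{something diverging})\}$ or $\exp\{-c\,N r_N^2\}$ directly.

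For (a), since the support size $d$ has prior $\pi(d)\propto 1/d!$, a Stirling estimate gives $\Pi(d > s_N) \le \sum_{d > s_N} 1/d! \le 2/(s_N)! \le \exp\{-c\,s_N \log s_N\}$, and with $s_N \asymp \log N$ this is $\exp\{-c'\,(\log N)(\log\log N)\}$, which is $o(\exp\{-C\,N r_N^2\})$ only if $N r_N^2 = O((\log N)(\log\log N))$ — I need to be a little careful here, so instead I would argue $\Pi(d>s_N) \le \exp\{-c s_N\log s_N\} \le \exp\{-c' d_0 \log p_N\} \le \exp\{-c'' N r_N^2\cdot(d_0\log p_N)/(Nr_N^2)\}$; more cleanly, choose $C_s$ large enough that $c\, s_N \log s_N \ge C_4\, d_0 \log p_N \ge$ (fraction of) $N r_N^2$ using $d_0 \le s_N$ and $\log p_N = O(N^\xi)$, so the tail is $\le \exp\{-c_3 N r_N^2\}$. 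For (b), conditional on the topology, $L_j$ is the number of leaves of $M$ i.i.d.\ Galton–Watson trees with $\mathbb P(\text{split at depth }m)=\gamma^m$, $\gamma<1/2$; the total progeny of such a subcritical-in-depth process has exponential tails, so $\mathbb P(L_j > \ell) \le \exp\{-c\ell\}$ uniformly in $j$. A union bound over $j \in S$ (at most $s_N$ of them, or crudely over all $p_N+1$ coordinates) gives $\Pi(\exists j\in S: L_j > L_N) \le (p_N+1)\exp\{-c L_N\}$; since $L_N = C_L N\bar r_N^2/\log N$ and $N\bar r_N^2 \gtrsim s_N\log p_N \gtrsim \log p_N$, choosing $C_L$ large makes $c L_N \ge \log p_N + C\,N r_N^2$, absorbing the union-bound factor and yielding $\exp\{-c_3 N r_N^2\}$.

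For (c), I would reuse the computation already carried out in the proof of Corollary~\ref{cor:joint-smallball}: there, for any inactive $j$, $\mathbb P(\|\beta_j\|_\infty > \varepsilon)$ conditional on $\tau$ is bounded via a union bound over leaves plus a Gaussian tail by $L_j\cdot C\min\{1,\tau\lambda_j/\varepsilon\}$, and after integrating out $\lambda_j\sim\mathcal C^+(0,1)$ and $L_j$ one gets (using the $\mathbb E[\min\{1,a\lambda\}]\le C_0 a\log(1+1/a)$ inequality) a bound of order $d_0/(p_N \cdot \mathrm{polylog})$ — but now with the \emph{smaller} threshold $\varepsilon_N = C_\varepsilon/\log N$ instead of the fixed $\varepsilon_0$, so $a = \tau/\varepsilon_N \asymp A_{g,d_0}\log N$ and the per-coordinate probability is $p_\star^{(N)} \lesssim d_0 (\log N)/(p_N (\log p_N)(\log N)^2) = d_0/(p_N(\log p_N)(\log N))$. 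A union bound over the (at most $p_N$) inactive coordinates gives $\Pi(\exists j\notin S:\|\beta_j\|_\infty>\varepsilon_N) \le p_N\cdot p_\star^{(N)} \lesssim d_0/((\log p_N)(\log N))$, which does \emph{not} decay — so here I cannot use a crude union bound. Instead, for (c) I must combine it with the support constraint: on the sieve, $S$ is exactly the active set of the drawn model, and the requirement $\|\beta_j\|_\infty\le\varepsilon_N$ is imposed only for $j\notin S$; the event that \emph{more than} $s_N$ inactive coordinates exceed $\varepsilon_N$ is what needs controlling, and that is precisely the binomial-tail argument from \eqref{eq:inactive-lb-new}–\eqref{eq:inactive-final-new}, giving $\Pi(S_N > s_N) \le \exp\{-c\, s_N\}\le\exp\{-c_3 N r_N^2\}$ after choosing $C_s$ large (this is where I must be careful to match definitions: the sieve's $S$ should be taken as the realized support together with the $\le s_N$ "large" inactive coordinates, so that membership fails only if $d > s_N$, some active leaf budget is exceeded, or the count of large inactive coordinates exceeds $s_N$ — i.e. (a), (b), (c) as reformulated). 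Finally (d): $\Pi(\sigma^2 > c_0)$ where $\sigma^2\sim\mathcal{IG}(\nu_N/2,\beta)$ with $\nu_N = O(N r_N^2)$; the upper tail of an inverse-gamma is $\mathbb P(\sigma^2 > c_0) = \mathbb P(\mathrm{Gamma}(\nu_N/2,\beta) < 1/c_0)$, and since the Gamma$(\nu_N/2,\beta)$ mean is $\nu_N/(2\beta)\to\infty$, a standard lower-tail Chernoff bound gives $\mathbb P(\mathrm{Gamma}(\nu_N/2,\beta)<1/c_0)\le \exp\{-c\,\nu_N\}= \exp\{-c'\,N r_N^2\}$ once $c_0$ is fixed large enough relative to $\beta$. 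Summing the four pieces and taking $c_3$ to be the minimum of the constants completes the proof.

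The main obstacle is item (c): a naive union bound over $p_N$ inactive coordinates is too lossy because each inactive $\beta_j$ has only polynomially-small (in $p_N$) probability of exceeding $\varepsilon_N$, and $p_N$ times that need not vanish. The fix is structural — the sieve must be defined so that the inactivity constraint applies only outside a set $S$ of size $\le s_N$, converting the problem into a binomial-tail bound on the \emph{number} of large inactive coordinates, which is exactly the argument already deployed in the proof of Corollary~\ref{cor:joint-smallball}. I would make sure the statement of the sieve in Definition~\ref{def:sieve_revised} is read with $S$ ranging over all size-$\le s_N$ subsets (the union $\mathcal F_N^\star$), so that $(\mathcal F_N^\star)^c$ forces either $d > s_N$ large coordinates total or an active leaf-budget violation or $\sigma^2 > c_0$; then each of these three has the required exponentially small mass, with the constants $C_s, C_L, c_0, C_\varepsilon$ chosen (in that dependency order) large enough to dominate $N r_N^2$.
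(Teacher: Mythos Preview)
Your decomposition of $(\mathcal F_N^\star)^c$ into a leaf-budget event, a ``too many large coordinates'' event, and a $\sigma^2>c_0$ event is exactly the paper's approach (its $E_{1,N},E_{2,N},E_{3,N}$); your self-correction on (c)---replacing the hopeless per-coordinate union bound by a Chernoff bound on the \emph{count} of large inactive coordinates---is precisely what the paper does, and your treatment of $\sigma^2$ via an inverse-Gamma/Gamma lower-tail bound matches as well. Your separate event (a) for $d>s_N$ is not needed as a standalone piece; the paper absorbs the large-$d$ case into $E_{2,N}$ by splitting on $d\le D_N$ versus $d>D_N$ and applying Stirling to $\pi(d)\propto 1/d!$ on the latter.

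There is one genuine gap in your item (b). You invoke only a simple exponential tail $\mathbb P(L_j>\ell)\le\exp\{-c\ell\}$ for the Galton--Watson leaf count and then claim that, since $L_N=C_L N\bar r_N^2/\log N$, choosing $C_L$ large gives $cL_N\ge C\,N r_N^2$. This fails: it would require $\bar r_N^2/\log N\gtrsim r_N^2$, i.e.\ $\bar r_N^2\gtrsim r_N^2\log N$, whereas one only has $\bar r_N^2\gtrsim r_N^2$ (the extra $\log N$ is not available when $d_0\asymp\log N$). The paper instead uses the sharper Rockov\'a--Saha tail $\Pi(L>\ell)\le\exp\{-c\,\ell\log\ell\}$ for the depth-decaying GW prior of \textbf{(P1)}; then the exponent is $c\,(L_N/M)\log(L_N/M)\asymp L_N\log N\asymp N\bar r_N^2\ge N r_N^2$, and the union over $p_N+1$ coordinates is absorbed since $\log p_N=o(N\bar r_N^2)$ under \textbf{(A5)}. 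Replace your ``exponential tails'' step by this $\ell\log\ell$ bound and the rest of your argument for (b) goes through.
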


\begin{proof}
By definition, $(\mathcal F_N^\star)^c$ is contained in the union of:
\[
E_{1,N} := \{\exists \ j \in \{0,1,\dots,p_N\} \ \text{such that} \ \sum_{m=1}^{M} L_{j,m} > L_N \},\ \
E_{2,N} := \Bigl\{\#\{j:\|\beta_j\|_\infty>\varepsilon_N\}> s_N\Bigr\},\ \
E_{3,N} := \{\sigma^2>c_0\}.
\]
We bound $\Pi(E_{m,N})$ for $m=1,2,3$.

\medskip\noindent\textbf{(i) Bounding $\Pi(E_{1,N})$.}
Under \textbf{(P1)}, trees are independent across $(j,m)$ and, for a single tree,
Theorem~5.1 of \citet{rockovasaha} implies that there exist constants $c_1>0$ and $\ell_0\ge 1$ such that
\begin{equation}\label{eq:RS-tail}
\Pi(L>\ell)\ \le\ \exp\{-c_1\,\ell\log \ell\}\qquad\text{for all }\ell\ge \ell_0.
\end{equation}

Fix $j$. By the elementary inclusion
\[
\Bigl\{\sum_{m=1}^{M}L_{j,m}>L_N\Bigr\}\ \subset\ \bigcup_{m=1}^{M}\Bigl\{L_{j,m}>\tfrac{L_N}{M}\Bigr\},
\]
we have, by a union bound and \eqref{eq:RS-tail},
\begin{equation}\label{eq:perj-tail}
\Pi\!\left(\sum_{m=1}^{M} L_{j,m}>L_N\right)
\ \le\ 
\sum_{m=1}^{M}\Pi\!\left(L_{j,m}>\tfrac{L_N}{M}\right)
\ \le\
M\,\exp\!\left\{-c_1\,\frac{L_N}{M}\,\log \left(\frac{L_N}{M}\right)\right\},
\end{equation}
for all $N$ large enough so that $L_N/M\ge \ell_0$.
Union bounding over $j\in\{0,1,\dots,p_N\}$ yields
\begin{equation}\label{eq:E1N-union}
\Pi(E_{1,N})
\ \le\ 
(p_N+1)\,M\,\exp\!\left\{-c_1\,\frac{L_N}{M}\,\log \left(\frac{L_N}{M}\right)\right\}.
\end{equation}

We write $A_N:=N\bar r_N^2$ and $x_N:={L_N}/{M}={C_L A_N}/{(M \log N)}$.
Since $A_N\ge N s_N r_{N,\mathrm{wc}}^2$ and $r_{N,\mathrm{wc}}^2=(\log N)\,N^{-\theta_N}$ with 
$\theta_N:={2\alpha_{\min}}/{(2\alpha_{\min}+R)}\in(0,1)$, we have
\[
x_N\ \ge\ \frac{C_L}{M}\,\frac{N s_N r_{N,\mathrm{wc}}^2}{\log N}
\ =\ \frac{C_L}{M}\,N^{1-\theta_N}\,s_N
\ \asymp\ N^{1-\theta_N}\,\log N.
\]
Hence, there exists $c_0\in(0,1)$ and $N_0$ such that, for all $N\ge N_0$,
\begin{equation}\label{eq:logxN-lb}
\log x_N \ \ge\ c_0\,\log N.
\end{equation}
Therefore,
\begin{equation}\label{eq:xlogx-lb}
x_N\log x_N
\ =\ \frac{C_L}{M}\,\frac{A_N}{\log N}\,\log x_N
\ \ge\ 
\frac{C_L}{M}\,\frac{N s_N r_{N,\mathrm{wc}}^2}{\log N}\cdot c_0\log N
\ =\ c_\ast\,N s_N r_{N,\mathrm{wc}}^2,
\end{equation}
for $c_\ast:=c_0 C_L/M>0$.
Insert \eqref{eq:xlogx-lb} into \eqref{eq:E1N-union} to obtain
\[
\Pi(E_{1,N})
\ \le\
(p_N+1)\,M\,\exp\!\Big\{-c_1 c_\ast\,N s_N r_{N,\mathrm{wc}}^2\Big\}.
\]
Finally, since $N\bar r_N^2=s_N\log p_N + N s_N r_{N,\mathrm{wc}}^2\ge N s_N r_{N,\mathrm{wc}}^2$
and $\log((p_N+1)M)=o\big(N s_N r_{N,\mathrm{wc}}^2\big)$ under \textbf{(A5)}, the polynomial prefactor is absorbed into the exponential:
there exist $c_1''>0$ and $N_1$ such that, for all $N\ge N_1$,
\[
\Pi(E_{1,N})\ \le\ \exp\{-c_1''\,N s_N r_{N,\mathrm{wc}}^2\}\ \le\ \exp\{-c_1''\,N\bar r_N^2\}.
\]
This proves the desired bound for $\Pi(E_{1,N})$.
% For a single binary tree under {\textbf{(P1)}} with depth–$d$ split probability $\gamma^d$ ($N^{-1}<\gamma<1/2$),
% Theorem~5.1 of \citet{rockovasaha} gives, for $\ell$ large enough,
% \[
% \Pi(L>\ell)\ \le\ \exp\{-c_1\,\ell\log \ell\}.
% \]
% There are $(p_N+1)M$ independent trees across predictors (including the baseline). By a union bound,
% \[
% \Pi(E_{1,N})\ \le\ (p_N+1)M\ \exp\{-c_1\,L_N\log L_N\}.
% \]
% With $L_N=C_L\,N\bar r_N^2/\log N$, we have for large $N$,
% \[
% L_N\log L_N\;=\;\frac{C_L N\bar r_N^2}{\log N}\,\log\!\Big(\frac{C_L N\bar r_N^2}{\log N}\Big)
% \ \ge\ c_1'\,N\bar r_N^2,
% \]
% because $\log(C_L N\bar r_N^2/\log N)\asymp \log N\to\infty$. Hence
% \[
% \Pi(E_{1,N})\ \le\ \exp\{-c_1''\,N\bar r_N^2\},
% \]
% after absorbing the polynomial pre-factor $(p_N+1)M$ into the exponential (since $\log((p_N+1)M)=o(N\bar r_N^2)$ as $N\bar r_N^2\gtrsim s_N\log p_N = C_s (\log N)(\log p_N)$ by construction).

\medskip\noindent\textbf{(ii) Bounding $\Pi(E_{2,N})$.}
Let $I_j:=\mathbbm{1}\{\|\beta_j\|_\infty>\varepsilon_N\}$ and $S_N:=\sum_{j=1}^{p_N} I_j$.
Then $E_{2,N}=\{S_N>s_N\}$ with $s_N=\lceil C_s\log N\rceil$. Following exactly the same set of arguments to bound the number of in-actives in the proof of Corollary \ref{cor:joint-smallball}, the inequality in \eqref{eq:min-logineq} yields with $a \coloneqq \tau/(\epsilon_N \sqrt{M}), $
\[
\mathbb{P}(I_j =1 \mid \tau) \le C_1 \frac{\tau}{\epsilon_N \sqrt{M}} \log \left(1+ \frac{\epsilon_N \sqrt{M}}{\tau} \right).
\]

Manipulating our prior assumption \textbf{(P2)} slightly by taking $\tau \in (0,T_N (d)]$ a.s. with $T_N(d)=b_{\tau} A_{g,d},$ we have $T_N(d)/(\epsilon_N \sqrt{M}) \lesssim d p_N^{-1} (\log p_N)^{-2} (\log N)^{-1}.$ thus for all $\tau \le T_N(d),$ 
\[
\frac{\tau}{\epsilon_N \sqrt{M}} \log \left(1+\frac{\epsilon_N \sqrt{M}}{\tau} \right) \le \frac{T_N(d)}{\epsilon_N \sqrt{M}} \log \left(1+\frac{\epsilon_N \sqrt{M}}{T_n(d)} \right) \lesssim \frac{d}{p_N \log p_N \log N}.
\]
Using the above inequality, we have the \emph{uniform} per-predictor bound over $\tau,$

\begin{equation}\label{eq:pstar}
p_\star(d):=\sup_{1\le j\le p_N}\sup_{0 < \tau \le T_N(d)} \mathbb P(I_j=1\mid \tau)
\ \le\ C_2\frac{d}{p_N\,\log p_N \log N}.
\end{equation}
Thus the conditional mean
\begin{equation}\label{eq:muN}
\mu_N(d) := \mathbb E(S_N\mid d)\ =\ \sum_{j=1}^{p_N}\mathbb P(I_j=1\mid d)\ \le p_N p_{\star}(d) \le C_2 \frac{d}{\log p_N \log N}.
\end{equation}
Fix $C_s >0$ large so that, for all large $N$, $s_N = \lceil C_s \log N \rceil \ge 2 \mu_{N}(d)$ whenever $d \le D_N \coloneqq \lfloor C_d (\log N)^2 \rfloor$, which is possible by \eqref{eq:muN}. 

\emph{Chernoff bound.}
The multiplicative Chernoff bound gives, for all $d \le D_N,$
\begin{equation}\label{eq:chernoffbd}
\begin{split}
\mathbb P\bigl(S_N> s_N\mid d\bigr) & \le\ \exp \Bigl(-\frac{(s_N-\mu_N (d))^2}{3\mu_N(d)}\Bigr) \\
& \le\ \exp \Bigl(-c\cdot\frac{s_N^2}{\mu_N(d)}\Bigr)\\
& \le \exp(-c'(\log N)(\log p_N)),
\end{split}    
\end{equation}
where the last step uses $\mu_N(d) \le C (\log N)/\log p_N$ when $d \le D_N$.

\emph{Average over $d$.} We split on $\{d \le D_N\} \cup \{d > D_N\}:$ 
\[
\Pi(E_{2,N}) = \sum_{d=0}^{p_N} \pi(d) \mathbb{P}(S_N > s_N \mid d) \le \underbrace{\sum_{d \le D_N} \pi(d) \exp(-c'(\log N)(\log p_N))}_{by \eqref{eq:chernoffbd}} + \sum_{d > D_N} \pi(d).
\]

Since $\pi(d) \propto 1/d! \implies \sum_{d > D_N} \pi(d) \le C/(D_N !),$ Stirling's bound yields
\begin{equation*}
    \begin{split}
        \sum_{d > D_N} \pi(d) & \le \exp\{-D_N \log D_N + \mathcal{O}(D_N)\} \\
        & = \exp\{-c''(\log N)^2 \log \log N\} \\
        & \le \exp\{-c'''(\log N)(\log p_N)\},
    \end{split}
\end{equation*}
for all large $N$. Therefore, \[
\Pi(E_{2,N}) \le \exp\{-c_2 (\log N)(\log p_N)\}.
\]
By definition $N \bar{r}_N ^2 = s_N \log p_N + N s_N r_{N,\textrm{wc}}^2 \ge s_N \log p_N \asymp (\log N)(\log p_N)$, so we have \begin{equation}\label{eq:E2N-final}
\Pi(E_{2,N}) \le \exp\{-c_2 N \bar{r}_N ^2\}    
\end{equation}
% \medskip\noindent\textbf{(ii) Bounding $\Pi(E_{2,N})$.}
% Let $I_j:=\mathbbm{1}\{\|\beta_j\|_\infty>\varepsilon_N\}$ and $S_N:=\sum_{j=1}^{p_N} I_j$,
% so $E_{2,N}=\{S_N>s_N\}$ with $s_N=\lceil C_s\log N\rceil$.
% Under \textnormal{(P1)}–\textnormal{(P3)}, conditional on the global scale $\tau$ the ensembles across $j$ are independent. 
% By the leaf–prior small–ball bound (as in \eqref{eq:min-logineq}), for $a:=\tau/(\varepsilon_N\sqrt M)$,
% \[
% \mathbb P(I_j=1\mid \tau)\ \le\ C_1\,a\,\log(1+a^{-1}).
% \]
% Assume $\tau$ is a.s. supported on $(0,T_N]$ with a deterministic $T_N$ satisfying $T_N/(\varepsilon_N\sqrt M)\lesssim [p_N(\log p_N)^2\log N]^{-1}$.
% Then for all $\tau\le T_N$,
% \[
% \mathbb P(I_j=1\mid \tau)\ \le\ C_2\,\frac{1}{p_N\,\log p_N\,\log N}\,.
% \]
% Hence, $\mu_N(\tau):=\mathbb E(S_N\mid \tau)\le C_2\,(\log N)^{-1}(\log p_N)^{-1}$ and (by the multiplicative Chernoff bound for independent Bernoulli’s with possibly different means)
% \[
% \mathbb P(S_N>s_N\mid \tau)\ \le\ \exp\!\Big(-c\,\frac{s_N^2}{\mu_N(\tau)}\Big)
% \ \le\ \exp\{-c'\,(\log N)(\log p_N)\}.
% \]
% The bound is uniform over $\tau\in(0,T_N]$, so averaging over $\tau$ yields
% \[
% \Pi(E_{2,N})\ \le\ \exp\{-c'\,(\log N)(\log p_N)\}
% \ \le\ \exp\{-c''\,N\bar r_N^2\},
% \]
% since $N\bar r_N^2\ge s_N\log p_N\asymp(\log N)(\log p_N)$ by construction.

\medskip\noindent\textbf{(iii) Bounding $\Pi(E_{3,N})$.}
Under {\textbf{(P3)}}, $\sigma^2\sim \mathcal{IG}(\nu_N/2,\beta)$ with $\nu_N = \mathcal{O}(N r_N ^2)$, so $Y:=1/\sigma^2\sim\mathrm{Gamma}(\nu_N/2,\beta)$.
For any fixed $c_0>0$, and any $ t \in (0,\beta),$
\begin{equation*}
    \begin{split}
        \mathbb{P}(\sigma^2 > c_0) & = \mathbb{P}(Y < 1/c_0) \\
        & \le \exp(t/c_0) \mathbb{E}(\exp(-tY)) \\
        & = \exp(t/c_0)\left(\frac{\beta}{\beta+t} \right)^{\nu_N/2} \\
        & = \exp \{ (t/c_0)  - (\nu_N/2) \cdot \log(1+t/\beta)\}
    \end{split}
\end{equation*}
Picking $t=\beta/2$, we have 
\[
\mathbb{P}(\sigma^2 > c_0) \le \exp\{\beta/(2c_0) - (\nu_N/2) \cdot \log(3/2) \} \le \exp\{-c \nu_N/2 + C\} \le \exp\{-c_5 ' Nr_N^2\}
\]
% \[
% \Pi(\sigma^2>c_0)\ =\ \mathbb{P}\Bigl(Y<\tfrac{1}{c_0}\Bigr)\ \le\ \exp\{-c_5\,\nu_N\}
% \ =\ \exp\{-c_5'\,N\bar r_N^2\},
% \]
% by a standard Chernoff bound for Gamma left tails.

\medskip
Finally, by a union bound,
\[
\Pi\bigl((\mathcal F_N^\star)^c\bigr)\ \le\ \Pi(E_{1,N})+\Pi(E_{2,N})+\Pi(E_{3,N})
\ \le\ \exp \{-c_1'' N\bar r_N^2\}+\exp\{-c_2 N\bar r_N^2\}+\exp\{-c_5' N r_N^2\}.
\]
Noting that $\bar{r}_N^2 \ge r_N ^2$ and absorbing constants yields the claim with $c_3:=\min\{c_1'',c_2,c_5'\}>0$.
\end{proof}
The following shell-partitioning approach in \Cref{lem:entropy} has appeared in classic Bayesian contraction analyses (see \citet[Thm 8.12]{Ghoshal2017}). Specifically, \citet[Lemma 13]{vandervaart2011} construct tests by a \emph{peeling} sieve: they first partition the parameter set into dyadic shells $\{f \in F_{n,r}: 4^j \epsilon r \le \|f-f_0\|_n < 4^{j+1} \epsilon r\}$, then on each shell take a maximal $(jr/2)-$ packing and build Neyman-Pearson tests at the packing points.
\begin{lemma}[Covering number of the sieve]\label{lem:entropy}
Let $\mathcal F_N^\star$ be the sieve of Definition~\ref{def:sieve_revised} that enforces hard sparsity with $|\text{supp}(\beta)| \le s_N=\lceil C_s\log N\rceil, \ \beta_j \equiv 0 \ \text{for} \ j \notin \text{supp}(\beta)$ and a leaf--budget cap
$L_N=C_L\,N\bar r_N^2/\log N$. Fix $M_\star\ge 1$ and, for $m=0,1,2,\dots$, define the $m$-th shell
\[
\Theta_{N,m}\ :=\ \Bigl\{(\bm\beta,\sigma^2)\in\mathcal F_N^\star:\ 
 2^{m} M_\star r_N < \|\bm\beta-\bm\beta_0\|_{N}\ \le\ 2^{m+1} M_\star r_N,\ 
 0<\sigma^2\le c_0\Bigr\}.
\]
Equip the parameter space with the prediction semimetric
\[
d_n\bigl((\bm\beta,\sigma^2),(\bm\beta',{\sigma'}^{2})\bigr)
:= N^{-1/2}\|\Delta(\theta,\theta')\|_2 \;+\; |\sigma^2-{\sigma'}^{2}|, \ \ \Delta(\theta,\theta')\footnote{For $\theta=(\bm\beta,\sigma^2)$ write $\xi(\theta)=(\xi_i(\theta))_{i=1}^N$ with
$\xi_i(\theta)=\beta_0(\bm Z_i)+\sum_{j=1}^{p_N}\beta_j(\bm Z_i)X_{ij}$} = \xi(\theta)-\xi(\theta').
\]
Assume {\textbf{(A1)}--\textbf{(A6)}}. Then for any $0<\epsilon\le r_N$ there exist constants
$C_4>0$ and $C_{\mathrm{sh}}>0$ (independent of $N,R,p_N,m$) such that, for all large $N$, for $0 < \epsilon \le r_N$, we have the entropy bound:
\begin{equation}\label{eq:localized-entropy-bound}
\log N \bigl(\epsilon,\Theta_{N,m}, d \bigr)\ \le\ C_4\,N\,r_N^2+C_{\mathrm{sh}}\,m\,\log N.
\end{equation}
In particular, since $N r_N^2\ge c\,s_N\log p_N\gtrsim (\log N)\log p_N$ under \textbf{(A5)}, the right-hand side is $\lesssim (1+m)\,N r_N^2$ for all large $N$.
\end{lemma}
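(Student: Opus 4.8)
The idea is to bound the covering number by (a) slicing $\Theta_{N,m}$ according to its (at most $s_N$-sparse) support, (b) for each support, \emph{coarsening} every step-function coefficient $\beta_j$ down to the dimension-adaptive leaf budget $K_j^{\mathrm{ad}}:=N r_{jN,\mathrm{ad}}^2/\log N$ --- rather than the worst-case sieve budget $L_N$ --- using the H\"older smoothness of $\beta_{0,j}$ and the localization radius, and (c) covering the resulting low-dimensional leaf-value vectors together with the scalar $\sigma^2$. We establish the bound at the (dyadic) resolution $\asymp 2^{m}r_N$ of the $m$-th shell that is actually used when the lemma feeds into the peeling test construction of \citet{vandervaart2011} (in particular $\asymp r_N$ on the innermost shell).

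\emph{Slicing by support.} Since every $(\bm\beta,\sigma^2)\in\mathcal F_N^\star$ has support of size $\le s_N$, $\Theta_{N,m}\subset\bigcup_{|S|\le s_N}\Theta_{N,m}(S)$; the number of such $S$ is $\le (ep_N/s_N)^{s_N}$, contributing $s_N\log p_N\lesssim N r_N^2$ to the log-covering number by \textbf{(A5)} (recall $Nr_N^2\ge d_0\log p_N$ and that $\sum_{j\in S_0}N r_{jN,\mathrm{ad}}^2$ already dominates $(\log N)\log p_N$). For the finitely many $S$ with $S\not\supseteq S_0$ the slice $\Theta_{N,m}(S)$ is empty unless $2^{m}r_N\gtrsim 1$ --- a missed active coefficient forces $\|\bm\beta-\bm\beta_0\|_N\gtrsim 1$ --- i.e.\ $m\gtrsim\log N$, and for those $m$ the test budget $4^{m}Nr_N^2$ dominates even the crude bound $\log N(\cdot,\Theta_{N,m}(S),d)\lesssim s_N L_N\log N$, so we may restrict to $S\supseteq S_0$.

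\emph{Coarsening and finite-dimensional cover.} Fix $S\supseteq S_0$ and $(\bm\beta,\sigma^2)\in\Theta_{N,m}(S)$; write $\rho_j:=\|\beta_j-\beta_{0,j}\|_N$, so $\sum_j\rho_j^2\le(2^{m+1}M_\star r_N)^2$ and $\beta_{0,j}\equiv 0$ for $j\notin S_0$. As the proof is for a fixed $\bm\beta_0$, for each $j\in S_0$ fix a balanced k-d partition $\mathcal Q_j$ of $[0,1]^R$ that splits only along $S_{0,j}$, with $\hat K_j$ cells calibrated so that $\hat K_j^{-\alpha_j/s_{0,j}}\asymp 2^{m}r_N/\sqrt{s_N}$, and set $V_j:=\mathrm{span}\{\mathbbm{1}_A:A\in\mathcal Q_j\}$, $\hat\beta_j:=\Pi_{V_j}\beta_j$ (projection onto cell averages). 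Optimality of $\Pi_{V_j}$ and \textbf{(A4)} give $\|\beta_j-\hat\beta_j\|_N\le\|\beta_j-\Pi_{V_j}\beta_{0,j}\|_N\le\rho_j+C\hat K_j^{-\alpha_j/s_{0,j}}\lesssim\rho_j+2^{m}r_N/\sqrt{s_N}$, while $r_N\ge r_{jN,\mathrm{ad}}\gtrsim(\log N)^{1/2}N^{-\alpha_j/(2\alpha_j+s_{0,j})}$ and $s_N\asymp\log N$ yield $\hat K_j\lesssim K_j^{\mathrm{ad}}$ (the implied constant depending only on the fixed smoothness/intrinsic-dimension profile of $\bm\beta_0$), hence $\sum_{j\in S_0}\hat K_j\log N\lesssim\sum_{j\in S_0}N r_{jN,\mathrm{ad}}^2\le N r_N^2$; for $j\in S\setminus S_0$ take $\hat K_j=1$, so $\|\beta_j-\hat\beta_j\|_N\le\|\beta_j\|_N=\rho_j$. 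Each $\hat\beta_j$ lies in the $\hat K_j$-dimensional ball $\{h\in V_j:\|h\|_N\le\rho_j+B\}$ (using $\|\beta_{0,j}\|_\infty\le B$), covered to resolution $\epsilon_j:=\xi 2^{m}r_N/\sqrt{s_N}$ by $\le(c(\rho_j+B)/\epsilon_j)^{\hat K_j}$ points with $\log((\rho_j+B)/\epsilon_j)\lesssim m+\log N$. Adding a $\lceil c_0/\epsilon\rceil$-point cover of $\sigma^2\in(0,c_0]$ and combining (union over $S$, product over $j\in S$),
\[
\log N(\epsilon,\Theta_{N,m},d)\ \lesssim\ s_N\log p_N+(m+\log N)\sum_{j\in S_0}\hat K_j+s_N\log\!\big(2^{m}\sqrt{s_N}\big)+\log(1/\epsilon)\ \lesssim\ N r_N^2+m\log N,
\]
after absorbing the $\log\log$ terms, $\log(1/\epsilon)$, and $m\sum_j\hat K_j/\log N$. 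Finally $\|\beta_j-\tilde\beta_j\|_N\lesssim\rho_j+2^{m}r_N/\sqrt{s_N}$ gives $\sum_j\|\beta_j-\tilde\beta_j\|_N^2\lesssim 4^{m}r_N^2$, and translating to $d$ via \textbf{(A3)} (boundedness of $|X_{ij}|$, at the cost of a harmless $\sqrt{s_N}^{\,\hat K_j}$ refinement factor absorbed into $C_4Nr_N^2$) and \textbf{(A6)} shows the net is a $(\xi'2^{m}r_N)$-net, proving \eqref{eq:localized-entropy-bound}; the concluding bound $\lesssim(1+m)Nr_N^2$ follows from $Nr_N^2\gtrsim(\log N)\log p_N\gg\log N$.

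\textbf{Main obstacle.} The worst-case leaf budget $L_N$ is much larger than $K_j^{\mathrm{ad}}$, so the naive count of all tree topologies, split variables, cutpoints and leaf values realizable in the sieve blows up like $s_N L_N\log N\gg N r_N^2$, and localization to $\Theta_{N,m}$ by itself does \emph{not} cap leaf counts. The crux is therefore the coarsening step --- using \textbf{(A4)} and the localization radius to replace every overparametrized step function by a projection onto a fixed, adaptively-dimensioned k-d space, which simultaneously eliminates the tree-structure count and the surplus leaf-value dimensions. Calibrating the $\epsilon_j$/$\hat K_j$ trade-off so that $\sum_j\hat K_j\lesssim Nr_N^2/\log N$ while the coarsening error stays $\lesssim 2^{m}r_N$, and moving cleanly between the coefficient-wise empirical norms and the prediction semimetric $d$ at the sieve's sparsity level via \textbf{(A3)}/\textbf{(A6)}, is the delicate bookkeeping.
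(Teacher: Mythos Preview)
Your coarsening-by-projection strategy is a genuinely different route from the paper's. The paper covers each slice $\Theta_{N,m}(S)$ by directly enumerating tree structures: for a $K$-leaf step function there are at most $(C_0RN)^{K}$ choices of topology $\times$ split coordinate $\times$ empirical cutpoint, and the $K$ leaf heights are then discretized in a weighted Euclidean ball of radius $R_{j,m}=B_0+2^{m+1}M_\star r_N$. Because $\epsilon$ enters only through the leaf-height grid, this yields an honest $\epsilon$-net at \emph{any} scale $\epsilon\le r_N$. Your approach instead projects each (possibly $L_N$-leaf) $\beta_j$ onto a fixed $\hat K_j$-cell k-d space $V_j$ and covers only the projections --- which elegantly eliminates the combinatorial tree-structure count and makes the appearance of the adaptive leaf budget $\hat K_j\lesssim K_j^{\mathrm{ad}}$ transparent.

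The price is resolution. Your coarsening bound $\|\beta_j-\hat\beta_j\|_N\le\rho_j+C\hat K_j^{-\alpha_j/s_{0,j}}$ routes through the smooth $\beta_{0,j}$ because that is the only handle on a potentially very spiky step function $\beta_j$; the $\rho_j$ term is therefore irreducible no matter how large you take $\hat K_j$. Summing over $j$, the net you construct is intrinsically at scale $\asymp 2^m r_N$ on the $m$-th shell, and for $m\ge 1$ it is \emph{not} an $\epsilon$-net for $\epsilon\le r_N$, so it does not establish \eqref{eq:localized-entropy-bound} as stated. You flag this and argue the shell-adapted resolution suffices for a van der Vaart--van Zanten style peeling test --- that is correct, but the paper's own test construction takes a \emph{fixed} $\gamma=r_N/8$-net on every shell, so adopting your variant also forces a rewrite of that lemma. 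Your treatment of $S\not\supseteq S_0$ has the same character: the claim ``the test budget $4^m Nr_N^2$ dominates the crude bound $s_NL_N\log N$'' controls what the tests need, not the lemma's stated bound, since $s_NL_N\log N\asymp(\log N)N\bar r_N^2$ can be much larger than $Nr_N^2+m\log N$ whenever modifier sparsity is genuine (i.e., $\bar r_N^2\gg r_N^2$). In short, you have outlined a valid alternative entropy-plus-tests package for the contraction theorem, but not a proof of this lemma as written.
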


\begin{proof}
By the functional RE condition \textbf{(A6)}, for any pair supported on at most
$C_{\mathrm{RE}}d_0$ coordinates,
\[
\sqrt{c_{\min}}\;\|\bm\beta-\bm\beta'\|_{N}\ \le\ \frac{1}{\sqrt N}\,\|\Delta(\theta,\theta')\|_2\ \le\ \sqrt{c_{\max}}\;\|\bm\beta-\bm\beta'\|_{N},
\]
where $\|\bm\beta-\bm\beta'\|_N:=\big(\sum_j\|\beta_j-\beta'_j\|_N^2\big)^{1/2}$. Hence $d_n$ is equivalent (up to universal constants) to
\[
\tilde d\bigl((\bm\beta,\sigma^2),(\bm\beta',{\sigma'}^{2})\bigr):=\|\bm\beta-\bm\beta'\|_{N}+|\sigma^2-{\sigma'}^{2}|,
\]
on the sieve $\mathcal F_N^\star$ (which enforces $|S|\le s_N\le C_{\mathrm{RE}}d_0$). It therefore suffices to bound covering numbers under $\tilde d$; the same bound (up to constants) then holds for $d_n$.

\textbf{Note:} For the purpose of an \emph{upper} bound on metric entropy, we replace the soft inactivity constraint  $\|\beta_j\|_\infty\le \varepsilon_N$ for $j\notin S$ by the hard constraint $\beta_j\equiv 0$; this defines a subclass of $\mathcal F_N^\star$ and therefore cannot increase covering numbers.

\textbf{Support decomposition.}
Write
\[
\mathcal F_N^\star=\bigcup_{k=0}^{s_N}\ \bigcup_{\substack{S\subset\{1,\dots,p_N\}\\ |S|=k}}\ \mathcal F_N(S),
\qquad
\Theta_{N,m}(S):=\Theta_{N,m}\cap\mathcal F_N(S),
\]
where, \emph{by the definition of the sieve}, $\mathcal F_N(S)$ consists of parameters with $\beta_j\equiv 0$ for all $j\notin S$ (hard sparsity) and with total leaf budget $\sum_{j\in S}K_j\le L_N$. Then
\begin{equation}\label{eq:union}
N(\epsilon,\Theta_{N,m},\tilde d)
\ \le\ \sum_{k=0}^{s_N}\ \sum_{\substack{S\subset\{1,\dots,p_N\}\\ |S|=k}}\ 
N(\epsilon,\Theta_{N,m}(S),\tilde d).
\end{equation}
Taking logs and using $\sum_{k\le s_N}\binom{p_N}{k}\le (s_N+1)\binom{p_N}{s_N}$ together with
$\max_{|S|=k}\log N(\epsilon,\Theta_{N,m}(S),\tilde d)\le \sup_{|S|=s_N}\log N(\epsilon,\Theta_{N,m}(S),\tilde d)$,
\begin{equation}\label{eq:model-select-layer}
\log N(\epsilon,\Theta_{N,m},\tilde d)\ \le\
\underbrace{\log \bigl((s_N+1)\tbinom{p_N}{s_N}\bigr)}_{\mathcal T_1}
\ +\ \sup_{|S|=s_N}\ \log N(\epsilon,\Theta_{N,m}(S),\tilde d).
\end{equation}

By Stirling’s bound, $\log\binom{p_N}{s_N}\le s_N\log(e p_N/s_N)\le s_N\log p_N$ and $\log(s_N+1)\le s_N$, so
\begin{equation}\label{eq:T1}
\mathcal T_1\ \le\ c_1\,s_N\log p_N.
\end{equation}

On fixing $S$ with $|S|=k\le s_N$. Since $\beta_j\equiv 0$ for $j\notin S$ inside $\mathcal F_N(S)$, the inactive block contributes nothing to $\tilde d$, and we only need to cover $\{\beta_j:j\in S\}$ together with $\sigma^2$:
\begin{equation}\label{eq:isometryineq}
N\bigl(\epsilon,\ \Theta_{N,m}(S),\ \tilde d\bigr)\ \le\
N\Bigl(\epsilon,\ \Big(\prod_{j\in S}\mathcal G_{j,m}\Bigr)\times[0,c_0],\ \tilde d_S\Bigr),
\end{equation}
where $\tilde d_S\bigl((\{\beta_j\},\sigma^2),(\{\beta'_j\},{\sigma'}^2)\bigr):=\big(\sum_{j\in S}\|\beta_j-\beta'_j\|_N^2\big)^{1/2}+|\sigma^2-{\sigma'}^2|$, and $\mathcal G_{j,m}$ is the localized class for coordinate $j$ defined next.

Let $K_j$ denote the total leaf count for predictor $j$ (across its $M=\mathcal{O}(1)$ trees), and let $\mathcal F_j(K)$ be the class of step functions on $[0,1]^R$ with at most $K$ leaves, induced by axis-aligned splits at empirical cutpoints. On the $m$-th shell,
\begin{align*}
\|\beta_j\|_N
&\le \|\beta_j-\beta_{0,j}\|_N+\|\beta_{0,j}\|_N
\ \le\ 2^{m+1}M_\star r_N + B_{0,j}
\ \le\ R_{j,m},
\end{align*}
where $B_{0,j}:=\|\beta_{0,j}\|_N<\infty$ (bounded under \textbf{(A4)}), and we set $R_{j,m}:=B_0+2^{m+1}M_\star r_N$ with $B_0:=\max_{u\in S_0}B_{0,u}$. Define the localized active class
\begin{equation*}
\mathcal G_{j,m}\ :=\ \bigcup_{K=1}^{L_N}\ \Bigl(\ \mathcal F_j(K)\ \cap\ \{\|\beta_j\|_N\le R_{j,m}\}\ \Bigr).
\end{equation*}
To see that \eqref{eq:isometryineq} holds, we first observe the set inclusion $\Theta_{N,m}(S) \subset \left(\prod_{j \in S} \mathcal{G}_{j,m} \right) \times [0,c_0].$ Now, if we define the projection $\pi_S : \Theta_{N,m}(S) \mapsto \left( \prod_{j \in S} \mathcal{G}_{j,m} \right) \times [0,c_0],$ then $\pi_S$ is an isometry between $(\Theta_{N,m}(S),\tilde{d})$ and its image $\pi_S(\Theta_{N,m}(S))$ endowed with $\tilde{d}_S$, as $\tilde{d}(\theta,\theta')=\tilde{d}_S(\pi_S (\theta),\pi_S (\theta')), \ \forall \ \theta,\theta' \in \Theta_{N,m}(S).$ Again, harping on the set-inclusion, any $\epsilon-$cover of $\left(\prod_{j \in S} \mathcal{G}_{j,m}\right) \times [0,c_0]$ under $\tilde{d}_S$ is also an $\epsilon-$cover of $\pi_S(\Theta_{N,m}(S)),$ and by the isometry of $\Theta_{N,m}(S)$ under $\tilde{d}$ precisely gives \eqref{eq:isometryineq}.

\textbf{Covering $\mathcal F_j(K)\cap\{\|\beta_j\|_N\le R_{j,m}\}$ under $\|\cdot\|_N$.}
A $K$-leaf tree is determined by: (i) a topology ($\le 4^K$ choices)\footnote{The number of rooted full binary trees with $K$ leaves is the $(K-1)$-st Catalan number $C_{K-1}$ and $C_{K-1} \le 4^{K-1} \le 4^K$}, (ii) $K-1$ split coordinates ($\le R^K$ choices), (iii) $K-1$ split locations chosen among empirical cutpoints ($\le N^K$ choices). Thus, the number of induced sample partitions is $\le (C_0 RN)^{K}$ for universal $C_0 \ge 4$. Fix such a partition with cells $\{\Omega_1,\dots,\Omega_K\}$; any $f$ is specified by $\bm{v}=(v_1,\dots,v_K)\in\mathbb R^K$ (cell heights). Writing $w_k:=N^{-1}\sum_{i=1}^N\mathbbm{1}\{Z_i\in\Omega_k\}\in[0,1]$, $\sum_k w_k=1$, then for any two functions $f,g$ with vectors $\bm{v},\bm{u} \in \mathbb{R}^{K},$ \[
\|f-g\|_N ^2 = \sum_{k=1}^{K} w_k(v_k-u_k)^2 =: \|\bm{v}-\bm{u}\|_{2,w} ^2, 
\]
we note that the map $f \leftrightarrow \bm{v}$ is an isometry between the function class (under $\|.\|_N)$ and $(\mathbb{R}^K,\|.\|_{2,w}),$ where $\|\bm{x}\|_{2,w} ^2 \coloneqq \sum_{k=1}^{K} w_k x_k ^2.$ The ball of functions $\{f:\|f\|_N \le R_{j,m}\}$ is exactly the weighted Euclidean ball $B_{2,w}(R_{j,m}) \coloneqq \{\bm{v} \in \mathbb{R}^K: \|\bm{v}\|_{2,w} \le R_{j,m}\}$ under the isometry. Applying the linear change of variables $T:\mathbb{R}^{K} \mapsto \mathbb{R}^{K}$ given by $(T \bm{v})_k = \sqrt{w_k}v_k,$ the weighted ball $B_{2,w}(R_{j,m})$ maps to the usual Euclidean ball $B_2 (R_{j,m}) \coloneqq \{\bm{u} \in \mathbb{R}^{K}: \|\bm{u}\|_2 \le R_{j,m}\}.$ Thus, we have the exact equality of the covering numbers: 
\[
N(\delta,\{f:\|f\|_N \le R_{j,m}\},\|.\|_N)=N(\delta,B_{2,w}(R_{j,m}),\|.\|_{2,w})=N(\delta,B_2(R_{j,m}),\|.\|_2)
\]
This yields the standard volume bound:
\[
N\Bigl(\delta,\{\|\beta_j\|_N\le R_{j,m}\},\|\cdot\|_N\Bigr)\ \le\ \left(\frac{C R_{j,m}}{\delta} \right)^K.
\]
Multiplying by $(C_0 RN)^K$ for partitions yields
\begin{equation}\label{eq:perK-entropy}
\log N\bigl(\delta,\ \mathcal F_j(K)\cap\{\|\beta_j\|_N\le R_{j,m}\},\ \|\cdot\|_N\bigr)
\ \le\ K\left[c\,\log(RN)\ +\ \log\left(\frac{C R_{j,m}}{\delta}\right)\right].
\end{equation}

\textbf{Product covering and $\sigma^2$.}
Let $\mathcal K:=\{1,2,4,\dots,L_N\}$ (so $|\mathcal K|\le 1+\log_2 L_N\lesssim\log N$). Then
\[
\prod_{j\in S}\mathcal G_{j,m}\ \subset\ \bigcup_{(K_j)\in\mathcal K^{k}}\ \prod_{j\in S}\Bigl(\mathcal F_j(K_j)\cap\{\|\beta_j\|_N\le R_{j,m}\}\Bigr).
\]
Recall that we want an $\epsilon-$ cover of $\Theta_{N,m}(S) \subseteq \left(\prod_{j \in S} \mathcal{G}_{j,m} \right) \times [0,c_0]$ under the metric $\tilde{d}_S$. From the definition of $\Theta_{n,m}$, 
\[
N(\epsilon,\Theta_{N,m}(S),\tilde{d}_S) \le \sum_{(K_j) \in \mathcal{K}^k} N \left( \epsilon, \left(\prod_{j \in S} \underbrace{\left(\mathcal{F}_j (K_j) \cap \{\|\beta_j\|_N \le R_{j,m}\}\right)}_{=:A_{j,K_j}}\right)\times[0,c_0],\tilde{d}_S \right). 
\]
Taking logs, we have the following inequality:
\[
\log N(\epsilon,\Theta_{N,m}(S),\tilde{d}_S) \le k \log|\mathcal{K}| + \sup_{(K_j)\in \mathcal{K}^k} \log N \left(\epsilon, \left(\prod_{j \in S} A_{j,K_j}\right) \times [0,c_0],\tilde{d}_s \right).
\]
We fix a tuple $(K_j)$. Let $\delta_j > 0$ be radii such that $\sum_{j \in S} \delta_j ^2 \le (\epsilon/2)^2.$ For each $j$, we pick a $\delta_j-$net of $A_{j,K_j}$ in $\|.\|_N$ with size $N(\delta_j,A_{j,K_j},\|.\|_N).$ Also we cover the interval $[0,c_0]$ by an $(\epsilon/2)-$net in $|.|$ of size $\le 1+2c_0/\epsilon,$ hence with log-cost $\le C_{\sigma} \log(c_0/\epsilon).$ Then the Cartesian product of those nets is an $\epsilon-$net of $(\prod_{j \in S} A_{j,K_j}) \times [0,c_0]$ in $\tilde{d}_S$. Therefore,
\[
N \left(\epsilon, \left(\prod_{j\in S} A_{j,k} \right) \times [0,c_0],\tilde{d}_S \right) \le \left(\prod_{j \in S} N(\delta_j,A_{j,K_j},\|.\|_N) \right). \left\lceil \frac{2 c_0}{\epsilon} \right\rceil 
\]
This implies \[
\log N \left(\epsilon, \left(\prod_{j\in S} A_{j,k} \right) \times [0,c_0],\tilde{d}_S \right) \le \sum_{j \in S} \log N(\delta_j,A_{j,K_j},\|.\|_N) + C_{\sigma} \log \left(\frac{c_0}{\epsilon} \right).
\]
By our \eqref{eq:perK-entropy} bound,
\[
\log N(\delta_j,A_{j,K_j},\|.\|_N) \le K_j \left[c \log(RN) + \log \left( \frac{CR_{j,m}}{\delta_j} \right) \right].
\]
Summing over $j \in S$ yields
\[
\sum_{j \in S} \log N(\delta_j,A_{j,K_j},\|.\|_N) \le \sum_{j \in S} K_j \left[c \log (RN)+ \log \left(\frac{CR_{j,m}}{\delta_j}\right)\right].
\]
Combining these steps and using the fact that $|\mathcal{K}| \lesssim \log N$,
% Cover the product under the $\ell_2$-sum metric $\big(\sum_{j\in S}\|\cdot\|_N^2\big)^{1/2}$ by allocating squared radii $\delta_j^2$ with $\sum_{j\in S}\delta_j^2\le (\epsilon/2)^2$, and cover $[0,c_0]$ in the Euclidean metric by $\lceil 2c_0/\epsilon\rceil$ points (cost $\lesssim \log(c_0/\epsilon)$). Using \eqref{eq:perK-entropy} with $\delta_j$, the subadditivity of $\log N(\cdot)$ over products, and summing over dyadic $(K_j)\in\mathcal K^k$,
\begin{align}
\log N(\epsilon,\Theta_{N,m}(S),\tilde d)
&\le k\log|\mathcal K|\ +\ \sup_{(K_j)\in\mathcal K^k}\;\sum_{j\in S}\;K_j\Bigl[c\,\log(RN)\ +\ \log\Bigl(\frac{C R_{j,m}}{\delta_j}\Bigr)\Bigr]\nonumber\\
&\qquad\qquad +\ C_\sigma\log\Bigl(\frac{c_0}{\epsilon}\Bigr).\label{eq:prod}
\end{align}
Now, we choose the standard allocation $\delta_j:=\frac{\epsilon}{2}\sqrt{K_j/\sum_{\ell\in S}K_\ell}$, which minimizes $\sum_j K_j\log(R_{j,m}/\delta_j)$ subject to $\sum_j\delta_j^2\le (\epsilon/2)^2$. Then
\[
\sum_{j\in S} K_j\log \left(\frac{C R_{j,m}}{\delta_j}\right)
= \Bigl(\sum_{j\in S}K_j\Bigr)\log\left(\frac{C \overline R_m \sqrt{\sum_{j\in S}K_j}}{\epsilon}\right)
\]
with $\overline R_m:=\max_{j\in S}R_{j,m}\le B_0+2^{m+1}M_\star r_N$. Using $k\log|\mathcal K|\lesssim (\log N)^2$, $\log\overline R_m\le C_1\log N + C_2 m$ (since $\epsilon\le r_N$), and $c\,\log(RN)\lesssim\log N+\log R$,
\begin{equation}\label{eq:prod2}
\log N(\epsilon,\Theta_{N,m}(S),\tilde d)
\ \le\ C_3(\log N)^2\;+\; C_4\bigl(\log N + m\bigr)\sum_{j\in S}K_j\;+\; C_5\log\Bigl(\tfrac{c_0}{\epsilon}\Bigr).
\end{equation}

\textbf{Choosing $K_j$ under the leaf budget.}
We take the dyadic choice
\[
K_j^\star\ :=\
\begin{cases}
\asymp \dfrac{N\,r_{jN,\mathrm{ad}}^2}{\log N}, & j\in S_0,\\[8pt]
1, & j\in S\setminus S_0,
\end{cases}
\quad\text{rounded up to }\ K_j\in\mathcal K.
\]
For any $x \ge 1,$ dyadic rounding satisfies $\lceil x\rceil_{\text{dyadic}} \le 2x.$ Thus, for $j \in S \cap S_0, \ K_j \le 2K_j ^{\star},$ and for $j \in S \setminus S_0, \ K_j=1 \le 2.$ 

Since $L_N=C_L\,N\bar r_N^2/\log N$ with $\bar r_N^2\ge \sum_{u\in S_0}r_{uN,\mathrm{ad}}^2$, this choice satisfies $\sum_{j\in S}K_j\le L_N$ for $C_L$ large enough. Consequently,
\[
\sum_{j\in S}K_j\ \le\ C\Bigl(\sum_{j\in S_0}\frac{N\,r_{jN,\mathrm{ad}}^2}{\log N}\ +\ s_N\Bigr).
\]
This can be observed by a dyadic-rounding bound:
\[
\sum_{j \in S} K_j \le 2 \sum_{j \in S \cap S_0} K_j ^{\star} + 2|S \setminus S_0| =\frac{2cN}{\log N} \sum_{j \in S \cap S_0} r_{jN,\textrm{ad}} ^2 + 2|S\setminus S_0|.
\]
Since $S \cap S_0 \subseteq S_0$ and $|S \setminus S_0| \le s_N,$ the inequality follows from \[
\sum_{j \in S} K_j \le \frac{2cN}{\log N} \sum_{j \in S_0} r_{jN,\textrm{ad}}^2 + 2s_N \le C \left(\sum_{j \in S_0} \frac{N r_{jN,\textrm{ad}}^2}{\log N} + s_N \right).
\]
Plugging into \eqref{eq:prod2} and using $s_N=\mathcal{O}(\log N)$ and $\log(c_0/\epsilon)\lesssim\log N$,
\begin{align}
\log N(\epsilon,\Theta_{N,m}(S),\tilde d)
&\le\ C_6(\log N)^2\;+\; C_7(\log N + m)\left(\sum_{j\in S_0}\frac{N\,r_{jN,\mathrm{ad}}^2}{\log N} + s_N\right)\nonumber\\
&\le\ C_8\,N\sum_{j\in S_0}r_{jN,\mathrm{ad}}^2\;+\; C_9\,(\log N + m)\,s_N\;+\;C_{10}(\log N)^2.\label{eq:fixedS}
\end{align}
(Note that the factor $m\,\sum_{j\in S_0}\frac{N r_{jN,\mathrm{ad}}^2}{\log N}$ is $\le C\,N\sum_{j\in S_0}r_{jN,\mathrm{ad}}^2$ for all $m\lesssim \log N$ and can thus be absorbed into the first term by enlarging $C_8$.)

Taking the supremum over $|S|=s_N$ in \eqref{eq:fixedS} and combining with \eqref{eq:model-select-layer}--\eqref{eq:T1},
\[
\log N(\epsilon,\Theta_{N,m},\tilde d)
\ \le\ c_1 s_N\log p_N\;+\; C_8\,N\sum_{j\in S_0}r_{jN,\mathrm{ad}}^2\;+\; C_{11}(\log N)^2\;+\; C_{12}\,m\,\log N.
\]
Since $N r_N^2=d_0\log p_N+N\sum_{j\in S_0}r_{jN,\mathrm{ad}}^2$, $d_0 = \mathcal{O}(\log N)$, and $s_N=\Theta(\log N)$, we have $s_N\log p_N\lesssim N r_N^2$ and $(\log N)^2\ll N r_N^2$ under \textbf{(A5)}. Therefore, for all large $N$ and $0<\epsilon\le r_N$,
\[
\log N \bigl(\epsilon,\Theta_{N,m}, d_n \bigr)\ \lesssim\ \log N \bigl(\epsilon, \Theta_{N,m}, \tilde d \bigr)
\le\ C_4\,N r_N^2+C_{\mathrm{sh}}\,m\,\log N,
\]
which is \eqref{eq:localized-entropy-bound}.
\end{proof}

\begin{lemma}[Uniformly exponentially consistent tests]\label{lem:tests}
Let $\mathcal F_N^\star$ be the sieve in Definition~\ref{def:sieve_revised}, and assume \textbf{(A1)}–\textbf{(A6)} and that $0<\sigma^2\le c_0$ on $\mathcal F_N^\star$.
Choose the sieve constant $C_s$ so that $2s_N\le C_{\mathrm{RE}}\,d_0$ for the constant $C_{\mathrm{RE}}$ from \textbf{(A6)} (this is possible since $s_N=\Theta(\log N)$ and $d_0=\mathcal{O}(\log N)$).
Then there exist constants $M_\star>0$ and $c_5>0$ and a sequence of tests $\varphi_N$ such that
\[
\mathbb{E}_{\theta_0}\varphi_N \;\le\; \exp\{-c_5\,N r_N^2\}
\quad \text{and} \quad
\sup_{\substack{\theta\in\mathcal F_N^\star:\\ \|\bm\beta-\bm\beta_0\|_N> M_\star r_N}}
\mathbb{E}_{\theta}(1-\varphi_N) \;\le\; \exp\{-c_5\,N r_N^2\}.
\]
\end{lemma}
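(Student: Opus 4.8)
The plan is to run the classical Le Cam--Birg\'e testing scheme: build explicit Gaussian point tests against fixed alternatives, control their type-I/II errors via the functional restricted-eigenvalue bound \textbf{(A6)}, and then glue them together with a shell-peeling argument powered by the metric-entropy estimate of \Cref{lem:entropy}. For a fixed alternative $\theta_1=(\bm\beta_1,\sigma_1^2)\in\mathcal F_N^\star$ put $\psi_1:=\xi(\theta_1)-\xi(\theta_0)\in\R^{N}$ and use the linear (mean-shift) test
\[
\phi_{\theta_1}:=\mathbbm 1\Bigl\{\langle \bm Y-\xi(\theta_0),\,\psi_1\rangle>\tfrac12\|\psi_1\|_2^2\Bigr\}.
\]
Under $P_{\theta_0}$ the statistic is $\mathcal N(0,\sigma_0^2\|\psi_1\|_2^2)$, giving $\mathbb E_{\theta_0}\phi_{\theta_1}\le\exp\{-\|\psi_1\|_2^2/(8\sigma_0^2)\}$; under any $P_\theta$ with $\xi(\theta)-\xi(\theta_0)=\psi_1$ and $\sigma^2\le c_0$ the statistic has mean $\|\psi_1\|_2^2$ and variance $\le c_0\|\psi_1\|_2^2$, giving $\mathbb E_\theta(1-\phi_{\theta_1})\le\exp\{-\|\psi_1\|_2^2/(8c_0)\}$. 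Since $\mathcal F_N^\star$ keeps at most $s_N$ active coordinates and $|S_0|=d_0\le s_N$, the support of $\bm\beta_1-\bm\beta_0$ has size $\le 2s_N\le C_{\mathrm{RE}}d_0$, so \textbf{(A6)} yields $\|\psi_1\|_2^2=\|\Delta(\theta_1,\theta_0)\|_2^2\ge N c_{\min}\|\bm\beta_1-\bm\beta_0\|_N^2$; hence both errors are $\le\exp\{-c_*N\|\bm\beta_1-\bm\beta_0\|_N^2\}$ with $c_*:=c_{\min}/(8\max\{\sigma_0^2,c_0\})$. The point worth stressing is that $\phi_{\theta_1}$ uses the nuisance $\sigma^2$ only through the upper bound $c_0$, which is exactly why a linear statistic, rather than a likelihood ratio, is convenient: the lemma asks for separation in $\|\bm\beta-\bm\beta_0\|_N$ only, and the test is automatically robust to the variance.

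Next I would peel the alternative region into the shells $\Theta_{N,m}$, $m\ge0$, of \Cref{lem:entropy}, on which $2^mM_\star r_N<\|\bm\beta-\bm\beta_0\|_N\le 2^{m+1}M_\star r_N$, and on each shell pick a minimal $\epsilon_m$-net $\mathcal N_m$ for $d_n$ with $\epsilon_m:=\tfrac14\min\{1,\sqrt{c_{\min}}\}\,2^mM_\star r_N$. Since $\epsilon_m\ge r_N$ once $M_\star$ is large, monotonicity of covering numbers in the radius together with \Cref{lem:entropy} gives $\log|\mathcal N_m|\le C_4\,Nr_N^2+C_{\mathrm{sh}}\,m\log N$. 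Define $\varphi_N:=\max_{m\ge0}\max_{\theta_k\in\mathcal N_m}\phi_{\theta_k}$, each $\phi_{\theta_k}$ being the Step-1 test at the net point $\theta_k$ (which satisfies $\|\bm\beta_k-\bm\beta_0\|_N\ge\tfrac12\cdot2^mM_\star r_N$). For the type-I error, a union bound and Step 1 give
\[
\mathbb E_{\theta_0}\varphi_N\le\sum_{m\ge0}|\mathcal N_m|\,\exp\{-c_*4^mM_\star^2Nr_N^2\}\le\sum_{m\ge0}\exp\{C_4Nr_N^2+C_{\mathrm{sh}}m\log N-c_*4^mM_\star^2Nr_N^2\}.
\]
Under \textbf{(A5)} one has $\log N=o(Nr_N^2)$ (e.g. $Nr_N^2\ge(\log N)N^{s_{0,j}/(2\alpha_j+s_{0,j})}$ for any $j\in S_0$), so once $M_\star$ is chosen large enough that $c_*M_\star^2\ge 4C_4$, every summand is $\le\exp\{-\tfrac12 c_*4^mM_\star^2Nr_N^2\}$ for all large $N$, the geometric series is dominated by its $m=0$ term, and $\mathbb E_{\theta_0}\varphi_N\le\exp\{-c_5Nr_N^2\}$.

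For the type-II error, any $\theta\in\mathcal F_N^\star$ with $\|\bm\beta-\bm\beta_0\|_N>M_\star r_N$ lies in some shell $\Theta_{N,m}$ and is $d_n$-within $\epsilon_m$ of some $\theta_k\in\mathcal N_m$. Writing $\psi_\theta:=\xi(\theta)-\xi(\theta_0)$, one has $\|\psi_\theta-\psi_k\|_2=\|\Delta(\theta,\theta_k)\|_2\le\sqrt N\,d_n(\theta,\theta_k)\le\sqrt N\epsilon_m$, while $\|\psi_k\|_2\ge\sqrt{Nc_{\min}}\,2^mM_\star r_N$ by \textbf{(A6)}; the choice of $\epsilon_m$ forces $\|\psi_\theta-\psi_k\|_2\le\tfrac14\|\psi_k\|_2$, so under $P_\theta$ the statistic $\langle\bm Y-\xi(\theta_0),\psi_k\rangle$ is Gaussian with mean $\langle\psi_\theta,\psi_k\rangle\ge\tfrac34\|\psi_k\|_2^2$ and variance $\le c_0\|\psi_k\|_2^2$. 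Hence $\mathbb E_\theta(1-\varphi_N)\le\mathbb E_\theta(1-\phi_{\theta_k})\le\exp\{-\|\psi_k\|_2^2/(32c_0)\}\le\exp\{-c_5Nr_N^2\}$ uniformly in $\theta$, after possibly shrinking $c_5$.

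The Gaussian tail computations and the absorption of the $m\log N$ entropy remainder into $Nr_N^2$ under \textbf{(A5)} are routine; the genuinely delicate point is the \emph{quantitative} calibration of the net radius $\epsilon_m$ relative to the shell radius $2^mM_\star r_N$. It must be a sufficiently small constant multiple of $2^mM_\star r_N$ -- the constant being pinned down jointly by $c_{\min}$ (through \textbf{(A6)}) and by the $\tfrac12\|\psi_k\|_2^2$ threshold -- so that a single point test at $\theta_k$ retains a strictly positive exponent against every $\theta$ in its $\epsilon_m$-ball, yet $\epsilon_m$ must also stay compatible with \Cref{lem:entropy}, whose estimate is stated only for radii $\le r_N$. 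The resolution is to take $M_\star$ large enough that $\epsilon_m\ge r_N$ for all $m$ (invoking monotonicity of covering numbers for the large-$m$ shells) and to fix $M_\star$ only \emph{after} $C_4$ and $C_{\mathrm{sh}}$ are known, so that the entropy prefactor $\exp\{C_4Nr_N^2\}$ is already beaten by $\exp\{-c_*M_\star^2Nr_N^2\}$ at $m=0$ and the geometric decay takes over for $m\ge1$.
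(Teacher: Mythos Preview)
Your proposal is correct and follows essentially the same route as the paper: linear mean-shift tests at net points, shell peeling over $\Theta_{N,m}$, the entropy bound of \Cref{lem:entropy}, and the functional RE condition \textbf{(A6)} to convert $\|\bm\beta-\bm\beta_0\|_N$ into $\|\Delta\|_2$. The only cosmetic differences are that the paper uses a \emph{fixed} net radius $\gamma=r_N/8$ on every shell (which fits \Cref{lem:entropy} directly without the monotonicity detour) and handles the type-II error via a triangle inequality on $\|\Delta\|_2$ rather than your inner-product decomposition; these lead to the same exponent up to constants.
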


\begin{proof}
% For $\theta=(\bm\beta,\sigma^2)$ write $\xi(\theta)=(\xi_i(\theta))_{i=1}^N$ with
% $\xi_i(\theta)=\beta_0(\bm Z_i)+\sum_{j=1}^{p_N}\beta_j(\bm Z_i)X_{ij}$ and
% $\Delta(\theta,\theta'):=\xi(\theta)-\xi(\theta')$.
By \textbf{(A6)} (functional RE), for any $\theta,\theta'$ whose predictor support union has size $\le C_{\mathrm{RE}}d_0$,
\begin{equation}\label{eq:tests-RE}
c_{\min}\,N\,\|\bm\beta-\bm\beta'\|_N^2 \;\le\; \|\Delta(\theta,\theta')\|_2^2 \;\le\; c_{\max}\,N\,\|\bm\beta-\bm\beta'\|_N^2.
\end{equation}
Because any $\theta\in\mathcal F_N^\star$ has at most $s_N$ active coordinates and $2s_N\le C_{\mathrm{RE}}d_0$ by assumption, \eqref{eq:tests-RE} applies to all pairs in the sieve.

Let
\[
\Theta_N \;:=\; \Bigl\{\theta\in\mathcal F_N^\star:\ \|\bm\beta-\bm\beta_0\|_N> M_\star r_N\Bigr\},
\]
where $M_\star>0$ will be fixed below.

\medskip
\textbf{Covering by shells in the prediction semimetric.}
Equip the parameter space with the prediction semimetric
\[
d_n\bigl((\bm\beta,\sigma^2),(\bm\beta',{\sigma'}^{2})\bigr)
:= \frac{1}{\sqrt N}\,\big\|\Delta(\theta,\theta')\big\|_2 \;+\; \big|\sigma^2-{\sigma'}^{2}\big|.
\]
Decompose $\Theta_N$ into shells (as in\Cref{lem:entropy}):
\[
\Theta_{N,m}\ :=\ \Bigl\{(\bm\beta,\sigma^2)\in\mathcal F_N^\star:\ 
 2^{m} M_\star r_N < \|\bm\beta-\bm\beta_0\|_{N}\ \le\ 2^{m+1} M_\star r_N,\ 
 0<\sigma^2\le c_0\Bigr\},\quad m=0,1,2,\dots
\]
Fix $\gamma:=r_N/8 \le r_N$. For each $m\ge 0$, let $\{\theta_{m,k}\}_{k=1}^{K_m}$ be a $\gamma$-net of $\Theta_{N,m}$ in the metric $d_n$.
By \Cref{lem:entropy},
\begin{equation}\label{eq:tests-Km}
\log K_m \;\le\; C_4\,N\,r_N^2\;+\;C_{\mathrm{sh}}\,m\,\log N
\qquad\text{for all }m\ge 0\text{ and large }N.
\end{equation}

\medskip
\textbf{Pointwise tests at the net points.}
For each net point $\theta_{m,k}=(\bm\beta_{m,k},\sigma_{m,k}^2)$ define
\[
\Delta_{m,k}:=\xi(\theta_{m,k})-\xi(\theta_0),\qquad
\phi_{m,k}(\bm{Y})\ :=\ \mathbbm{1}\!\left\{
\Big\langle \bm{Y}-\frac{\xi(\theta_0)+\xi(\theta_{m,k})}{2},\, \Delta_{m,k}\Big\rangle \ge 0
\right\}.
\]
Under $H_0:\ \bm{Y}\sim \mathcal N(\xi(\theta_0),\,\sigma_0^2 I_N)$ and $H_{1,m,k}:\ \bm{Y}\sim \mathcal N(\xi(\theta_{m,k}),\,\sigma_{m,k}^2 I_N),$ we can show
\[
\mathbb P_{\theta_0}(\phi_{m,k}=1)=\Phi\Big(-\tfrac{\|\Delta_{m,k}\|_2}{2\sigma_0}\Big)
\quad\text{and}\quad
\mathbb P_{\theta_{m,k}}(\phi_{m,k}=0)=\Phi\Big(-\tfrac{\|\Delta_{m,k}\|_2}{2\sigma_{m,k}}\Big),
\]
so, by the Chernoff bound and $0<\sigma_0^2,\sigma_{m,k}^2\le c_0$,
\begin{equation}\label{eq:tests-NP}
\mathbb E_{\theta_0}\phi_{m,k}\ \le\ \exp\!\Big(-\tfrac{\|\Delta_{m,k}\|_2^2}{8c_0}\Big),
\qquad
\mathbb E_{\theta_{m,k}}(1-\phi_{m,k})\ \le\ \exp\!\Big(-\tfrac{\|\Delta_{m,k}\|_2^2}{8c_0}\Big).
\end{equation}
Define the composite test
\[
\varphi_N(\bm{Y})\ :=\ \max_{m\ge 0}\ \max_{1\le k\le K_m}\ \phi_{m,k}(\bm{Y}).
\]

\medskip
\textbf{Type I error.}
By the union bound and \eqref{eq:tests-NP},
\[
\mathbb E_{\theta_0}\varphi_N \ \le\ \sum_{m\ge 0}\ \sum_{k=1}^{K_m}\ \exp\!\Big(-\tfrac{\|\Delta_{m,k}\|_2^2}{8c_0}\Big).
\]
We fix $m$. Any $\theta_{m,k}\in\Theta_{N,m}$ satisfies $\|\bm\beta_{m,k}-\bm\beta_0\|_N> 2^m M_\star r_N$. By \eqref{eq:tests-RE},
\[
\|\Delta_{m,k}\|_2^2\ \ge\ c_{\min} N (2^m M_\star r_N)^2.
\]
Therefore,
\begin{equation*}
\begin{split}
\mathbb E_{\theta_0}\varphi_N & \le \sum_{m \ge 0}\sum_{k=1}^{K_m} \mathbb{E}_{\theta_0} \phi_{m,k} \\
& \le K_m \exp \left(-\frac{c_{\min}}{8c_0} 4^m M_{\star} ^2 N r_N ^2 \right)\\
& =\sum_{m\ge 0}\ \exp\!\Big\{\,\log K_m - \frac{c_{\min}}{8c_0}\,4^m M_\star^2\,N r_N^2\Big\}.
\end{split}
\end{equation*}
Applying \eqref{eq:tests-Km},
\[
\mathbb E_{\theta_0}\varphi_N
\ \le\ \sum_{m\ge 0}\ \exp\!\Big\{\,C_4 N r_N^2 + C_{\mathrm{sh}}\,m\log N - \frac{c_{\min}}{8c_0}\,4^m M_\star^2\,N r_N^2\Big\}.
\]
Pick $M_\star$ large (depending only on $c_{\min},c_0,C_4$) so that
\(
({c_{\min}}/{8c_0})M_\star^2 \ \ge\ C_4 + 2c_5
\)
for some $c_5>0$. Since $4^m\ge 1$ and, for $m\ge 1$, $4^m N r_N^2 \gg m\log N$ (because $N r_N^2 \gtrsim (\log N)\log p_N$ by \textbf{(A5)} and $\log p_N\to\infty$), the $m\ge 1$ terms are exponentially dominated. Thus
\[
\mathbb E_{\theta_0}\varphi_N
\ \le\ \exp\!\big\{-(C_4+2c_5-C_4)N r_N^2\big\}\ +\ \sum_{m\ge 1} \exp\!\big\{-c\,4^m N r_N^2\big\}
\ \le\ \exp\{-c_5\,N r_N^2\}
\]
for large $N$ (absorbing the tail sum into the constant), which proves the desired Type I bound.

\medskip
\textbf{Type II error.}
Fix any $\theta\in\Theta_N$ and let $m\ge 0$ be the shell index such that
$2^{m}M_\star r_N<\|\bm\beta-\bm\beta_0\|_N\le 2^{m+1}M_\star r_N$.
Choose $k$ so that $d_n(\theta,\theta_{m,k})\le \gamma=r_N/8$.
By definition of $d_n$,
\begin{equation}\label{eq:tests-close}
\frac{1}{\sqrt N}\,\|\Delta(\theta,\theta_{m,k})\|_2 \ \le\ \gamma
\qquad\Longrightarrow\qquad
\|\Delta(\theta,\theta_{m,k})\|_2 \ \le\ \gamma\,\sqrt N.
\end{equation}
Furthermore, by \eqref{eq:tests-RE} and the choice of $m$,
\[
\|\Delta(\theta,\theta_0)\|_2 \ \ge\ \sqrt{c_{\min}}\,2^{m}M_\star r_N \sqrt N.
\]
Therefore, by the triangle inequality and \eqref{eq:tests-close},
\[
\|\Delta_{m,k}\|_2
\ =\ \|\Delta(\theta_{m,k},\theta_0)\|_2
\ \ge\ \|\Delta(\theta,\theta_0)\|_2 - \|\Delta(\theta,\theta_{m,k})\|_2
\ \ge\ \Big(\sqrt{c_{\min}}\,2^{m}M_\star - \tfrac18\Big)\,r_N \sqrt N.
\]
Plugging this into \eqref{eq:tests-NP} gives
\[
\mathbb E_{\theta}(1-\varphi_N)
\ \le\ \exp\!\Big(-\tfrac{\|\Delta_{m,k}\|_2^2}{8c_0}\Big)
\ \le\ \exp\!\Big(- \frac{1}{8c_0}\,\big(\sqrt{c_{\min}}\,2^{m}M_\star - \tfrac18\big)^2\,N r_N^2\Big).
\]
Enlarging $M_\star$ if necessary so that $\frac{1}{8c_0}\,(\sqrt{c_{\min}}\,M_\star - 1/8)^2\ge c_5$, and using $2^m\ge 1$, we obtain
\[
\sup_{\theta\in\Theta_N}\ \mathbb E_{\theta}(1-\varphi_N) \ \le\ \exp\{-c_5\,N r_N^2\}.
\]

Combining the Type I and Type II bounds completes the proof.
\end{proof}

\subsection{Proof of near-minimax optimality of the contraction rate}
\label{sec:minimax}
Our near-minimax lower bound follows the standard information–theoretic template used for high-dimensional nonparametric models: we build a large, well-separated finite subset of the parameter space and control the pairwise Kullback–Leibler (KL) divergences so that Fano and Assouad inequalities apply. This strategy goes back to classical treatments in \citet{Tsybakov2009} and, in the sparse additive setting, is carried out in closely related forms by \citet{Koltchinskii2010} and \citet{raskutti2009minimaxratesestimationhighdimensional}. Following a similar style of proofs as in these works, the lower bound naturally splits into a selection piece, reflecting the combinatorial difficulty of identifying an active subset among $p_N$ candidates (giving a $d_0 \log(p_N/d_0)/N$ term), and a nonparametric piece, reflecting the functional difficulty of estimating $s-$dimensional $\alpha-$H\"older components (giving $N^{-2\alpha/(2 \alpha+s)}$ terms, possibly heterogeneous across coordinates). 

We need the following lemma to show minimax optimality of the rate $r_N$. It is the key local building block that lets us realize, inside the H\"older class, tiny ``on/off" perturbations supported on $s_{0,j}-$ dimensional cubes of side $h$, with amplitude $a \asymp h^{\alpha_j}$, such that their empirical $L_2(\mathbb{P}_N)$ energy scales like $a^2 h^{s_{0,j}}.$ This guarantees separation -- neighboring hypotheses that differ by flipping a single bump are separated by $\|f_{j,1}-f_{j,0}\|_{N} ^2 \asymp h^{2 \alpha_j+s_{0,j}}$ and small KL.
\begin{lemma}[Modifier--sparse bump construction under k-d regularity]\label{lem:bump}
Fix an active predictor $j$ with modifier set $S_{0,j}\subset\{1,\dots,R\}$ of size $s:=s_{0,j}\in\{1,\dots,R\}$ and smoothness $0<\alpha_j\le 1$. Assume {\textbf{(A2)}} (k-d regularity) holds on the design $\{\bm Z_i\}_{i=1}^N\subset[0,1]^R$.
There exist constants $c,C\in(0,\infty)$ (independent of $N$) and, for any bandwidth $h\in(0,h_0]$ with $h_0$ small enough, one can construct a pair of functions
\[
f_{j,0}\equiv 0,
\quad
f_{j,1}(\bm z)= a\,\varphi \Bigl(\frac{\bm z_{S_{0,j}}-x}{h}\Bigr),
\quad \bm z\in[0,1]^R,
\]
with the following properties:
\begin{enumerate}
\item[(i)] $\varphi\in C^\infty_c([0,1]^s)$\footnote{$C^{\infty}_c([0,1]^s)$ denotes the class of infinitely differentiable functions on $\mathbb{R}^s$ whose support lies in $(0,1)^s$, considered on $[0,1]^s$}, $\|\varphi\|_\infty\le 1$, and its H\"older seminorms up to order $\alpha_j$ are bounded by a constant depending only on $\alpha_j,s$.
\item[(ii)] For the amplitude $a:=c\,h^{\alpha_j}$, one has $f_{j,1}\in\mathcal H^{\alpha_j}_B([0,1]^R)$, with H\"older ball radius $B$ independent of $N$ (and of $h$ for $h\le h_0$).
\item[(iii)] There exists a center $x\in(0,1)^s$ (depending on the design and $h$) such that
\[
c\,a^2 h^{s}\ \le\ \|f_{j,1}-f_{j,0}\|_{N}^2\ \le\ C\,a^2 h^{s}.
\]
\end{enumerate}
Consequently, choosing $h\asymp N^{-1/(2\alpha_j+s)}$ yields
\[
\|f_{j,1}-f_{j,0}\|_{N}^2\ \asymp\ N^{-\frac{2\alpha_j}{2\alpha_j+s}}.
\]
\end{lemma}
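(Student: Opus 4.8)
The plan is to exhibit an explicit mollifier on the active block and then verify (i)--(iii) by elementary scaling, letting the k-d regularity of Assumption~\textbf{(A2)} supply the two-sided empirical-norm control in (iii). First I would fix, once and for all, a template bump $\varphi(\bm u)=\prod_{\ell=1}^{s}\psi(u_\ell)$ for a fixed $\psi\in C^\infty_c((0,1))$ with $0\le\psi\le1$, $\psi\equiv 1$ on $[1/3,2/3]$, and $\mathrm{supp}(\psi)\subset[1/4,3/4]$. This gives (i) at once: $\|\varphi\|_\infty\le1$, $\mathrm{supp}(\varphi)\subset[1/4,3/4]^s\subset(0,1)^s$, and every Hölder seminorm of $\varphi$ up to order $\alpha_j$ (in particular its Lipschitz constant $[\varphi]_{\mathrm{Lip}}$) is a constant depending only on $\psi$ and $s$. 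I would then set $a=c\,h^{\alpha_j}$ and let $x$ be any point with $x+h\,\mathrm{supp}(\varphi)\subset(0,1)^s$ --- e.g.\ center the rescaled bump in a cell of a side-$h$ grid, so that under \textbf{(A2)} no real design-dependence is needed and ``depending on the design'' is kept only for generality --- and take $h_0$ small enough that such an $x$ exists for every $h\le h_0$.

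Second, for (ii) I would use that $0<\alpha_j\le1$, so membership in $\mathcal H^{\alpha_j}_B([0,1]^R)$ reduces to bounding $\|f_{j,1}\|_\infty$ and the $\alpha_j$-Hölder seminorm of $f_{j,1}$ (for $\alpha_j=1$ the seminorm is the Lipschitz constant, which also controls $\|Df_{j,1}\|_\infty$). The sup bound is immediate: $\|f_{j,1}\|_\infty\le a=c\,h^{\alpha_j}\le c\,h_0^{\alpha_j}$. For the seminorm I would split on whether $t:=\|\bm u_{S_{0,j}}-\bm v_{S_{0,j}}\|_\infty$ satisfies $t\le h$ or $t>h$: when $t\le h$, the Lipschitz bound for $\varphi$ gives $|f_{j,1}(\bm u)-f_{j,1}(\bm v)|\le a[\varphi]_{\mathrm{Lip}}\,t/h=c[\varphi]_{\mathrm{Lip}}h^{\alpha_j-1}t$, and since $t\le h$ with $1-\alpha_j\ge0$ one has $h^{\alpha_j-1}t=(t/h)^{1-\alpha_j}t^{\alpha_j}\le t^{\alpha_j}$; when $t>h$, the crude bound $|f_{j,1}(\bm u)-f_{j,1}(\bm v)|\le 2a=2c\,h^{\alpha_j}\le 2c\,t^{\alpha_j}$ suffices. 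Since $t\le\|\bm u-\bm v\|_\infty$, both cases give $|f_{j,1}(\bm u)-f_{j,1}(\bm v)|\le c\max\{2,[\varphi]_{\mathrm{Lip}}\}\|\bm u-\bm v\|_\infty^{\alpha_j}$, so $f_{j,1}\in\mathcal H^{\alpha_j}_B$ with $B=c\,h_0^{\alpha_j}+c\max\{2,[\varphi]_{\mathrm{Lip}}\}$, independent of $N$ and of $h\le h_0$; shrinking $c$ makes $B$ meet any prescribed radius.

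Third, for (iii) I would introduce the two axis-aligned boxes in $[0,1]^R$, $\Omega=\{\bm z:\bm z_{S_{0,j}}\in x+h\,\mathrm{supp}(\varphi)\}$ and $\Omega'=\{\bm z:\bm z_{S_{0,j}}\in x+h[1/3,2/3]^s\}$ (full range in the $R-s$ inactive coordinates), with $\mathrm{vol}(\Omega)=\kappa_\varphi h^s$ and $\mathrm{vol}(\Omega')=3^{-s}h^s$, where $\kappa_\varphi:=\mathrm{vol}_s(\mathrm{supp}\varphi)$. Applying \textbf{(A2)} to $\Omega$ and $\Omega'$, using $\varphi^2\le1$ on $\Omega$ and $\varphi((\bm z_{S_{0,j}}-x)/h)\ge1/2$ on $\Omega'$, I obtain
\[
\tfrac{a^2}{4}\,c_{\mathrm{kd}}\,3^{-s}h^{s}\ \le\ \|f_{j,1}-f_{j,0}\|_N^2=\tfrac{a^2}{N}\!\!\sum_{i:\,\bm Z_i\in\Omega}\!\!\varphi\!\big(\tfrac{\bm Z_{i,S_{0,j}}-x}{h}\big)^{2}\ \le\ a^2\,C_{\mathrm{kd}}\,\kappa_\varphi\,h^{s},
\]
which is the claimed two-sided bound with $c=c_{\mathrm{kd}}/(4\cdot3^{s})$ and $C=C_{\mathrm{kd}}\kappa_\varphi$. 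Finally, $a^2h^s=c^2h^{2\alpha_j+s}$, so with $h\asymp N^{-1/(2\alpha_j+s)}$ the per-bump energy scales as $h^{2\alpha_j+s}$ while the amplitude $a^2$ scales as $N^{-2\alpha_j/(2\alpha_j+s)}$; summing a maximal family of $\asymp h^{-s}$ disjoint translates (the Assouad hypercube underlying the minimax lower bound) then produces the per-coordinate separation $\asymp a^2\asymp N^{-2\alpha_j/(2\alpha_j+s)}$.

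The k-d bookkeeping and the sup bound are routine; I expect the only real obstacle to be the seminorm estimate in (ii), where the $h^{-\alpha_j}$ blow-up of the rescaled bump's seminorm must be absorbed into $a=c\,h^{\alpha_j}$ \emph{uniformly in} $h\le h_0$, and where the far-apart regime $t>h$ (pairs that cannot ``see'' the bump's smoothness) must be handled by the separate $2a\le 2c\,t^{\alpha_j}$ bound rather than by a derivative estimate. I would also double-check that \textbf{(A2)}, stated for rectangles in $[0,1]^R$, transfers to the $s_{0,j}$-dimensional active slices used in (iii) --- it does, since $\Omega$ and $\Omega'$ are genuine axis-aligned boxes (full in the inactive directions) whose $R$-volumes equal their $s_{0,j}$-volumes.
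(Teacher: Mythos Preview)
Your proposal is correct and follows essentially the same approach as the paper: a tensor-product $C^\infty_c$ bump on the active coordinates, the amplitude choice $a=c\,h^{\alpha_j}$ to absorb the $h^{-\alpha_j}$ seminorm blow-up, and the two-sided empirical energy via \textbf{(A2)} applied to an outer box (upper bound) and an inner plateau box (lower bound). The only cosmetic difference is in (ii): the paper uses the $\alpha_j$-H\"older seminorm of $\varphi$ directly (so $|f_{j,1}(\bm z)-f_{j,1}(\bm z')|\le a\,[\varphi]_{C^{0,\alpha_j}}h^{-\alpha_j}\|\bm z_{S_{0,j}}-\bm z'_{S_{0,j}}\|^{\alpha_j}=c\,[\varphi]_{C^{0,\alpha_j}}\|\bm z-\bm z'\|^{\alpha_j}$ in one line), whereas your Lipschitz-based near/far split ($t\le h$ vs.\ $t>h$) reaches the same bound and is equally valid.
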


\begin{proof}
\textbf{Proving (i).} Let $\psi\in C^\infty_c((-1,1))$ be a standard bump with $\psi\ge 0$, $\psi(0)=1$, and $\|\psi\|_\infty=1$. A standard one-dimensional $C^{\infty}$ bump can be constructed as \[
\psi_0(t) = \begin{cases}
\exp \left(-\frac{1}{1-t^2} \right), & \ |t|<1 \\
0, & \ |t| \ge 1.
\end{cases}
\]
Note that $\textrm{supp}(\psi_0)=[-1,1],$ we can normalize to have a peak at $1$: $\psi(t) = \psi_0(t)/\psi_0(0) = e \psi_0(t),$ so $\psi(0)=1$ and $\|\psi\|_{\infty}=1.$ We set
\(
\varphi(u_1,\dots,u_s)=\prod_{\ell=1}^s \psi(u_\ell),
\)
then $\varphi \in C^{\infty} _c (\mathbb{R}^s)$ with $\textrm{supp}(\varphi) \subset [-1,1]^s,$ so after translation, $\varphi \in C^{\infty} _{\infty}((0,1)^s)$. Trivially, it also satisfies $\|\varphi\|_{\infty} \le 1.$

Because $\psi$ is smooth with bounded derivatives on a compact support, all partial derivatives of $\varphi$ are bounded. For $0<\alpha \le 1$ the $\alpha-$ H\"older semi-norm of $\varphi$ is finite and bounded by a constant depending only on $s$ and bounds on $\psi':$
\[
|\varphi(\bm{u})-\varphi(\bm{v})| \le \| \nabla \varphi \|_{\infty} \| \bm{u}-\bm{v} \|_{2} \ \implies [\varphi]_{C^{0,\alpha}} \le \|\nabla \varphi\|_{\infty} \text{diam}([-1,1]^s)^{1-\alpha}.
\]
Thus $\varphi$ satisfies (i) with constants depending only on $s$ and $\alpha_j$.

\textbf{Proving (ii).} Because $\varphi$ has a compact support contained in $(0,1)^s,$ the function $\bz \mapsto \varphi((\bz_{S_{0,j}}-x)/h)$ is supported in the cube $\{\bz : \|\bz_{S_{0,j}}-x\|_{\infty} < h\} \subset (0,1)^s,$ which holds since $x \in (h,1-h)^s.$ Using $\|\varphi\|_{\infty} \le 1,$
\[
\|f_{j,1}^{(x)}\|_{\infty} = \sup_{\bz \in [0,1]^R} a| \varphi((\bz_{S_{0,j}}-x)/h)| \le a = ch^{\alpha_j} \le ch_0 ^{\alpha_j}.
\]
Thus, if we choose $c \le B/h_0 ^{\alpha_j},$ then $\|f_{j,1} ^{(x)}\|_{\infty} \le B.$ Now, for any $\bz,\bz' \in [0,1]^R,$ we set $\bm{u}=(\bz_{S_{0,j}}-x)/h$ and $\bz'_{S_{0,j}}-x)/h.$ Then, 
\begin{align*}
    \left|f_{j,1}^{(x)}(\bz)-f_{j,1} ^{(x)}(\bz')\right| & = a | \varphi(\bm{u})-\varphi(\bm{v})| \\
    & \le a K_{\alpha_j} \|\bm{u}-\bm{v}\|_{2} ^{\alpha_j} \\
    & = a K_{\alpha_j} h^{-\alpha_j} \|\bz_{S_{0,j}} - \bz' _{S_{0,j}} \|_2 ^{\alpha_j}.
\end{align*}
Since $\|\bz_{S_{0.j}}-\bz'_{S_{0,j}}\|_2 \le \| \bz-\bz'\|_2,$ 
\[
\frac{\left| f_{j,1} ^{(x)} (\bz)-f_{j,1}^{(x)} (\bz') \right|}{\|\bz-\bz'\|_2 ^{\alpha_j}} \le a K_{\alpha_j} h^{-\alpha_j} = c K_{\alpha_j}.
\]
Taking the supremum over $\bz \neq \bz'$ yields $\left[f_{j,1} ^{(x)} \right]_{C^{0,\alpha_j}} \le c K_{\alpha_j}.$ So, if we also choose $c \le B/K_{\alpha_j},$ we obtain $\left[f_{j,1}^{(x)}\right]_{C^{0,\alpha_j}} \le B.$

Thus, for any $c \le \min \{B/h_0 ^{\alpha_j},B/K_{\alpha_j}\}$ we can ensure $\|f_{j,1} ^{(x)}\|_{\infty} \le B$ and $\left[f_{j,1} ^{(x)}\right]_{C^{0,\alpha_j}} \le B.$ This proves (ii).
% For any $h\in(0,h_0]$, choose a grid of $s$-dimensional cubes of side length $h$, i.e. $G_h=\{x\in(h,1-h)^s: x\ \text{on a regular grid of mesh }h\}$. For each $x\in G_h$ define
% \[
% f_{j,1}^{(x)}(\bm z)=a\,\varphi\Bigl(\frac{\bm z_{S_{0,j}}-x}{h}\Bigr),
% \quad a=c\,h^{\alpha_j},
% \]
% with $c>0$ chosen so that the H\"older seminorms of $f_{j,1}^{(x)}$ of order up to $\alpha_j$ are bounded by $B$ (this is standard: derivatives of $\varphi((\cdot-x)/h)$ scale like $h^{-m}$, and multiplying by $a=h^{\alpha_j}$ keeps all H\"older seminorms bounded when $m\le \alpha_j$; see e.g. classical scaling arguments for H\"older balls). This gives (ii).

\textbf{Proving (iii).} It remains to show the empirical $L_2$ energy bound (iii). By k-d regularity \textbf{(A2)}, there exist constants $0<c_\mathrm{kd}\le C_\mathrm{kd}<\infty$ such that for any axis-aligned rectangle $A\subset[0,1]^R$,
\[
c_\mathrm{kd}\,\mathrm{vol}(A)\ \le\ \frac{1}{N}\#\{i:\ \bm Z_i\in A\}\ \le\ C_\mathrm{kd}\,\mathrm{vol}(A).
\]
For our $f_{j,1}^{(x)}$, the support is $\Omega_x:=\{ \bm z:\ \|\bm z_{S_{0,j}}-x\|_\infty\le h\}\times [0,1]^{R-s}$, with $\mathrm{vol}(\Omega_x)= (2h)^s$. Hence
\[
\|f_{j,1}^{(x)}\|_N^2
=\frac{1}{N}\sum_{i=1}^N a^2\,\varphi^2\left(\frac{\bm Z_{i,S_{0,j}}-x}{h}\right)
= a^2 \cdot \frac{1}{N}\sum_{i:\ \bm Z_{i,S_{0,j}}\in x+[-h,h]^s}\ \varphi^2\left(\frac{\bm Z_{i,S_{0,j}}-x}{h}\right).
\]
By the boundedness of $\varphi^2$ above by $1$, and its positivity on a smaller cube (say on $[-1/2,1/2]^s$), one has
\[
c'\,a^2\cdot \frac{1}{N} \#\{i:\ \bm Z_{i,S_{0,j}}\in [x-h/2,x+h/2]^s\}\ \le\ \|f_{j,1}^{(x)}\|_N^2\ \le\ C'\,a^2\cdot \frac{1}{N} \#\{i:\ \bm Z_{i,S_{0,j}}\in [x-h,x+h]^s\},
\]
for some $0<c'\le C'<\infty$. Applying k-d regularity with the corresponding rectangles $A_x$ gives
\[
c\,a^2 h^s\ \le\ \|f_{j,1}^{(x)}\|_N^2\ \le\ C\,a^2 h^s,
\]
uniformly over $x\in \{x\in(h,1-h)^s: x\ \text{on a regular grid of mesh }h\}$, proving (iii). Taking $x$ to be any grid point completes the construction. Finally, setting $h\asymp N^{-\frac{1}{2\alpha_j+s}}$ yields the displayed rate.
\end{proof}

\paragraph{Proof overview of the near-minimax theorem.} 
We now move on to prove our near-minimaxity result. To that endeavor, we first split the domain into two disjoint regions via smooth windows: one used purely for selection and the other for nonparametric difficulty. On the selection region, we assign constant signals of size $a_N$ to a Hamming-separated family of supports $T$ via a Gilbert-Varshamov code. With $a_N ^2 \asymp \log |\mathcal{V}|/[N(d_0-m)],$ the pairwise KL's are small while the $L_2$ separation is of order $(d_0-m)a_N ^2,$ and Fano's inequality yields the $d_0 \log (p_N/d_0)/N$ contribution. On the nonparametric region, for a fixed subset $S^{\star}$ of $m$ active predictors we place multi-bump codes: each predictor $j \in S^{\star}$ carries a grid of disjoint $s_{0,j}-$ dimensional bumps at bandwidth $h_j \asymp N^{-1/(2 \alpha_j +s_{0,j})};$ toggling a single bump changes the mean by size $\asymp N^{-1/2}$ with uniformly small KL. Assouad's lemma over this hypercube then yields the remaining part of the minimaxity bound.
\begin{proof}[Proof of Theorem 2]
Let $\xi(\theta)=(\xi_i(\theta))_{i=1}^N$ be the mean vector for parameter $\theta$ and
$\Delta(\theta,\theta'):=\xi(\theta)-\xi(\theta')$.
By {\textbf{(A6)}} (functional RE), there exist $0<c_{\min}\le c_{\max}<\infty$ such that,
for any arrays $u_{ij}$ indexed by a set $S$ with $|S|\le C_{\mathrm{RE}}\,d_0$,
\begin{equation}\label{eq:RE-global}
c_{\min}\sum_{j\in S}\Big(\frac{1}{N}\sum_{i=1}^N u_{ij}(\bm Z_i)^2\Big)
\ \le\
\frac{1}{N}\sum_{i=1}^N\Big(\sum_{j\in S} X_{ij}\,u_{ij}(\bm Z_i)\Big)^2
\ \le\
c_{\max}\sum_{j\in S}\Big(\frac{1}{N}\sum_{i=1}^N u_{ij}(\bm Z_i)^2\Big).
\end{equation}
Write the empirical coefficient-function norm as
$\|f\|_N^2:=N^{-1}\sum_{i=1}^N f(\bm Z_i)^2$ and the additive loss
$L(\widehat{\bm\beta},\bm\beta):=\sum_{j=1}^{p_N}\|\widehat\beta_j-\beta_j\|_N^2$.
By \eqref{eq:RE-global}, for any $\theta,\theta'$ supported on $\le C_{\mathrm{RE}}\,d_0$ coordinates,
\begin{equation}\label{eq:RE-link-loss}
\frac{1}{N}\big\|\Delta(\theta,\theta')\big\|_2^2
\ge c_{\min} \sum_{j \in S}\frac{1}{N} \sum_{i=1}^{N}u_{ij}(\bm{Z}_i)^2 = c_{\min} \sum_{j \in S} \|\beta_j-\beta_J ' \|_N ^2 = c_{\min}\,L(\bm\beta,\bm\beta'),
\end{equation}
and analogously, for the Gaussian model with noise variance $\sigma^2\in[\underline\sigma^2,\overline\sigma^2]$,
\begin{equation}\label{eq:RE-link-KL}
\mathrm{KL}\!\left(P_\theta,P_{\theta'}\right)
=\frac{1}{2\sigma^2}\|\Delta(\theta,\theta')\|_2^2
\ \le\ \frac{c_{\max}}{2\underline\sigma^2}\,N\,L(\bm\beta,\bm\beta').
\end{equation}

\paragraph{Smooth domain split.}
Choose two disjoint axis-aligned rectangles
\[
\Omega_{\mathrm{sel}}^\circ:=[0,1/3]^R,\qquad
\Omega_{\mathrm{np}}^\circ:=[2/3,1]^R,
\]
and smooth cutoffs $\omega_{\mathrm{sel}},\omega_{\mathrm{np}}\in C_c^\infty([0,1]^R)$ such that
both are bounded within $[0,1]$, $\omega_{\mathrm{sel}}=1$ on $\Omega_{\mathrm{sel}}^\circ$, 
$\omega_{\mathrm{np}}=1$ on $\Omega_{\mathrm{np}}^\circ$, and $\mathrm{supp}(\omega_{\mathrm{sel}})$, $\mathrm{supp}(\omega_{\mathrm{np}})$ are disjoint.
By k--d regularity {\textbf{(A2)}}, there exist constants $0<c_\Omega\le C_\Omega<\infty$ such that
\begin{equation}\label{eq:mass-rectangles-smooth}
c_\Omega\ \le\ \frac{1}{N}\sum_{i=1}^N \omega_{\mathrm{sel}}(\bm Z_i)^2\ \le\ C_\Omega,
\qquad
c_\Omega\ \le\ \frac{1}{N}\sum_{i=1}^N \omega_{\mathrm{np}}(\bm Z_i)^2\ \le\ C_\Omega,
\end{equation}
uniformly in $N$. These windows keep H\"older smoothness while making the two regions disjoint -- they are exactly $1$ on the inner rectangles $\Omega_{\mathrm{sel}} ^{\circ}$ and $\Omega_{\mathrm{np}} ^{\circ},$ while smoothly tapering to $0$ outside these rectangles.

By \Cref{lem:bump}, for each active predictor $j$ with modifier set $S_{0,j}$ and smoothness $\alpha_j\in(0,1]$, there exist compactly supported $C^\infty$ bumps supported in small $s_{0,j}$-cubes, with amplitude $a_{j}\asymp h_j^{\alpha_j}$ at bandwidth $h_j$, such that
\begin{equation}\label{eq:bump-energy}
c\,a_{j}^2 h_j^{s_{0,j}}\ \le\ \|f_{j,1}\|_N^2\ \le\ C\,a_{j}^2 h_j^{s_{0,j}},
\end{equation}
and $f_{j,1}\in\mathcal H^{\alpha_j}_B([0,1]^R)$ with $B$ independent of $N$ (for $h_j$ small).

\paragraph{Product packing family.}
Fix $m:=\lfloor d_0/2\rfloor$ and a fixed index set $S^\star\subset\{1,\dots,p_N\}$ with $|S^\star|=m$.
Let $U^\star:=\{1,\dots,p_N\}\setminus S^\star$.
We encode \emph{selection} difficulty on $\Omega_{\mathrm{sel}}$ by choosing $T\subset U^\star$ with $|T|=d_0-m$,
and \emph{nonparametric} (NP) difficulty on $\Omega_{\mathrm{np}}$ using \emph{multi-bump} codes for the $m$ predictors in $S^\star$.

\smallskip
\emph{Selection code.}
By the Gilbert--Varshamov bound, there exists a finite set
$\mathcal V\subset\{T\subset U^\star: |T|=d_0-m\}$ with pairwise Hamming distance
$|T\triangle T'|\ge (d_0-m)/2$ and
\begin{equation}\label{eq:GV-size}
\log|\mathcal V|\ \ge\ c_1\,(d_0-m)\,\log\!\Big(\frac{p_N-m}{d_0-m}\Big)
\ \ge\ c_1'\,d_0\,\log\!\Big(\frac{p_N}{d_0}\Big).
\end{equation}

\smallskip
\emph{NP code.}
For each $j\in S^\star$, fix a bandwidth $h_j\in(0,h_0]$ (to be chosen later) and build a regular grid
$\mathcal G_{j,h_j}$ of centers inside $\Omega_{\mathrm{np}}^\circ$ so that the cubes
$x+[ -h_j, h_j]^{s_{0,j}}$ for $x\in\mathcal G_{j,h_j}$ are pairwise disjoint and all contained in $\Omega_{\mathrm{np}}^\circ$.
Then $M_j:=|\mathcal G_{j,h_j}|\asymp h_j^{-s_{0,j}}$, which denotes the packing number on an $s_{0,j}-$ dimensional cube.
For each $x\in\mathcal G_{j,h_j}$, let
\[
g_{j,x}(\bm z):= a_j\,\varphi \Bigl(\frac{\bm z_{S_{0,j}}-x}{h_j}\Bigr),
\quad a_j:=\kappa_j\,h_j^{\alpha_j},
\]
where $\varphi$ is the bump from \Cref{lem:bump} and the constant $\kappa_j\in(0,1]$ (independent of $N$) will be fixed small enough below to control neighbor Kullback divergences. For a bit-vector $\theta_j=(\theta_{j,x})_{x\in\mathcal G_{j,h_j}}\in\{0,1\}^{M_j}$, define
\[
f_{j,\theta_j}(\bm z) \ :=\ \sum_{x\in\mathcal G_{j,h_j}} \theta_{j,x}\,g_{j,x}(\bm z).
\]
By disjoint supports and \Cref{lem:bump}, $f_{j,\theta_j}\in \mathcal H^{\alpha_j}_B$ with $B$ independent of $N$, and
\begin{equation}\label{eq:additivity-energy}
\|f_{j,\theta_j}\|_{N,\Omega_{\mathrm{np}}}^2
= \sum_{x\in\mathcal G_{j,h_j}} \theta_{j,x}\,\|g_{j,x}\|_{N,\Omega_{\mathrm{np}}}^2,
\qquad
c\,a_j^2 h_j^{s_{0,j}}\le \|g_{j,x}\|_{N,\Omega_{\mathrm{np}}}^2 \le C\,a_j^2 h_j^{s_{0,j}}.
\end{equation}

\smallskip
\emph{Parameters indexed by the product code.}
For each $(T,\theta)$ with $T\in\mathcal V$ and $\theta=(\theta_j)_{j\in S^\star}$, $\theta_j\in\{0,1\}^{M_j}$, define
\[
\beta_j^{(T,\theta)}(\bm z)\ :=
\begin{cases}
a_N\,\omega_{\mathrm{sel}}(\bm z), & j\in T,\\[3pt]
f_{j,\theta_j}(\bm z)\,\omega_{\mathrm{np}}(\bm z), & j\in S^\star,\\[3pt]
0, & j\notin S^\star\cup T,
\end{cases}
\qquad \beta_0^{(T,\theta)}\equiv 0.
\]
Because $\omega_{\mathrm{sel}},\omega_{\mathrm{np}}$ are smooth with disjoint supports, each $\beta_j^{(T,\theta)}$ is in the Hölder class of {\textbf{(A4)}}, and the active set is $S^\star\cup T$ of size $m+(d_0-m)=d_0$.

\paragraph{KL decomposition for the product family.}
Define the losses
\[
L_{\mathrm{sel}}(\widehat{\bm\beta};T):=\sum_{j=1}^{p_N}\|\widehat\beta_j-\beta_j^{(T,\theta)}\|_{N,\mathrm{sel}}^2,\qquad
L_{\mathrm{np}}(\widehat{\bm\beta};\theta):=\sum_{j=1}^{p_N}\|\widehat\beta_j-\beta_j^{(T,\theta)}\|_{N,\mathrm{np}}^2,
\]
where $\|f\|_{N,\mathrm{sel}}^2:=N^{-1}\sum_i f(\bm Z_i)^2\,\omega_{\mathrm{sel}}(\bm Z_i)^2$ and analogously for $\mathrm{np}$. The two depend on $T$ and $\theta$ separately by construction, and
\begin{equation}\label{eq:loss-sum}
L(\widehat{\bm\beta},\bm\beta^{(T,\theta)})=L_{\mathrm{sel}}(\widehat{\bm\beta};T)+L_{\mathrm{np}}(\widehat{\bm\beta};\theta).
\end{equation}
By \eqref{eq:RE-link-loss},
\begin{equation}\label{eq:loss-lower-Delta}
\frac{1}{N}\,\big\|\Delta\big((T,\theta),\widehat\theta\big)\big\|_2^2
\ \ge\ c_{\min}\,\Big( L_{\mathrm{sel}}(\widehat{\bm\beta};T)+L_{\mathrm{np}}(\widehat{\bm\beta};\theta)\Big).
\end{equation}
Moreover, by \eqref{eq:RE-link-KL} and the disjoint supports of $\omega_{\mathrm{sel}}$ and $\omega_{\mathrm{np}}$,
\begin{align}
\mathrm{KL}\big(P_{(T,\theta)},P_{(T',\theta')}\big)
&\le \frac{c_{\max}}{2\underline\sigma^2}\,N\,
\Big( L_{\mathrm{sel}}(\bm\beta^{(T,\cdot)},\bm\beta^{(T',\cdot)})
     +L_{\mathrm{np}}(\bm\beta^{(\cdot,\theta)},\bm\beta^{(\cdot,\theta')})\Big).
\label{eq:KL-sum}
\end{align}
As the suprema decouple across $T$ and $\theta$,
\begin{equation}\label{eq:inf-sup-sum}
\inf_{\widehat{\bm\beta}}\ \sup_{(T,\theta)} \E L(\widehat{\bm\beta},\bm\beta^{(T,\theta)})
\ \ge\
\inf_{\widehat{\bm\beta}} \sup_{T\in\mathcal V}\E L_{\mathrm{sel}}(\widehat{\bm\beta};T)
\ +\
\inf_{\widehat{\bm\beta}} \sup_{\theta}\E L_{\mathrm{np}}(\widehat{\bm\beta};\theta).
\end{equation}

\paragraph{Selection term via Fano.}
Fix any $\bar\theta$. For $T\neq T'$,
\[
L_{\mathrm{sel}}\big(\bm\beta^{(T,\bar\theta)},\bm\beta^{(T',\bar\theta)}\big)
= |T\triangle T'|\cdot a_N^2\cdot \frac{1}{N}\sum_{i=1}^N \omega_{\mathrm{sel}}(\bm Z_i)^2
\ \ge\ c\, (d_0-m)\,a_N^2,
\]
by $|T\triangle T'|\ge (d_0-m)/2$ and \eqref{eq:mass-rectangles-smooth}. Meanwhile, by \eqref{eq:KL-sum},
\[
\mathrm{KL}\big(P_{(T,\bar\theta)},P_{(T',\bar\theta)}\big)\ \le\ 
\frac{c_{\max}}{2\underline\sigma^2}\,N\, L_{\mathrm{sel}}\big(\bm\beta^{(T,\bar\theta)},\bm\beta^{(T',\bar\theta)}\big)
\ \le\ C\,N\,(d_0-m)\,a_N^2.
\]
Choose
\begin{equation}\label{eq:aN-choice}
a_N^2\ :=\ \frac{\kappa}{C}\cdot \frac{\log|\mathcal V|}{N\,(d_0-m)}\qquad\text{with }\ \kappa\in(0,1/8).
\end{equation}
Then the average pairwise KL is $\le \kappa \log|\mathcal V|$.
By Fano’s lemma (e.g., \citealt[Th.\ 2.11]{Tsybakov2009}) and \eqref{eq:loss-lower-Delta},
\[
\inf_{\widehat{\bm\beta}} \sup_{T\in\mathcal V}\ \E L_{\mathrm{sel}}(\widehat{\bm\beta};T)
\ \ge\ c\cdot \frac{\log|\mathcal V|}{N}
\ \ge\ c\,\frac{d_0\log(p_N/d_0)}{N},
\]
using \eqref{eq:GV-size}.

\paragraph{Nonparametric term via Assouad.}
Fix $\bar T\in\mathcal V$. Consider the hypercube indexed by all bump bits
\[
\Theta:=\prod_{j\in S^\star}\{0,1\}^{M_j},
\qquad D:=\sum_{j\in S^\star}M_j.
\]
For neighbors $\theta$ and $\theta^{(j,x)}$ that differ only in bit $(j,x)$, we have \[
\beta_{j'} ^{(\bar{T},\theta)} - \beta_{j'} ^{(\bar{T},\theta^{(j,x)}} = \begin{cases}
    g_{j,x}, & \ j'=j, \\
    0, & \ j' \neq j.
\end{cases}
\]
Hence,
\begin{equation}\label{eq:percoord-sep-refined}
\rho_{j,x}^2\ :=\ L_{\mathrm{np}}\big(\bm\beta^{(\bar T,\theta)},\bm\beta^{(\bar T,\theta^{(j,x)})}\big)
=\| g_{j,x}\|_{N,\Omega_{\mathrm{np}}}^2
\ \asymp\ a_j^2 h_j^{s_{0,j}}
\ =  \kappa_j ^2 h_j^{2\alpha_j+s_{0,j}},
\end{equation}
by \eqref{eq:additivity-energy}. Moreover, by \eqref{eq:KL-sum},
\begin{equation}\label{eq:percoord-KL-refined}
\mathrm{KL}\big(P_{(\bar T,\theta)},P_{(\bar T,\theta^{(j,x)})}\big)
\ \le\ \frac{c_{\max}}{2\underline\sigma^2}\,N\,\rho_{j,x}^2
\ \lesssim\ N\,h_j^{2\alpha_j+s_{0,j}}.
\end{equation}
Choose
\begin{equation}\label{eq:hj-choice}
h_j \ :=\ c_j\,N^{-\frac{1}{2\alpha_j+s_{0,j}}},\qquad
a_j:=\kappa_j\,h_j^{\alpha_j},
\end{equation}
with $c_j>0$ and $\kappa_j\in(0,1]$ small enough so that the RHS of \eqref{eq:percoord-KL-refined} is bounded by a fixed $\kappa_0<(\log 2)/8$ for all $j$ and $N$. Then
\begin{equation}\label{eq:rho-square-final}
\rho_{j,x}^2\ \asymp\ h_j^{2\alpha_j+s_{0,j}}\ \asymp\ N^{-1},
\qquad
M_j\ \asymp\ h_j^{-s_{0,j}}\ \asymp\ N^{\frac{s_{0,j}}{2\alpha_j+s_{0,j}}}.
\end{equation}
By Assouad’s lemma for nonhomogeneous coordinates (see, e.g., \citealt[Th.\ 2.12]{Tsybakov2009}), the uniform neighbor KL bound implies that for any estimator,
\[
\sup_{\theta\in\Theta}\ \E L_{\mathrm{np}}(\widehat{\bm\beta};\theta)
\ \ge\ c\sum_{j\in S^\star}\sum_{x\in\mathcal G_{j,h_j}} \rho_{j,x}^2
\ \asymp\ \sum_{j\in S^\star} M_j\,h_j^{2\alpha_j+s_{0,j}}
\ \asymp\ \sum_{j\in S^\star} h_j^{2\alpha_j}
\ \asymp\ \sum_{j\in S^\star} N^{-\frac{2\alpha_j}{2\alpha_j+s_{0,j}}}.
\]
Hence
\begin{equation}\label{eq:np-final-refined}
\inf_{\widehat{\bm\beta}} \sup_{\theta\in\Theta}\ \E L_{\mathrm{np}}(\widehat{\bm\beta};\theta)
\ \ge\ c\,\sum_{j\in S^\star} N^{-\frac{2\alpha_{j}}{2\alpha_{j}+s_{0,j}}}.
\end{equation}
Since $|S^\star|=m=\lfloor d_0/2\rfloor$, by choosing $S^\star$ as the $m$ active indices with the largest
$N^{-\frac{2\alpha_{j}}{2\alpha_{j}+s_{0,j}}}$ among the $d_0$ active predictors,
\begin{equation}\label{eq:half-max} \sum_{j\in S^\star} N^{-\frac{2\alpha_{j}}{2\alpha_{j}+s_{0,j}}} \ge \frac{m}{d_0} \sum_{j \in S_0} N^{-\frac{2 \alpha_j}{2 \alpha_j + s_{0,j}}} \ge \left(\frac{1}{2}-\frac{1}{2 d_0} \right)\sum_{j\in S_0} N^{-\frac{2\alpha_{j}}{2\alpha_{j}+s_{0,j}}}.
\end{equation}

Combining \eqref{eq:inf-sup-sum}, the selection bound and \eqref{eq:np-final-refined}–\eqref{eq:half-max},
\[
\inf_{\widehat{\bm\beta}}\ \sup_{(T,\theta)} \E L(\widehat{\bm\beta},\bm\beta^{(T,\theta)})
\ \ge\ c\Bigg\{\frac{d_0\log(p_N/d_0)}{N}
+\sum_{j\in S_0} N^{-\frac{2\alpha_{j}}{2\alpha_{j}+s_{0,j}}}\Bigg\}.
\]
Finally, by \eqref{eq:RE-link-loss},
\[
\inf_{\widehat{\bm\beta}}\ \sup_{\bm\beta_0\in\mathcal F}
\E\big\|\widehat{\bm\beta}-\bm\beta_0\big\|_N^{2}
\ \ge\ c\,\inf_{\widehat{\bm\beta}}\ \sup_{(T,\theta)} \E L(\widehat{\bm\beta},\bm\beta^{(T,\theta)})
\ \ge\
c\Bigg\{\frac{d_0\log(p_N/d_0)}{N}
+\sum_{j\in S_0} N^{-\frac{2\alpha_{j}}{2\alpha_{j}+s_{0,j}}}\Bigg\}.
\]
Under {\textbf{(A4)}}, $d_0=\mathcal{O}(\log N)$ and $\log p_N\to\infty$, so $\log(p_N/d_0)\asymp \log p_N$, yielding the near-minimax statement of the theorem up to universal constants.
\end{proof}

\subsection{Proof of modifier selection consistency}
\label{sec:proof_modifier}
\begin{proof}[Proof of Theorem 3]
Recall that we assume the split–coordinate prior
\[
\boldsymbol \theta_j \,\sim\, \mathrm{Dir}\!\left({\eta_j}/{R},\dots,{\eta_j}/{R}\right),
\quad 
\eta_j \sim \pi(\eta)\propto (R+\eta)^{-(R+1)},
\]
independent of the other prior blocks, and we write $s:=s_{0,j}$ and $\alpha:=\alpha_j\in(0,1]$.
We prove that for any fixed $\varepsilon>0$,
\(
\Pi(\sum_{r\notin S_{0,j}}\theta_{jr}>\varepsilon\mid \bm Y)\to 0
\)
in $P_{\theta_0}$–probability.

\medskip\noindent
\textbf{Approximation budget and waste inflation.}
Let $T_j$ be the total number of internal splits across the $M=\mathcal O(1)$ trees in the ensemble for $\beta_j$,
and let $n_{jr}$ be the number of splits on modifier $r\in\{1,\dots,R\}$ so that $\sum_{r=1}^R n_{jr}=T_j$.
Define the \emph{waste fraction}
\[
\delta_j \ :=\ \frac{1}{T_j}\sum_{r\notin S_{0,j}} n_{jr}\in[0,1],
\qquad
T_j^{\mathrm{rel}}\ :=\ \sum_{r\in S_{0,j}} n_{jr} \ =\ (1-\delta_j)\,T_j.
\]

Only splits along $S_{0,j}$ refine the function along its intrinsic $s$–dimensional argument.
Let $m_u:=\sum_{\text{nodes}}\mathbbm{1}\{\text{node splits on }u\}$ be the number of splits placed (across the ensemble)
on coordinate $u\in S_{0,j}$, so that $\sum_{u\in S_{0,j}} m_u=T_j^{\mathrm{rel}}$.
The common refinement of all axis–aligned cuts along the $s$ active coordinates induces at most
\[
K_{\mathrm{eff}}\ \le\ \prod_{u\in S_{0,j}} (m_u+1)\ \le\ \Big(1+\tfrac{T_j^{\mathrm{rel}}}{s}\Big)^s
\ \le\ C\,(1+T_j^{\mathrm{rel}})^s
\]
distinct cells when projected onto the intrinsic space; the inequalities follow by AM–GM.
Hence, for any partition induced by an ensemble with waste fraction $\delta_j$,
\(
K_{\mathrm{eff}}\ \lesssim\ (1-\delta_j)^s T_j^s
\)
up to an absolute multiplicative constant.
For $\alpha$–H\"older functions on $[0,1]^s$, the best empirical $L_2(\mathbb P_N)$ step–function
approximation error with $K$ cells satisfies
\(
\inf_{\#\text{cells}\le K}\|f-\Pi f\|_N \gtrsim K^{-\alpha/s}
\)
(e.g., \citet{Davydov2010}).
Combining the two displays yields, for $T_j$ large,
\begin{equation}\label{eq:inflation-correct}
\inf_{\Pi:\,\text{waste}=\delta_j}\ \big\|\beta_{0,j}-\Pi(\beta_{0,j})\big\|_N
\ \gtrsim\ K_{\mathrm{eff}}^{-\alpha/s}
\ \gtrsim\ \big((1-\delta_j)^s T_j^s\big)^{-\alpha/s}
\ =\ (1-\delta_j)^{-\alpha}\,T_j^{-\alpha}.
\end{equation}

Let the adaptive target rate be
\(
r_{jN,\mathrm{ad}}:=(\log N)^{1/2}N^{-\alpha/(2\alpha+s)}.
\)
By \eqref{eq:inflation-correct}, any ensemble achieving error $\lesssim r_{jN,\mathrm{ad}}$
must satisfy
\begin{equation}\label{eq:T-lower-correct}
T_j^{\mathrm{rel}}\ =\ (1-\delta_j)T_j\ \gtrsim\ r_{jN,\mathrm{ad}}^{-1/\alpha}
\ =\ N^{\frac{1}{2\alpha+s}}(\log N)^{-1/(2\alpha)}.
\end{equation}
Fix a deterministic threshold
\[
\underline L\ :=\ \Big\lfloor c_L\,N^{\frac{1}{2\alpha+s}}(\log N)^{-1/(2\alpha)}\Big\rfloor
\qquad\text{with }c_L>0\text{ small, so that }\ \underline L\to\infty.
\]
By Theorem 1, for any fixed constant $C>0$,
\[
\Pi\big(\|\beta_j-\beta_{0,j}\|_N\le C\,r_{jN,\mathrm{ad}}\ \bigm|\ \bm{Y}\big)\ \to\ 1
\ \ \text{in $P_{\theta_0}$–probability.}
\]
On that high–probability contraction event, \eqref{eq:T-lower-correct} implies
\begin{equation}\label{eq:T-large-correct}
\Pi\big(T_j<\underline L\ \bigm| \ \bm{Y}\big)\ \to\ 0
\qquad\text{in $P_{\theta_0}$–probability.}
\end{equation}

Next, \eqref{eq:inflation-correct} quantifies \emph{waste inflation}.
For any $\delta\in(0,1/2]$, Bernoulli’s inequality gives
\(
(1-\delta)^{-\alpha}\ge 1+\alpha\delta.
\)
Hence, for some $c_\star>0$ (independent of $N$),
\[
\inf_{\Pi:\ \text{waste}=\delta}\ \|\beta_{0,j}-\Pi(\beta_{0,j})\|_N
\ \ge\ (1+c_\star \delta)\cdot \inf_{\Pi:\ \text{waste}=0}\ \|\beta_{0,j}-\Pi(\beta_{0,j})\|_N.
\]
Taking $\delta=\varepsilon/2$ shows that ensembles with $\delta_j\ge \varepsilon/2$ incur at least a fixed
multiplicative loss in approximation, and therefore they cannot attain the adaptive rate $r_{jN,\mathrm{ad}}$
for all large $N$. By posterior contraction, we conclude
\begin{equation}\label{eq:waste-small-correct}
\Pi\big(\delta_j>\tfrac{\varepsilon}{2}\ \bigm|\ \bm{Y}\big)\ \to\ 0
\qquad\text{in $P_{\theta_0}$–probability.}
\end{equation}

\medskip\noindent
\textbf{Posterior of $\boldsymbol\theta_j$ given split counts.}
Conditional on $\{n_{jr}\}$ and $\eta_j$, the Dirichlet posterior is
\[
\boldsymbol\theta_j\ \big|\ \{n_{jr}\},\eta_j\ \sim\
\mathrm{Dir}\Big(n_{j1}+\tfrac{\eta_j}{R},\dots,n_{jR}+\tfrac{\eta_j}{R}\Big).
\]
Let
\[
A\ :=\ \sum_{r\notin S_{0,j}}\Big(n_{jr}+\tfrac{\eta_j}{R}\Big),
\qquad
B\ :=\ \sum_{r\in S_{0,j}}\Big(n_{jr}+\tfrac{\eta_j}{R}\Big),
\qquad A+B=T_j+\eta_j,
\]
and define the irrelevant–mass random variable
\(
S_j^{\mathrm{(irr)}}:=\sum_{r\notin S_{0,j}}\theta_{jr}\mid \{n_{jr}\},\eta_j \sim \mathrm{Beta}(A,B).
\)
On the event $\{\delta_j\le \varepsilon/2\}$,
\[
A \ \le\ \frac{\eta_j(R-s)}{R}+\frac{\varepsilon}{2}\,T_j,
\qquad
B \ \ge\ \frac{\eta_j s}{R}+\Big(1-\frac{\varepsilon}{2}\Big)\,T_j.
\]
Hence, for $T_j\ge 1$,
\[
\mathbb E\big[S_j^{\mathrm{(irr)}}\mid \{n_{jr}\},\eta_j\big]
=\frac{A}{A+B}
\ \le\ \frac{\tfrac{\varepsilon}{2}T_j+\eta_j}{T_j}
\ =\ \frac{\varepsilon}{2}+\frac{\eta_j}{T_j},
\]
and, using $AB\le (A+B)^2/4$,
\[
\mathrm{Var}\big[S_j^{\mathrm{(irr)}}\mid \{n_{jr}\},\eta_j\big]
=\frac{AB}{(A+B)^2(A+B+1)}
\ \le\ \frac{1}{4(T_j+\eta_j+1)}
\ \le\ \frac{1}{4T_j}.
\]
Fix $\varepsilon>0$. On the event $\{T_j\ge \max(\underline L,\,4\eta_j/\varepsilon)\}$ we have
$\mathbb E[S_j^{\mathrm{(irr)}}\mid \cdot]\le 3\varepsilon/4$, and Chebyshev’s inequality yields
\begin{equation}\label{eq:beta-tail-correct}
\mathbb P \Big(S_j^{\mathrm{(irr)}}>\varepsilon\ \Bigm|\ \{n_{jr}\},\eta_j\Big)
\ \le\ \frac{\mathrm{Var}[S_j^{\mathrm{(irr)}}\mid\cdot]}{(\varepsilon-\mathbb E[S_j^{\mathrm{(irr)}}\mid\cdot])^2}
\ \le\ \frac{1}{4T_j}\cdot\frac{1}{(\varepsilon/4)^2}
\ \le\ \frac{C}{\underline L}.
\end{equation}

Consequently,
\begin{align*}
\Pi\Big(S_j^{\mathrm{(irr)}}>\varepsilon\ \Bigm|\ \bm{Y}\Big)
\ &\le\
\Pi\Big(\delta_j>\tfrac{\varepsilon}{2}\ \Bigm|\ \bm{Y}\Big)
+\Pi\Big(T_j<\underline L\ \Bigm|\ \bm{Y}\Big)
+\Pi\Big(T_j<\tfrac{4\eta_j}{\varepsilon}\ \Bigm|\ \bm{Y}\Big)
+\frac{C}{\underline L}.
\end{align*}
By \eqref{eq:waste-small-correct} and \eqref{eq:T-large-correct}, the first two probabilities vanish in $P_{\theta_0}$–probability,
and the last term vanishes because $\underline L\to\infty$.

It remains to show that $\Pi(T_j<4\eta_j/\varepsilon\mid \bm{Y})\to 0$ in $P_{\theta_0}$–probability.
To this end, condition on the counts $\{n_{jr}\}$ (equivalently, on $T_j=\sum_r n_{jr}$) and examine the \emph{posterior} tail of $\eta_j$.
Under the conditional independence of split variables across internal nodes given $(\boldsymbol\theta_j,\eta_j)$ and of topology/cutpoints, the Dirichlet–multinomial marginal for the counts satisfies
\[
p(\{n_{jr}\}\mid \eta_j)\ \propto\
\frac{\Gamma(\eta_j)}{\Gamma(\eta_j+T_j)}
\prod_{r=1}^R \frac{\Gamma\!\big(n_{jr}+\eta_j/R\big)}{\Gamma(\eta_j/R)}.
\]
Using Stirling’s bounds, for large $T_j$ there exists $C_1>0$ such that
\[
\frac{\Gamma(\eta_j)}{\Gamma(\eta_j+T_j)}\ \le\ C_1\,T_j^{-\eta_j},\qquad
\prod_{r=1}^R \frac{\Gamma\!\big(n_{jr}+\eta_j/R\big)}{\Gamma(\eta_j/R)}\ \le\ C_1\,(1+\eta_j)^{C_1 R}.
\]
Combining with the prior density $\pi(\eta_j)\propto (R+\eta_j)^{-(R+1)}$, the \emph{posterior} tail (conditional on $\{n_{jr}\}$) obeys
\[
\pi(\eta_j\mid \{n_{jr}\})
\ \lesssim\ T_j^{-\eta_j}\,(1+\eta_j)^{-2},
\]
for $\eta_j$ large. Therefore, for any $c_0>0$,
\[
\Pi\big(\eta_j>c_0\log T_j\ \bigm|\ \{n_{jr}\}\big)
\ \le\ \int_{c_0\log T_j}^{\infty} T_j^{-\eta}\,(1+\eta)^{-2}\,d\eta
\ \le\ \exp\!\big\{-c_0(\log T_j)^2\big\}.
\]
Since $\underline L\to\infty$ and $T_j\ge \underline L$ with probability $\to 1$ by \eqref{eq:T-large-correct}, we obtain
\[
\Pi\Big(\eta_j>\tfrac{\varepsilon}{4}\,\underline L\ \Bigm|\ \bm{Y}\Big)\ \to\ 0
\quad\text{in $P_{\theta_0}$–probability,}
\]
whence
\(
\Pi(T_j<4\eta_j/\varepsilon\mid \bm{Y})\to 0
\).
Putting the pieces together gives
\[
\Pi\Big(\sum_{r\notin S_{0,j}} \theta_{jr} > \varepsilon\ \Bigm|\ \bm{Y}\Big)
\ \longrightarrow\ 0
\qquad\text{in $P_{\theta_0}$–probability,}
\]
which proves the theorem.
\end{proof}

\subsection{Proof of local-scale consistency}
\label{sec:proof_local_scale}

\begin{proof}[Proof of Theorem 4]
Fix an inactive coefficient $j\notin S_0$, so $\beta_{0,j}\equiv 0$.  
Let $L:=L_j$ be the total number of terminal nodes across the $M$ trees of
the $j$th ensemble and let $S:=S_j=\sum_{\ell,m}(\mu^{(j)}_{\ell,m})^2$ be the
sum of squared leaf values.  Throughout we condition on all other ensembles, and on $(\tau,c^2)$.
By assumption, under the posterior there exist deterministic constants
$0<\underline\tau\le \bar\tau<\infty$ and $0<\underline c^2\le \bar c^2<\infty$ such that
$\Pi(\,\underline\tau\le \tau\le \bar\tau,\ \underline c^2\le c^2\le \bar c^2\mid \bm Y)\to 1$.
We work on this event, and all constants below may depend on $(\underline\tau,\bar\tau,\underline c^2,\bar c^2)$ but not on $N$.

Under the regularized horseshoe leaf prior, each leaf value
$\mu\equiv \mu_{\ell,m}^{(j)}$ is Gaussian with variance
\[
\text{Var}(\mu\mid \tau,\lambda,c^2)=\frac{s^2(\tau,\lambda,c^2)}{M},\quad
s^2(\tau,\lambda,c^2)=\frac{\tau^2\lambda^2c^2}{c^2+\tau^2\lambda^2}\in(0,c^2).
\]
Hence, up to a factor independent of $\lambda$,
\begin{equation}
\label{eq:Gaussian-factor-correct}
\prod_{\ell,m}\phi \Big(\mu_{\ell,m}^{(j)};0,\,\frac{s^2}{M}\Big)
~\propto~
\{s^2\}^{-L/2}\exp \Big\{-\frac{M\,S}{2s^2}\Big\}.
\end{equation}
Multiplying \eqref{eq:Gaussian-factor-correct} by the half--Cauchy prior
$\pi(\lambda)\propto (1+\lambda^2)^{-1}\mathbbm{1}\{\lambda>0\}$ gives the conditional posterior kernel in $\lambda$.

Set $v:=s^2(\tau,\lambda,c^2)\in(0,c^2)$ and compute the change of variables $\lambda\mapsto v$.
Solving
\(
v=\dfrac{\tau^2\lambda^2c^2}{c^2+\tau^2\lambda^2}
\)
for $\lambda$ gives
\[
\lambda^2=\frac{v\,c^2}{\tau^2(c^2-v)},\qquad 
\lambda=\sqrt{\frac{v}{\tau^2}}\sqrt{\frac{c^2}{c^2-v}}.
\]
Differentiating,
\[
\frac{d}{dv}\lambda^2
=\frac{c^2}{\tau^2}\cdot\frac{c^2}{(c^2-v)^2},
\qquad
\Big|\frac{d\lambda}{dv}\Big|
= \frac{1}{2\lambda}\frac{d\lambda^2}{dv}
= \frac{c^3}{2\tau}\,\frac{v^{-1/2}}{(c^2-v)^{3/2}}.
\]
The half--Cauchy prior can be written as
\[
\pi(\lambda)=\frac{2}{\pi(1+\lambda^2)}
=\frac{2}{\pi}\,\frac{\tau^2(c^2-v)}{\tau^2(c^2-v)+v c^2},
\]
so that
\[
\pi(\lambda)\,\Big|\frac{d\lambda}{dv}\Big|
=\frac{c^3\tau}{\pi}\cdot
\frac{v^{-1/2}}{(c^2-v)^{1/2}}\cdot
\frac{1}{\tau^2(c^2-v)+v c^2}.
\]
On $v\in(0,c^2/2]$, the factor $\big[(c^2-v)^{1/2}\big]^{-1}\big[\tau^2(c^2-v)+v c^2\big]^{-1}$ is bounded above and below by finite positive constants that depend only on $(\underline\tau,\bar\tau,\underline c^2,\bar c^2)$. Consequently, there exist $0<b_1\le b_2<\infty$ such that for all $v\in(0,c^2/2]$,
\begin{equation}
\label{eq:prior-jacobian-bracket}
b_1\,v^{-1/2}\ \le\ \pi(\lambda)\Big|\frac{d\lambda}{dv}\Big| \ \le\ b_2\,v^{-1/2}.
\end{equation}
Combining \eqref{eq:Gaussian-factor-correct} and \eqref{eq:prior-jacobian-bracket}, for some $0<C_1\le C_2<\infty$,
\begin{equation}
\label{eq:v-kernel-brackets}
C_1\, v^{-\frac{L}{2}-\frac12}\, \exp\!\Big\{-\frac{M S}{2v}\Big\}
\ \le\ 
\pi(v\mid \mu^{(j)}_{\cdot,\cdot})
\ \le\ 
C_2\, v^{-\frac{L}{2}-\frac12}\, \exp\!\Big\{-\frac{M S}{2v}\Big\},
\quad 0<v\le \frac{c^2}{2}.
\end{equation}

Let $U:=M S/(2v)$ (so $v=S M/(2U)$). Since $v\in(0,c^2)$, we have $U\in\big(MS/(2c^2),\,\infty\big)$. Using $dv=-(SM/2)U^{-2}\,dU$, the bounds \eqref{eq:v-kernel-brackets} imply that, for $U\ge MS/c^2$ (equivalently $v\le c^2/2$),
\begin{equation}
\label{eq:U-gamma-brackets}
\tilde C_1\, U^{k-1}e^{-U}
\ \le\ 
\pi(U\mid \mu^{(j)}_{\cdot,\cdot})
\ \le\ 
\tilde C_2\, U^{k-1}e^{-U},
\qquad
k:=\frac{L}{2}+\frac12,
\end{equation}
for some finite $0<\tilde C_1\le \tilde C_2<\infty$ (depending only on $(\underline\tau,\bar\tau,\underline c^2,\bar c^2)$). Hence the conditional posterior of $U$ is sandwiched by $\mathrm{Gamma}(k,1)$ densities on $[MS/c^2,\infty)$.

Fix any $K>1$. Since $v = MS/(2U)$, the event $\{v> K\,MS/L\}$ is equivalent to
$\{U< L/(2K)\}$. 

Using a Chernoff bound for $G_k\sim\mathrm{Gamma}(k,1)$ ($\E G_k=k$) yields constants $c_0=c_0(K)>0$ such that
\begin{equation}
\label{eq:gamma-lower-tail}
\mathbb P\Big(G_k \le \frac{L}{2K}\Big)
\ \le\ \exp\{-c_0\,L\},
\qquad \text{for all }k=\frac{L}{2}+\frac12.
\end{equation}
Because $\lambda$ increases with $v$, the tail event $\{\lambda > C' \sqrt{MS/L}\}$ is implied by $\{v > KMS /L\}.$ So, controlling $\Pi(v > KMS /L\mid \mu_{\cdot,\cdot}^{(j)})$ gives us a uniform tail bound for $\lambda,$ which then integrates to the posterior bound on $\lambda.$

We now control $\Pi\big(v> K\,MS/L \mid \mu^{(j)}_{\cdot,\cdot}\big)$ via two cases:
\begin{itemize}
\item[(1)] If ${L}/{(2K)}<{MS}/{c^2}$, then $U\ge MS/c^2>{L}/{(2K)}$ almost surely, hence
$\Pi\big(U< L/(2K)\mid \mu^{(j)}_{\cdot,\cdot}\big)=0$. 

This holds since the map $v \mapsto U =MS/v$ is strictly decreasing on $(0,\infty),$ so the smallest value $U$ can take occurs at the largest admissible $v = c^2$ implying $ U \ge MS/c^2$ almost surely.
\item[(2)] If ${L}/{(2K)}\ge {MS}/{c^2}$, then the Gamma bracketing \eqref{eq:U-gamma-brackets} applies on $[MS/c^2,\ L/(2K)]$ and, by \eqref{eq:gamma-lower-tail},
\[
\Pi\Big(U<\frac{L}{2K}\,\Bigm|\, \mu^{(j)}_{\cdot,\cdot}\Big)
\ \le\ C\,\exp\{-c_0(K)\,L\}
\]
for some constant $C<\infty$, by absorbing the sandwiching constants.
\end{itemize}
Therefore there exist positive constants $C,c$ (depending only on $K$ and the bounds on $(\tau,c^2)$) such that
\begin{equation}
\label{eq:v-tail-uniform}
\Pi\Big(v> K\,\frac{MS}{L}\,\Bigm|\, \mu^{(j)}_{\cdot,\cdot}\Big)  \le  C\,e^{-c\,L}.
\end{equation}

The map $v\mapsto \lambda(v)=\sqrt{\dfrac{v}{\tau^2}}\sqrt{\dfrac{c^2}{c^2-v}}$ is increasing.
Moreover, since $\sqrt{\dfrac{c^2}{c^2-v}}\ge 1$, we have the simple lower bound
\begin{equation}\label{eq:lambda-lower}
\lambda(v)\ \ge\ \frac{1}{\bar\tau}\,\sqrt{v}.
\end{equation}
Hence, for any $C'>0$, the event $\{\lambda > C'\sqrt{MS/L}\}$ implies
\[
v\ >\ (C'\bar\tau)^2\,\frac{MS}{L}.
\]
Taking $K:=(C'\bar\tau)^2$ in \eqref{eq:v-tail-uniform} gives
\begin{equation}
\label{eq:lam-tail-cond}
\Pi\Big(\lambda_j > C'\sqrt{\frac{MS}{L}}\,\Bigm|\,\mu^{(j)}_{\cdot,\cdot}\Big)\ \le\ C\,e^{-c\,L}.
\end{equation}
Integrating \eqref{eq:lam-tail-cond} with respect to the posterior of the leaf values yields
\[
\Pi\Big(\lambda_j > C'\sqrt{\frac{MS_j}{L_j}}\,\Bigm|\,\bm{Y}\Big)
\ \le\ \E \big[\,C\,e^{-c\,L_j}\,\big|\,\bm{Y}\big].
\]
By \textbf{(A7(i))}, $L_j\ge \underline L_N\to\infty$ deterministically for inactive ensembles, hence
\[
\Pi\Big(\lambda_j > C'\sqrt{\frac{MS_j}{L_j}}\,\Bigm|\,\bm{Y}\Big)
\ \le\ C\,e^{-c\,\underline L_N}
\ \longrightarrow\ 0,
\]
which proves the first claim:
\[
\Pi\Big(\lambda_j \le C\,\sqrt{\frac{MS_j}{L_j}}\,\Bigm|\,\bm{Y}\Big)\ \longrightarrow\ 1.
\]

\paragraph{Controlling $S_j/L_j$ by the empirical norm and contraction.}
Let $A_1,\dots,A_L$ be the terminal cells (across all $M$ trees) of the $j$th
ensemble, and let $n_\ell:=\#\{i:\ \bm Z_i\in A_\ell\}$ be the empirical
counts.  We assume the following \emph{leaf--mass comparability}, which is the precise version of \textbf{(A7(ii))} used here and is implied, e.g., by k--d regularity of the design together with balanced splits in inactive ensembles:
there exist constants $0<c_\star\le C_\star<\infty$ such that
\begin{equation}
\label{eq:leaf-mass}
\frac{c_\star}{L}\ \le\ \frac{n_\ell}{N}\ \le\ \frac{C_\star}{L}
\qquad\text{for all leaves }A_\ell.
\end{equation}
Since $\beta_j(\bm{z})=\sum_{\ell=1}^L \mu_\ell\,\mathbbm{1}\{\bm{z}\in A_\ell\}$,
\[
\|\beta_j\|_N^2
= \frac1N\sum_{i=1}^N \beta_j(\bm Z_i)^2
= \sum_{\ell=1}^L \frac{n_\ell}{N}\,\mu_\ell^2
\ \ge\ \frac{c_\star}{L}\sum_{\ell=1}^L \mu_\ell^2
\ =\ \frac{c_\star}{L}\,S,
\]
hence
\begin{equation}
\label{eq:S-over-L-vs-norm}
\frac{S_j}{L_j}\ \le\ \frac{1}{c_\star}\,\|\beta_j\|_N^2.
\end{equation}
By the functional RE condition \textbf{(A6)}, and the posterior contraction result (Theorem 1), there exists $C_0>0$ such that
\[
\Pi\Big(\ \sum_{u=0}^{p_N}\|\beta_u-\beta_{0,u}\|_N^2
\ \le\ C_0\, r_N^2\ \Bigm|\ \bm{Y}\Big)\ \longrightarrow 1.
\]
For $j\notin S_0$ we have $\beta_{0,j}\equiv 0$, hence $\|\beta_j\|_N^2\le C_0 r_N^2$ with posterior probability $\to1$; therefore, by \eqref{eq:S-over-L-vs-norm},
\[
\frac{S_j}{L_j}\ =\ \mathcal O_\Pi\big(r_N^2\big).
\]
This proves the second claim and completes the proof.
\end{proof}

\setcounter{figure}{0}
\setcounter{equation}{0}
\setcounter{table}{0}
\setcounter{theorem}{0}
\section{Additional algorithmic details}
\label{sec:extraalgorithm}
This section provides the full algorithmic details for our Metropolis-Hastings-within-Gibbs sampler. We outline conditional updates for the tree structures, modifier-sparsity parameters, global-local shrinkage scales, and noise variance.
\subsection*{Gibbs Sampler Derivation}
We use a Metropolis-Hastings-within-Gibbs sampler to simulate draws from the joint posterior. The algorithm proceeds by iteratively updating the blocks of parameters from their full conditional distributions. We first describe the update for the tree ensembles and then detail the updates for the remaining model parameters.

\subsection*{Conditional Regression Tree Updates}
We update the $m^{th}$ regression tree $(T_m^{(j)}, \mathcal{M}_m^{(j)})$ in the ensemble $\mathcal{E}_j$ conditional on all other parameters. First, we define the leave-one-tree-out partial residuals $r_i^{(j,m)}$ for each observation $i=1,\dots,N$:
\begin{equation}
\label{eqS:partialres}
r_i^{(j,m)} \;=\; y_i
- \sum_{j'\neq j}\beta_{j'}(\bz_i)
- \sum_{m'\neq m} x_{ij}\, g(\bz_i;\calT^{(j)}_{m'},\calM^{(j)}_{m'}).
\end{equation}
Conditionally on the other trees, these residuals follow the simple regression model for the leaves of the tree $(j,m)$:
\begin{equation}
\label{eqS:leaflike}
r_i^{(j,m)} \;=\; x_{ij}\,\mu^{(j)}_{m,\ell(\bz_i;\calT^{(j)}_m)} \;+\; \sigma\,\varepsilon_i,\qquad \varepsilon_i\sim\mathcal N(0,1).
\end{equation}
Following \citet{Chipman2010}, we update the tree in a two-step process: first, we sample the tree structure $\mathcal{T}_m^{(j)}$ using a Metropolis-Hastings step that integrates out the leaf parameters, and then we sample the leaf parameters $\mathcal{M}_m^{(j)}$ from their full conditional given the new structure.

\subsubsection*{Metropolis--Hastings for $\calT^{(j)}_m$}

Propose a local move (GROW/PRUNE) taking $\calT\to\calT^\star$.
Let $\calI_\ell(\calT)=\{i:\ell(\bz_i;\calT)=\ell\}$, $n_\ell=|\calI_\ell|$, and define
\[
\bx_{j,\ell}=(x_{ij})_{i\in\calI_\ell},\quad \br_\ell=(r_i^{(j,m)})_{i\in\calI_\ell},\quad
A_\ell=\sum_{i\in\calI_\ell} x_{ij}^2,\quad B_\ell=\sum_{i\in\calI_\ell} x_{ij} r_i^{(j,m)}.
\]
With the Gaussian prior $\mu_\ell\sim \mathcal N(0,s_j^2)$ and likelihood
$\br_\ell\mid \mu_\ell\sim \mathcal N(\mu_\ell\bx_{j,\ell},\sigma^2 I)$, we can integrate out $\mu_\ell$:
\begin{align}
\label{eqS:leafmarg}
\int &\phi(\br_\ell; \mu_\ell\bx_{j,\ell}, \sigma^2 I)\,\phi(\mu_\ell;0,s_j^2)\,d\mu_\ell \\
&= (2\pi\sigma^2)^{-n_\ell/2}\Big(1+\tfrac{s_j^2}{\sigma^2}A_\ell\Big)^{-1/2}
\exp\!\left\{\frac{B_\ell^2}{2(\sigma^2/s_j^2 + A_\ell)} - \frac{\|\br_\ell\|^2}{2\sigma^2}\right\}. \nonumber
\end{align}
The marginal likelihood for a tree is the product over the leaves.
\begin{equation}
\label{eqS:treemarg}
\mathcal L_{\text{marg}}(r^{(j,m)};\calT) \;=\; \prod_{\ell=1}^{\calL(\calT)} \text{\eqref{eqS:leafmarg}}.
\end{equation}
The MH acceptance probability is
\begin{equation}
\label{eqS:MHacc}
\alpha \;=\; 1\wedge \frac{p(\calT^\star)}{p(\calT)} \cdot \frac{q(\calT\to \calT^\star)}{q(\calT^\star\to \calT)}
\cdot \frac{\mathcal L_{\text{marg}}(r^{(j,m)};\calT^\star)}{\mathcal L_{\text{marg}}(r^{(j,m)};\calT)},
\end{equation}
where $p(\calT)$ is the tree prior and $q(\cdot)$ is the proposal kernel.

\subsubsection*{Gibbs for $\calM^{(j)}_m\mid \calT^{(j)}_m$}

Given $\calT=\calT^{(j)}_m$, leaves decouple and each $\mu_\ell$ has a Gaussian full conditional:
\begin{equation}
\label{eqS:leafpost}
\mu_\ell \,\big|\, \ldots \sim \mathcal N\!\big(m_\ell,V_\ell\big),\quad
V_\ell = \Big(\frac{A_\ell}{\sigma^2}+\frac{1}{s_j^2}\Big)^{-1},\quad
m_\ell = V_\ell\,\frac{B_\ell}{\sigma^2}.
\end{equation}

After iterating through all tree updates, we update the remaining model parameters.

\subsubsection*{Modifier–sparsity $(\boldsymbol\theta_j,\eta_j)$}
Let $N_{jr}$ be the number of internal nodes in ensemble $\mathcal E_j$ that split on modifier $Z_r$ and $N_j=\sum_{r=1}^R N_{jr}$.  
Conditionally on $\eta_j$ and the current trees, the splitting–probability vector has the conjugate update
\begin{equation}
\label{eqS:thetaGibbs}
\boldsymbol\theta_j \mid \eta_j, \{\mathcal T_m^{(j)}\}
~\sim~
\operatorname{Dirichlet}\!\Big(\tfrac{\eta_j}{R}+N_{j1},\ldots,\tfrac{\eta_j}{R}+N_{jR}\Big).
\end{equation}
Integrating out $\boldsymbol\theta_j$ yields the Dirichlet–Multinomial factor
\begin{equation}
\label{eqS:etaTarget}
p\!\left(\{N_{jr}\}\mid \eta_j\right)
~\propto~
\frac{\Gamma(\eta_j)}{\Gamma(\eta_j+N_j)}\prod_{r=1}^R
\frac{\Gamma\!\left(\eta_j/R+N_{jr}\right)}{\Gamma\!\left(\eta_j/R\right)}.
\end{equation}

Following the implementation in \citet{Deshpande2024}, we reparameterize $\eta_j$ via
\[
u_j \;=\; \frac{\eta_j}{\eta_j+R}\in(0,1), \qquad
\eta_j \;=\; \frac{R\,u_j}{1-u_j},
\]
and place a Beta prior $u_j\sim\mathrm{Beta}(a_u,b_u)$. The one–dimensional target for $u_j$ is then
\[
\log \pi(u_j \mid \{N_{jr}\},\ldots)
= \log p\!\left(\{N_{jr}\}\mid \eta_j(u_j)\right)
+\log\! \operatorname{Beta}(u_j\mid a_u,b_u)
+\text{const}.
\]
We update $u_j$ with a Metropolis–Hastings step on the logit scale (random–walk proposals), and map back to $\eta_j=R u_j/(1-u_j)$ for the subsequent Dirichlet update \eqref{eqS:thetaGibbs}. This preserves conjugacy for $\boldsymbol\theta_j$ while providing a simple 1D MH move for the concentration parameter.

\subsubsection*{Global--local scales $(\blambda,\tau,c^2)$}

Let $S_j=\sum_{m,\ell}(\mu^{(j)}_{m,\ell})^2$ and $L_j=\sum_m \calL(\calT^{(j)}_m)$.
Given $(\calT,\calM)$ and $\sigma^2$, the joint log-target is
\begin{equation}
\label{eqS:GLtarget}
\log \pi(\blambda,\tau,c^2\mid\ldots)
= -\tfrac12\sum_{j=0}^p\!\Big( S_j/s_j^2 + L_j\log s_j^2 \Big)
+ \log\pi(\blambda) + \log\pi(\tau) + \log\pi(c^2),
\end{equation}
with $s_j^2 = {\tau^2 \lambda_j ^2 c^2}/{(c^2+\tau^2\lambda_j^2)}.$ The full conditionals of $\lambda_j$ and $\tau$ are non-conjugate; we update each $\log\lambda_j$ (independently over $j$) and $\log\tau$ via univariate slice sampling targeting \eqref{eqS:GLtarget}.

For fixed $(\tau,\{\lambda_j\})$, the prior likelihood contribution of ensemble $j$ to the kernel in $c^2$ is
\[
\prod_{m,\ell} \big(2\pi \tau^2\lambda_j^2 c^2\big)^{-1/2}
\exp\!\left(-\frac{\big(\mu^{(j)}_{m,\ell}\big)^2}{2\,\tau^2\lambda_j^2\,c^2}\right)
\ \propto\ (c^2)^{-L_j/2}\;\exp\!\left(-\frac{S_j}{2\,\tau^2\lambda_j^2}\,\frac{1}{c^2}\right).
\]
Multiplying over $j=0,\dots,p$ yields
\[
\pi(\{\mu\}\mid c^2,\tau,\{\lambda_j\})\ \propto\ (c^2)^{-\frac12\sum_{j=0}^p L_j}\;
\exp\!\left(-\frac{1}{2c^2}\sum_{j=0}^p \frac{S_j}{\tau^2\lambda_j^2}\right).
\]
With the prior \(c^2\sim \mathcal{IG}(\nu_c/2,\ \nu_c s_c^2/2)\),
\[
\pi(c^2)\ \propto\ (c^2)^{-(\nu_c/2+1)}\exp\!\left(-\frac{\nu_c s_c^2}{2\,c^2}\right),
\]
the posterior density of $c^2$ is
\[
\pi(c^2\mid \ldots)\ \propto\ (c^2)^{-\big(\frac{\nu_c}{2}+1+\frac12\sum_j L_j\big)}\;
\exp\!\left\{-\,\frac{1}{2c^2}\Big(\nu_c s_c^2+\sum_{j=0}^p \frac{S_j}{\tau^2\lambda_j^2}\Big)\right\},
\]
i.e.,
\begin{equation}
\label{eqS:c2update}
c^2 \mid \ldots \ \sim\ \mathcal{IG}\!\left(\frac{\nu_c+\sum_{j=0}^p L_j}{2},\;
\frac{\nu_c s_c^2 + \sum_{j=0}^p S_j/(\tau^2\lambda_j^2)}{2}\right).
\end{equation}

\subsubsection*{Noise variance $\sigma^2$}
Let $r_i = y_i - \beta_0(\bz_i) - \sum_{j=1}^p x_{ij}\beta_j(\bz_i)$ be the residuals and set $\br=(r_1,\dots,r_n)^\top$.
Conditional on the trees and jumps,
\[
\pi(\by\mid\ldots,\sigma^2) \ \propto\ (\sigma^2)^{-n/2}\,
\exp\!\left(-\frac{1}{2\sigma^2}\sum_{i=1}^n r_i^2\right).
\]
With the prior \(\sigma^2\sim \mathcal{IG}(\nu/2,\ \nu\lambda/2)\),
\[
\pi(\sigma^2)\ \propto\ (\sigma^2)^{-(\nu/2+1)}\exp\!\left(-\frac{\nu\lambda}{2\,\sigma^2}\right),
\]
the posterior is
\begin{equation}
\label{eqS:sigma2}
\sigma^2 \mid \ldots \ \sim\ \mathcal{IG}\!\left(\frac{\nu+n}{2},\;
\frac{\nu\lambda + \sum_{i=1}^n r_i^2}{2}\right).
\end{equation}
% \paragraph{Hyperparameters choices.} For the global–local shrinkage, we use the global scale \(\tau\sim \mathcal C^+(0,\tau_0)\) with the \citet{pmlr-v54-piironen17a} heuristic \(\tau_0=\frac{p_0}{p-p_0}\frac{\mathrm{sd}(\bm{Y})}{\sqrt{N}}\) and \(p_0=\min({10,\lfloor p/4\rfloor})\), and the slab hyperparameters \(\nu_c=4\) and \(s_c=2\) (so \(c^2\sim\mathcal{IG}(2,8)\)). Following \citet{Deshpande2024}, the  observation noise uses \(\sigma^2\sim\mathcal{IG}(\nu/2,\nu\lambda/2)\) with \emph{default} \(\nu=3\) and \(\lambda\) chosen so that \(\mathbb{P}(\sigma<\mathrm{sd}(\bm{Y}))\approx 0.9\). The default ensemble size $(M)$ is fixed at $50$. 

\paragraph{Per-sweep complexity.} The per-sweep computational complexity of the sampler is dominated by the tree proposal and scoring steps, scaling as $\mathcal{O}(npM)$, where $n$ is the number of observations and $M$ is the total number of trees across all ensembles. The updates for the other model components—including the global-local shrinkage scales ($\mathcal{O}(p)$), the modifier-split probabilities ($\mathcal{O}(pR)$), and the leaf parameter draws ($\mathcal{O}(M\bar{L})$)—are substantially less costly, where $M\bar{L}=\sum_{j,m}L_{jm}$.

\setcounter{figure}{0}
\setcounter{equation}{0}
\setcounter{table}{0}
\setcounter{theorem}{0}
\section{Additional empirical details}
\label{sec:extraempirical}
We provide additional results to supplement our empirical studies. 
\subsection{Predictive performance}\label{sec:predictiveperf}
For completeness, Figure \ref{fig:pred_perf} reports predictive results for Experiment 1: test RMSE (left) and 95\% predictive-interval coverage (right), averaged over 25 replications. We also include two ``black-box” baselines that do \emph{not} estimate coefficient surfaces (thus excluded from the main text):

\begin{itemize}
    \item \texttt{BART}: Bayesian additive regression trees fit to $([\bm{X},\bm{Z}])$ with default priors.
    \item \texttt{ERT}: Extremely randomized trees \citep{Guerts2006}: very large forests with randomized split thresholds; fit to $([\bm{X},\bm{Z}])$.
\end{itemize}
\begin{figure}[t]
    \centering
    \includegraphics[width=0.6\linewidth]{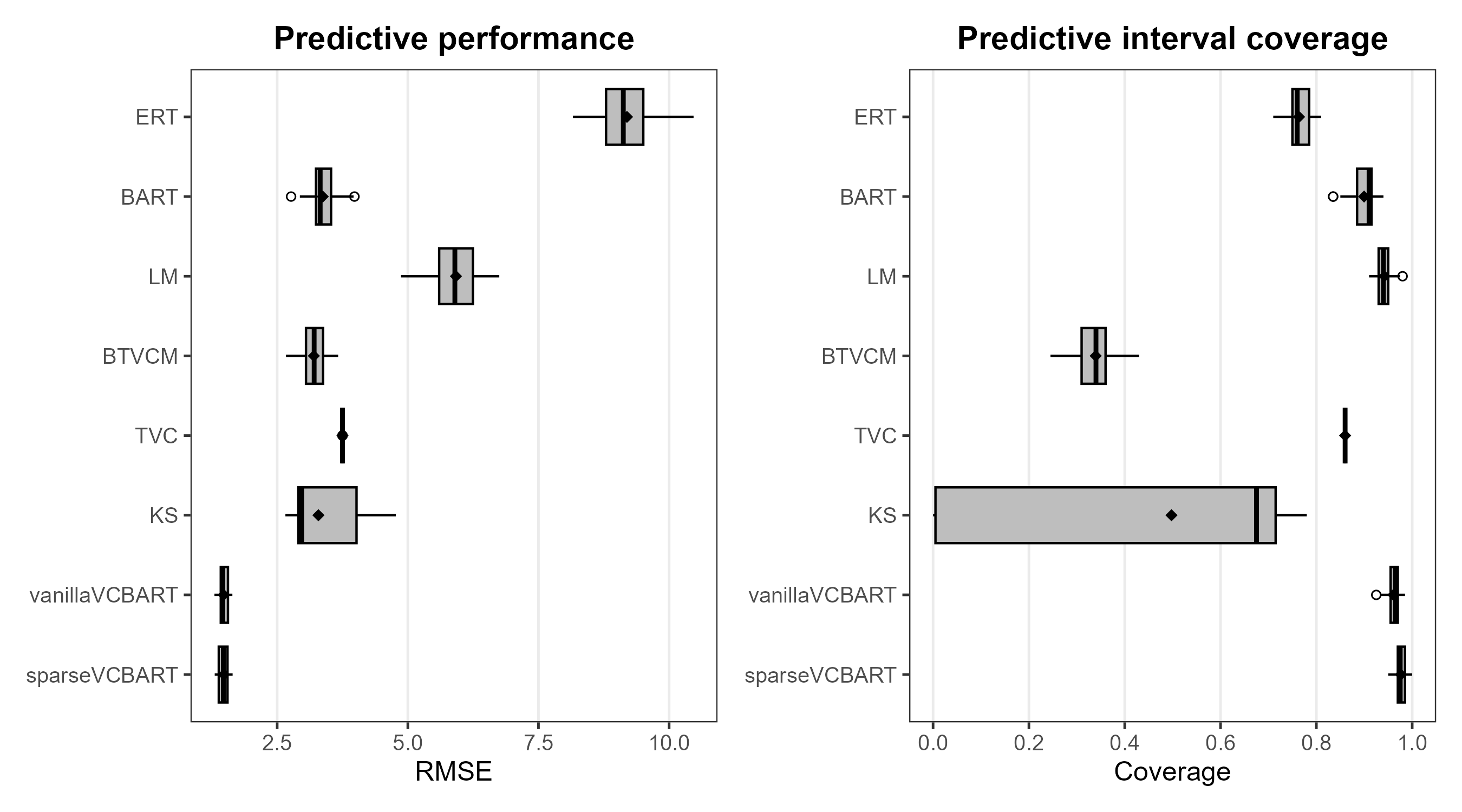}
    \caption{Predictive performance of all the methods averaged over $25$ replications}
    \label{fig:pred_perf}
\end{figure}

Both VCBART variants achieve the lowest predictive RMSE by a clear margin, reflecting the benefit of modeling $(Y)$ as $(\sum_j X_j \beta_j(\bm{Z}))$. \texttt{KS}, \texttt{TVC}, and \texttt{BTVCM} trail, and the homoscedastic \texttt{LM} is substantially worse—as expected when effects vary with $(\bm{Z})$. On coverage, the VCBART methods deliver near-nominal 95\% predictive coverage. \texttt{KS} exhibits unstable and widely varying coverage, and \texttt{BTVCM} under-covers. \texttt{BART} attains reasonable but slightly sub-nominal coverage with higher RMSE than both \texttt{vanillaVCBART} and \texttt{sparseVCBART}, while \texttt{ERT} shows the largest RMSE despite decent coverage—its intervals are wide but predictions are less accurate. Overall, these results corroborate the main-text claim: explicitly modeling varying coefficients improves both accuracy and uncertainty calibration in heterogeneous settings.

\subsection{Sensitivity on the ensemble size}\label{sec:sensitivity}
We replicate Experiment 1 with varying ensemble sizes $M_{\textrm{default}}=50$, $M_{\textrm{default}}/2=25$, and $2 \times M_{\textrm{default}}=100$, keeping the other hyperparameters fixed at their respective defaults to assess the sensitivity of both \texttt{sparseVCBART} and \texttt{vanillaVCBART} to this key hyperparameter. The results, summarized in Table \ref{tab:method-by-M}, show two clear trends. First, for both models, performance on all metrics (except predictive coverage, which remains high) degrades as $M$ decreases. As expected, a larger ensemble of trees provides a more flexible and stable function approximation, leading to better function recovery and predictive accuracy.

Second, while the models perform almost identically at $M=100$, \texttt{sparseVCBART} still shows a notable advantage in function recovery (lower MSE) at smaller ensemble sizes ($M=50$ and $M=25$). This suggests that the \emph{regularization} provided by the global-local priors helps compensate for having fewer trees in the ensemble. The shrinkage prior helps distinguish signal from noise, a task that \texttt{vanillaVCBART} accomplishes primarily by averaging over a larger number of weak learners. This makes \texttt{sparseVCBART} potentially more robust when using a smaller, more computationally efficient ensemble.

\begin{table}[H]
\centering
\caption{Average metrics across ensemble sizes $M \in \{100,50,25\}$. Best performers for each $M$ are bolded.}
\label{tab:method-by-M}
\begin{tabular}{
  l
  c
  S[table-format=1.3]
  S[table-format=1.3]
  S[table-format=1.3]
  S[table-format=1.3]
}
\toprule
\textbf{Method} &
\textbf{$M$} &
{\textbf{MSE $\big(\beta_j(\bm{z})\big)$}} &
{\textbf{Coverage $(\beta_j(\bm{z}))$}} &
{\textbf{Predictive RMSE}} &
{\textbf{Coverage PIs}} \\
\midrule
\texttt{sparseVCBART}  & 100 & \textbf{0.318} & 0.923 & \textbf{1.406} & \textbf{0.978} \\
\texttt{vanillaVCBART}  & 100  & 0.319 & \textbf{0.936} & 1.412 & 0.969 \\
\midrule
\texttt{sparseVCBART}  & 50  & \textbf{0.365} & 0.883 & 1.479 & \textbf{0.976} \\
\texttt{vanillaVCBART} & 50 & 0.411 & \textbf{0.888} & \textbf{1.477} & 0.960 \\
\midrule
\texttt{sparseVCBART} & 25  & \textbf{0.486} & 0.842 & \textbf{1.554} & \textbf{0.976} \\
\texttt{vanillaVCBART} & 25  & 0.506 & \textbf{0.854} & 1.557 & 0.969 \\
\bottomrule
\end{tabular}
\end{table}

\paragraph{Heuristic for choosing $M$.}
The optimal choice of $M$ is problem-dependent and represents a \emph{bias-variance tradeoff}. A larger $M$ allows the model to fit more complex functions with lower variance by averaging over many simple trees, which is the core idea of ensemble methods. However, this comes at a linear increase in computational cost. A mathematically grounded heuristic can be derived from the approximation theory. The theory suggests that a total leaf budget of $K_j \asymp N^{s_{0,j}/(2\alpha_j+s_{0,j})}$ is needed to approximate a function with intrinsic dimension $s_{0,j}$ and smoothness $\alpha_j$. Since the total expected number of leaves is roughly the number of trees ($M$) times the average number of leaves per tree ($\mathbb{E}[L_m]$), we have $M \times \mathbb{E}[L_m] \approx K_j$. This implies that for more complex functions (smaller $\alpha_j$, larger $s_{0,j}$), a larger total leaf budget is required, which can be achieved by increasing $M$.

In practice, since $\alpha_j$ and $s_{0,j}$ are unknown, we recommend starting with a default like $M=50$. This is often large enough for the ensemble effect to work well, as noted by \citet{Deshpande2024}. If we suspect the underlying functions are extremely complex or if performance is suboptimal, we might increase $M$ to $100$ or further, computational budget permitting. If computational cost is a major constraint, \texttt{sparseVCBART} likely allows for a smaller $M$ than \texttt{vanillaVCBART} would require for the same level of regularization, as its shrinkage prior provides an additional mechanism for controlling model complexity.

\subsection{Semi-synthetic data}
\label{sec:datadesc}
Exclusionary attitudes are a form of prejudice wherein individuals tend to favor their own group and oppose policies that promote well-being of individuals of the `outgroup' \citep{kalla_broockman23}. Recent work in political science \citep{broockman_kalla16, williamson_etal21} shows that interpersonal conversations can reduce exclusionary attitudes towards outgroups. \citet{kalla_broockman23} evaluate what types of interpersonal conversations can reduce prejudice. We focus on their door-to-door canvassing experiment conducted in 2019, conducted in Michigan, Pennsylvania and North Carolina. The intervention was designed to reduce prejudice against unauthorized immigrants.  

The intervention involves a canvasser knocking on doors and having conversations with 2838 voters who participated in the study. Voters were assigned to one of two treatment conditions or a placebo condition. The first treatment was a conversation that involved (what the authors call) a “full intervention” wherein the canvasser and voter would share stories about immigrants they knew, and then voters would be asked to discuss a time they needed support. The partial intervention was designed to assess whether only the first part of the intervention— here, the canvasser and voter would only share stories about immigrants they knew. A placebo arm, canvassers had short conversations about unrelated topics. 

Finally, outcomes were collected 1 week, 7 weeks and 4.5 months after the intervention. The primary outcomes are a policy index, that gauges support for including unauthorized immigrants in government programs, and a prejudice index. The study finds that both the treatment arms are similarly effective, and that the effects persist upto 4.5 months after the survey. Participants exposed to the treatment arms are between 3 and 10\% more likely to agree with policies that support unauthorized immigrants. 

Understanding heterogeneity in this context is important for a number of reasons. First, the increase in support for immigrant policies may be stronger when contact is made with an immigrant canvasser. Second, the treatment effects may vary by partisanship, due to the polarizing nature of the topic. Finally, personal characteristics of voters like age, gender, minority status in addition to initial levels of prejudice may be possible modifiers. Thus, the average effect of the treatment may be smaller or larger for certain subgroups. As a result, practitioners working on prejudice reduction at scale may not fully understand which subgroups respond best to an intervention, and may lead to resources being allocated inefficiently. 

The dataset to conduct the replication and additional analysis was downloaded from the Replication Archives of the paper, hosted on the Harvard Dataverse Network \footnote{https://dataverse.harvard.edu/citation?persistentId=doi:10.7910/DVN/MMYAOY}. On cleaning the data and excluding voters with missing data, we end up with 1991 unique voters, with about 600 assigned to each treatment arm and placebo. 

The experiment captures about a 100 covariates before the treatment is assigned. They can be categorized into groups as seen below: 

\begin{center}
\begin{tabular}{|l|l|l|}
\hline
\textbf{Category} & \textbf{Number of Variables} & \textbf{Information} \\
\hline
Ideology & 23 & Ordinal rating on 5 point scale \\
\hline
Policy & 23 & Ordinal rating on 5 point scale \\
\hline
Self Esteem & 9 & Ordinal rating on 5 point scale \\
\hline
Social Group Standing & 9 & Ranking between 1 \& 10 \\
\hline
Partisanship & 8 & Binary, Ordinal rating on five point scale \\
\hline
Race / Ethnicity & 7 & Binary, Categorical \\
\hline
Employment / Education & 7 & Binary, Categorical \\
\hline
Voting History & 4 & Binary \\
\hline
Age, Gender, Site & 5 & Binary, Continuous, Categorical \\
\hline
Views on Economy & 3 & Ordinal rating \\
\hline
Identifiers & 2 & Unique \\
\hline
\end{tabular}
\end{center}

\subsubsection{Local-scale separation}\label{sec:localscale}
Figure~\ref{fig:real-lambdas} plots the posterior medians of the predictor-specific local scales $\lambda_j$ from \texttt{sparseVCBART}. The two treatment predictors show substantially larger median $\lambda_j$ than all 18 noise predictors, producing a clear separation: irrelevant predictors are strongly shrunk, while the active treatments are left largely unshrunk. Figure \ref{fig:real-lambdas} provides a clear empirical validation of Theorem 4. The theorem formally guarantees that the local scales $\lambda_j$ corresponding to the 18 null predictors should concentrate near zero, which is precisely the behavior seen in the figure, distinguishing them from the large scales of the active treatment effects.

\begin{figure}[t]
  \centering
  \includegraphics[width=0.6\linewidth]{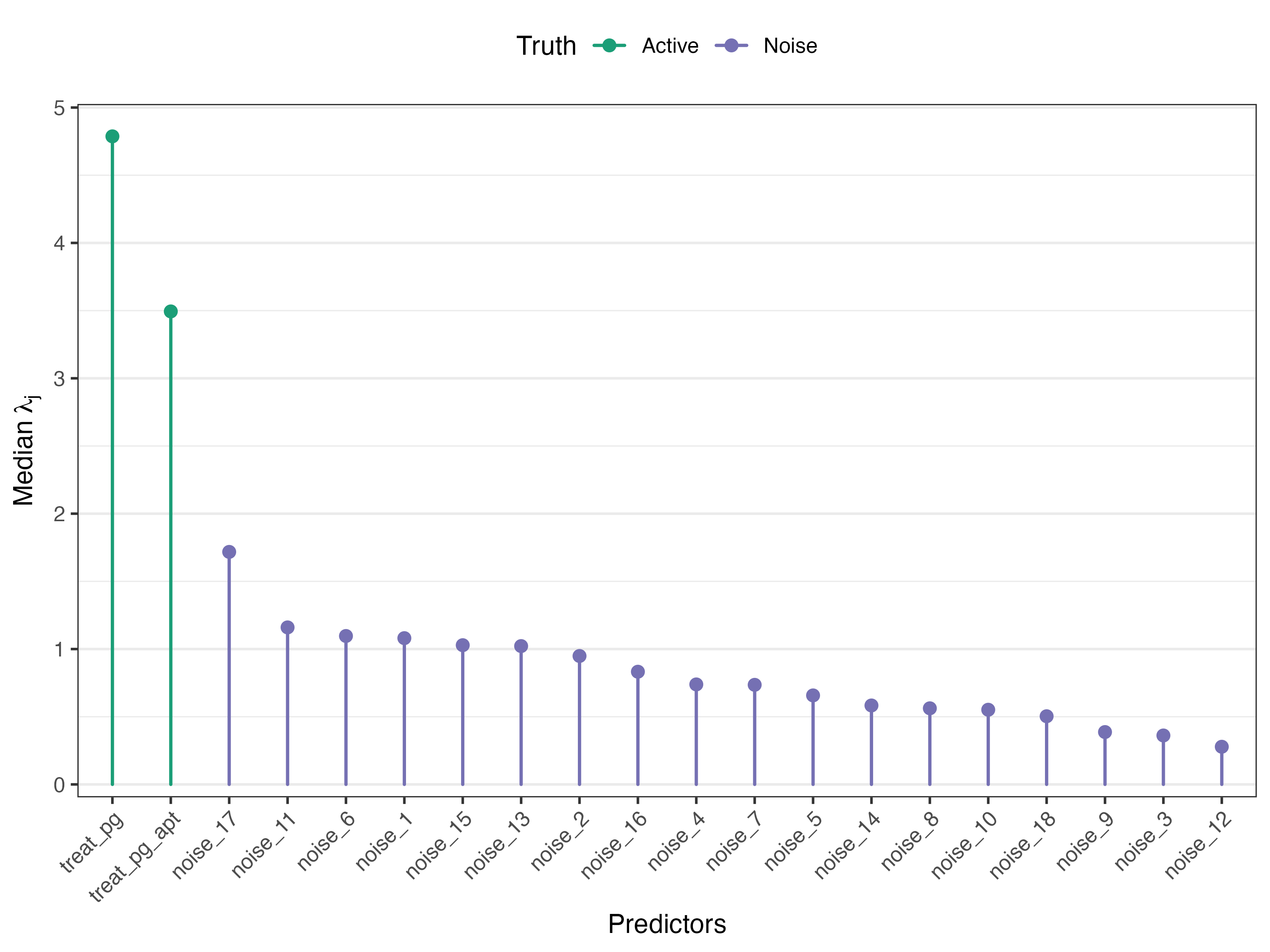}
  \caption{Posterior medians of local scales $\lambda_j$ from \texttt{sparseVCBART}. 
  Treatment predictors (\texttt{treat\_pg}, \texttt{treat\_pg\_apt}) show much larger medians than all 18 noise predictors, indicating strong separation between signal and noise.}
  \label{fig:real-lambdas}
\end{figure}

\subsubsection{Modifiers selected by \texttt{sparseVCBART}}\label{sec:selectedmod}
The model selects a small, interpretable set of modifiers for each treatment:
\begin{itemize}
  \item \textbf{Full intervention} (\texttt{treat\_pg\_apt}): 
  (i) support for health-care coverage to unauthorized immigrants under a public option (\texttt{t0\_immpolicy\_hc\_under\_uhc}); 
  (ii) support for access to food stamps (\texttt{t0\_immpolicy\_foodstamps}); 
  (iii) views on stricter gun laws (\texttt{t0\_gun\_stricter}); 
  (iv) baseline conservative ideology indicator (\texttt{t0\_ideo\_con}).
  \item \textbf{Partial intervention} (\texttt{treat\_pg}): 
  (i) assessment of the current economy (\texttt{t0\_economy\_today}); 
  (ii) support for K--12 access for undocumented children (\texttt{t0\_immpolicy\_k12schools}); 
  (iii) Social Dominance Orientation terciles (\texttt{t0\_sdo\_factor3}).
\end{itemize}

\subsubsection{Fit-the-fit for interpretability}\label{sec:fit-the-fit}
To explain \emph{how} effects vary with modifiers, we apply a ``fit-the-fit'' step \citep[e.g.,][]{fisher2024}: for each treatment, we take the posterior-mean surface $\hat\beta_j(\bz)$ from \texttt{sparseVCBART} and train a shallow CART \citep{BreimanCART} using only the modifiers selected for that treatment. The resulting trees compress a nonparametric surface into a handful of transparent splits. Figure~\ref{fig:cart-treatpg} shows the tree for \texttt{treat\_pg}: the root split is on support for K--12 access (\texttt{t0\_immpolicy\_k12schools}), with subsequent refinements involving the current-economy rating (\texttt{t0\_economy\_today}) and SDO. Subgroup comparisons between sister leaves indicate positive effects in both branches, with larger average effects among respondents more supportive of K--12 access and more optimistic about the economy, while higher SDO tends to dampen the treatment effect. These patterns provide policy-relevant, easily communicated rules for heterogeneity.

\begin{figure}[t]
  \centering
  \includegraphics[width=\linewidth]{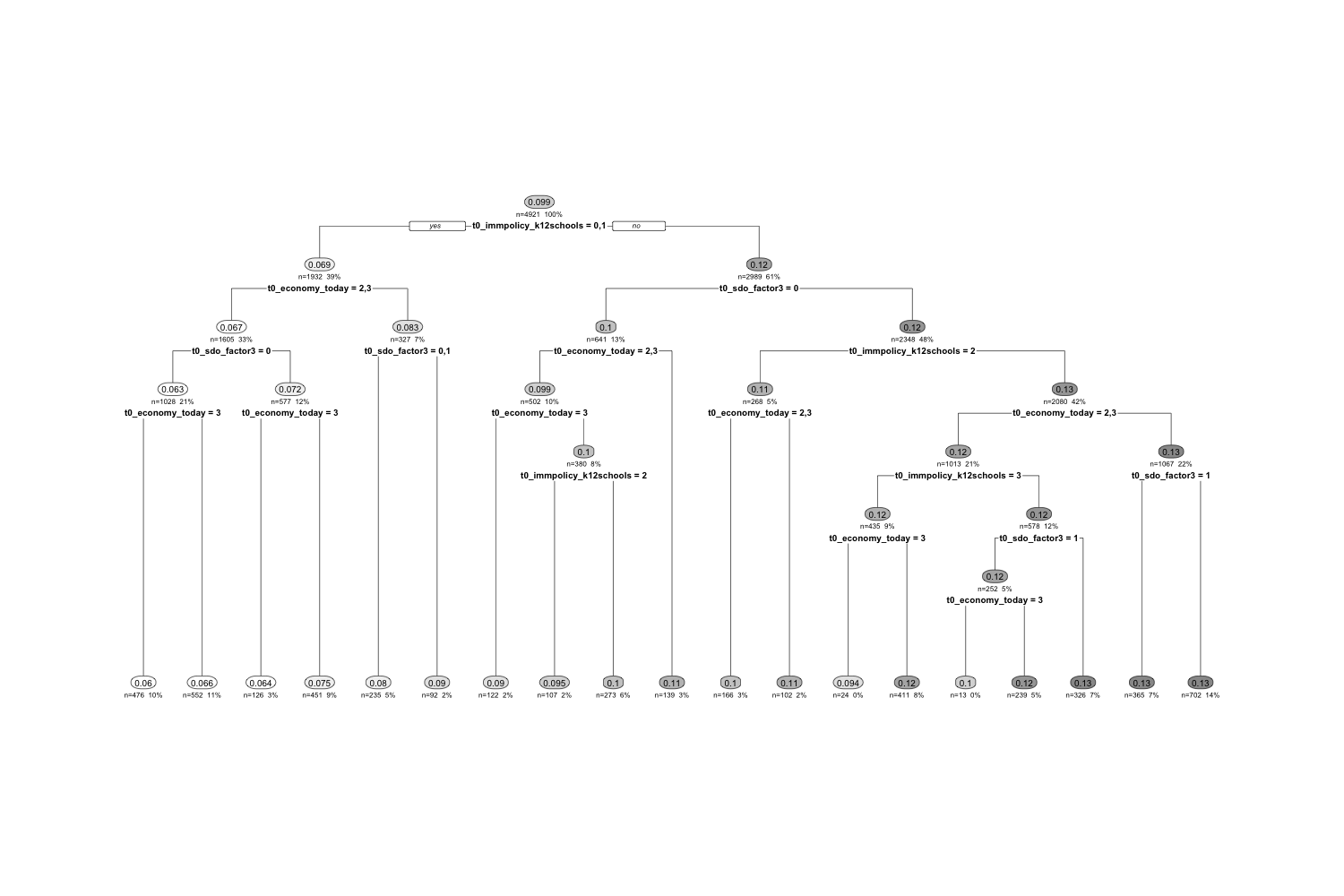}
  \caption{Fit-the-fit CART for the posterior-mean surface $\hat\beta_{\texttt{treat\_pg}}(\bz)$. 
  The root split is on support for K--12 access for undocumented children; downstream splits involve the current-economy rating and SDO terciles. 
  Leaves summarize subgroup-average effects, yielding transparent, policy-relevant heterogeneity.}
  \label{fig:cart-treatpg}
\end{figure}

\end{document}